\newcommand{\E}{\mathbb{E}}
\newcommand{\Var}{\mathrm{Var}}
\newtheorem{theorem}{Theorem}
\newtheorem{corollary}{Corollary}
\newtheorem{lemma}{Lemma}
\theoremstyle{definition}
\newtheorem{definition}{Definition}
\begin{document}

\title{Improved Bounds and Algorithms for \\ Sparsity-Constrained Group Testing}

\author{Nelvin Tan and Jonathan Scarlett \thanks{The authors are with the  Department of Computer Science, National University of Singapore (NUS).  J.~Scarlett is also with the Department of Mathematics, NUS.  e-mail: \url{nelvintan@u.nus.edu};  \url{scarlett@comp.nus.edu.sg}

This work will be presented in part at the 2020 IEEE International Symposium on Information Theory (ISIT) \cite{Nel20}.

This work was supported by an NUS Early Career Research Award.}}

%\author{Tan Thong Cai, Nelvin}
%\projyear{2019/2020 Semester 2}
%\projnumber{U240020}
%\advisor{Prof. Jonathan Scarlett}
%\deliverables{
%	\item Report: 1 Volume
%	}

\allowdisplaybreaks

\maketitle

\begin{abstract}
    In group testing, the goal is to identify a subset of defective items within a larger set of items based on tests whose outcomes indicate whether any defective item is present. This problem is relevant in areas such as medical testing, data science, communications, and many more. Motivated by physical considerations, we consider a sparsity-based constrained setting (Gandikota et al., 2019) in which the testing procedure is subject to one of the following two constraints: items are finitely divisible and thus may participate in at most $\gamma$ tests; or tests are size-constrained to pool no more than $\rho$ items per test.
    While information-theoretic limits and algorithms are known for the non-adaptive setting, relatively little is known in the adaptive setting. We address this gap by providing an information-theoretic converse that holds even in the adaptive setting, as well as a near-optimal noiseless adaptive algorithm for $\gamma$-divisible items. In broad scaling regimes, our upper and lower bounds asymptotically match up to a factor of $e$. We also present a simple asymptotically optimal adaptive algorithm for $\rho$-sized tests.
    In addition, in the non-adaptive setting with $\gamma$-divisible items, we use the Definite Defectives (DD) decoding algorithm and study bounds on the required number of tests for vanishing error probability under the random near-constant test-per-item design. We show that the number of tests required can be significantly less than the Combinatorial Orthogonal Matching Pursuit (COMP) decoding algorithm, and is asymptotically optimal in broad scaling regimes.
\end{abstract}

\section{Introduction} \label{ch:introduction}

In the group testing problem, the goal is to identify a subset of defective items of size $d$ within a larger set of items of size $n$ based on a number $T$ of tests.  We consider the noiseless setting, in which we are guaranteed that the test procedure is perfectly reliable: We get a negative test outcome if all items in the test are non-defective, and a positive outcome outcome if at least one item in the test is defective.  This problem is relevant in areas such as medical testing \cite{Dor43}, data science \cite{Gil08}, communication protocols \cite{Ant11}, and many more \cite{Ald19}. 
One of the defining features of the group testing problem is the distinction between the non-adaptive and adaptive settings. In the non-adaptive setting, all tests must be designed prior to observing any outcomes, whereas in the adaptive testing, each test can be designed based on previous test outcomes. 

While the majority of the group testing literature has allowed for arbitrary (i.e., unconstrained) test designs, these may be unrealistic in many practical scenarios.  To address this, a {\em sparsity-constrained} group testing setting was recently proposed in \cite{Ven19}, in which the tests are subject to one of two constraints: (a) items are \textit{finitely divisible} and thus may participate in at most $\gamma$ tests; or (b) tests are \textit{size-constrained} and thus contain no more than $\rho$ items per test. These constraints are motivated by physical considerations, where each item has a limitation on the number of samples (\textit{e.g.}, blood from a patient) it can be divided into, or the testing equipment has a limitation on the number of items (\textit{e.g.}, volume of blood a machine can hold). The focus in \cite{Ven19} was on non-adaptive testing, and the two main goals of the present paper are the following: (i) provide a detailed treatment of the adaptive setting; (ii) close some notable gaps between the upper and lower bounds derived in \cite{Ven19} in the non-adaptive setting with $\gamma$-divisible items (in contrast, the gaps in \cite{Ven19} were less significant for $\rho$-sized tests).

\subsection{Problem Setup} \label{sec:setup}

Let $n$ be the number of items, which we label as $\{1,2,\dots,n\}$. Let $\mathcal{D}\subset \{1,2,\dots,n\}$ be the set of defective items, and $d=|\mathcal{D}|$ be the number of defective items.  We assume that $\mathcal{D}$ is generated uniformly at random among all sets of size $d$ (also known as the combinatorial prior\cite{Ald19}).\footnote{Despite this assumption, our adaptive algorithm attains zero error probability (see Theorem \ref{thm:gamma_upperbound_adap}), thus ensuring success even for the worst case $\mathcal{D}$ of cardinality $d$.}
% We write $u_i=1$ to denote that item $i\in\mathcal{D}$ is defective, and $u_i=0$ to denote that $i\notin\mathcal{D}$ is non-defective. In other words, $u_i$ is the indicator function $u_i=\mathds{1}\{i\in\mathcal{D}\}$. We then write $\mathbf{u}=(u_i)\in\{0,1\}^n$ for the defectivity vector. 

We are interested in asymptotic scaling regimes in which $n$ is large and $d$ comparatively is small, and thus assume that $d = o(n)$ throughout.  We let $T=T(n)$ be the \textit{number of tests} performed and label the tests $\{1,2,\dots,T\}$. To keep track of the design of the test pools in the non-adaptive setting, we write $x_{ti}=1$ to denote that item $i\in\{1,2,\dots,n\}$ is in the pool for test $t\in\{1,2,\dots,T\}$, and $x_{ti}=0$ otherwise. This can be represented by the matrix $\mathsf{X}\in\{0,1\}^{T\times n}$, known as the \textit{testing matrix} or \textit{test design}.

Similar to most theoretical studies of group testing \cite{Ald19}, our study of the non-adaptive setting will focus on random designs. % In this case, we write $X_{ti}$ to denote the random $(t,i)$-th entry of the random testing matrix.  
The most commonly considered test design in the unconstrained setting is the Bernoulli design (i.e., $\mathsf{X}$ has i.i.d.~Bernoulli entries); however, this creates significant fluctuations in the number of tests per item, which is undesirable in the case of $\gamma$-divisible items.  Hence, we will pay particular attention to the {\em near-constant tests-per-item} design \cite{Joh16,Coj19a}, in which each item is included in some fixed number $L$ of tests, chosen uniformly at random with replacement. Since we are selecting with replacement, some items may be in fewer than $L$ tests, hence the terminology ``near-constant''. This is a mathematical convenience that makes the analysis more tractable.

%In this paper, we are interested in the following design:\\
%5\textbf{Bernoulli design:} In this design, each item is included in each test independently at random with some fixed probability $p$. That is, we have $\mathbb{P}(X_{ti}=1)=p$ and $\mathbb{P}(X_{ti}=0)=1-p$ i.i.d. over $i\in\{1,2,\dots,n\}$ and $t\in\{1,2,\dots,T\}$. Typically the parameter $p$ is chosen to scale as $p=\Theta(1/d)$.\\
% \textbf{Constant tests-per-item design:} In this design, each item is included in some fixed number $L$ of tests, with the $L$ tests for each item chosen uniformly at random, independent from the choices of all other items. Thus, we have independent columns of constant-weight in X. Typically the parameter $L$ is chosen to scale as $L=\Theta(T/d)$. In fact, it is often more mathematically convenient to analyse a similar design below.\\
%\textbf{Near-constant tests-per-item design:} In this design, 
% \textbf{Doubly regular design:} In this design, there is both a constant number $L$ of tests-per-item (column weight) and a constant number $m$ of items-per-test (row weight), with $\text{number of ones}=nL=mT$. Like before, we can also consider the ``near-constant'' versions for mathematical convenience. Again, $L=\Theta(T/d)$, so $m=\Theta(n/d)$, is a useful scaling.

% The above design is practical because the random matrix only contains $T\times n$ entries unlike standard random coding designs in channel coding that requires exponential storage and computation.

Let $y_t\in\{0,1\}$ be the \textit{outcome} of the test $t\in\{1,2,\dots,T\}$, where $y_t=1$ denotes a positive outcome and $y_t=0$ a negative outcome. Hence, we have $\mathbf{y}=(y_t)\in\{0,1\}^T$ for the vector of test outcomes. Using the OR (or disjunction) operator $\bigvee$, we have 
\begin{align}
    y_t=\bigvee\limits_{i\in\mathcal{D}}x_{ti}.
\end{align}
%or equivalently, using the definition of $u_i$ above, 
%\begin{align}
%    y_t=\bigvee\limits_{i}x_{ti}u_i,
%\end{align}
A \textit{decoding} (or detection) \textit{algorithm} is a (possibly randomized) function $\widehat{D}:\{0,1\}^{T\times n}\times\{0,1\}^T\rightarrow\mathcal{P}(\{1,2,\dots,n\})$, where the power-set $\mathcal{P}(\{1,2,\dots,n\})$ is the collection of the subsets of items. %Given the tests and their outcomes, the decoding algorithm outputs an estimate vector $\widehat{\mathbf{u}}\in\{0,1\}^n$, representing an estimate of the defectivity vector of the population.  
Denoting the final estimate by $\widehat{\mathcal{D}}$, the error probability is given by
\begin{align}
    P_e&=\mathbb{P}(\widehat{\mathcal{D}}\neq\mathcal{D}),
\end{align}
where the probability is taken over the randomness of the set of defective items, and also over the test design and/or decoding algorithm if they are randomized.

{\bf Testing Constraints.} As mentioned above, our focus in this paper is on the sparsity-constrained group testing problem introduced in \cite{Ven19}, in which the testing procedure is subjected to one of two constraints:
\begin{enumerate}
    \item Items are \textit{finitely divisible} and thus may participate in at most $\gamma$ tests.
    \item Tests are \textit{size-constrained} and thus contain no more than $\rho$ items per test.
\end{enumerate}
For instance, in the classical application of testing blood samples for a given disease \cite{Dor43}, the $\gamma$-divisible items constraint may arise when there are limitations on the volume of blood provided by each individual, and the $\rho$-sized tests constraint may arise when there are limitations on the number of blood samples that the machine can accept.

\subsection{Related Work} \label{sec:previous}

There exists extensive literature providing group testing bounds and algorithms for unconstrained group testing \cite{Mal78,Mal80,Du93,Ati12,Ald14a,Cha14,Joh16,Sca15b,Ald15,Coj19,Coj19a}; see \cite{Ald19} for a recent overview.  Here we focus our attention on those most relevant to the present paper.

In the absence of testing constraints, $T>(1-\epsilon)(d\log(\frac{n}{d}))$ tests are necessary to identify all defectives with error probability at most $\epsilon$ \cite{Cha14,Ald14a}.  Hence, the same is certainly true in the constrained setting.  The same goes for the {\em strong converse}, which improves the preceding bound to $T>(1-o(1))(d\log(\frac{n}{d}))$ for any fixed $\epsilon \in (0,1)$ \cite{Bal13,Joh15}.  A matching upper bound is known for all $d = o(n)$ in the unconstrained adaptive setting \cite{Hwa72}, whereas matching this lower bound non-adaptively is only possible non-adaptively in certain sparser regimes \cite{Sca15b,Coj19a}.

It is well-known that if each test comprises of $\Theta(\frac{n}{d})$ items, then $\Theta(d\log n)$ tests suffice for group-testing algorithms with vanishing error probability \cite{Cha14,Ald14a,Joh16}. Hence, the parameter regime of primary interest in the size-constrained setting is $\rho\in o(\frac{n}{d})$.  By a similar argument, 
%Similarly, if each item can be tested $\Theta(\log(\frac{n}{d}))$ times, then $\Theta(d\log n)$ tests suffice for group-testing algorithms with vanishing error probability \cite{Cha14,Ald16}. Hence, 
the parameter regime of primary interest for $\gamma$-divisible items is $\gamma\in o(\log(\frac{n}{d}))$.  Combined with the condition $T\in\Omega(d\log(\frac{n}{d}))$, the latter scaling regime implies that
\begin{align}
    \frac{T}{\gamma d} \to \infty \label{eq:T_gamma_d_ineq}
\end{align}
as $n \to \infty$, which will be useful in our proofs.

In the non-adaptive setting, Gandikota \textit{et al.} \cite{Ven19} proved the following results for $\gamma$-divisible items.

\begin{theorem}\textup{{\cite{Ven19}}} \label{thm:gamma_upperbound}
    For any sufficiently large $n$, sufficiently small $\epsilon>0$, $\gamma\in o(\log n)$, and $d\in\Theta(n^\theta)$ for some positive constant $\theta\in[0,1)$, there exists a randomized design testing each item at most $\gamma$ times that uses at most $\big\lceil e\gamma d(\frac{n}{\epsilon})^{1/\gamma}\big\rceil$ tests and ensures a reconstruction error of  at most $\epsilon$.
\end{theorem}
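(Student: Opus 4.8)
The plan is to prove this via a random constant-column-weight test design decoded with the Combinatorial Orthogonal Matching Pursuit (COMP) rule, analyzed by a first-moment (union bound) argument. Set $T=\big\lceil e\gamma d(n/\epsilon)^{1/\gamma}\big\rceil$, and let the columns $\mathbf{x}_1,\dots,\mathbf{x}_n$ of $\mathsf{X}$ be i.i.d., each drawn uniformly from the $\binom{T}{\gamma}$ binary vectors of Hamming weight exactly $\gamma$; equivalently, each item is placed in a uniformly random subset of $\gamma$ distinct tests, independently across items. This design obeys the $\gamma$-divisibility constraint (note $\gamma\le T$ in the stated regime). The decoder is COMP: output $\widehat{\mathcal{D}}$ equal to the set of all items that do not appear in any negative test.

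First I would note that COMP has no false negatives: a defective item belongs only to positive tests, so it is never removed, giving $\mathcal{D}\subseteq\widehat{\mathcal{D}}$ deterministically. Hence an error can occur only if some non-defective item $i$ is \emph{masked}, i.e., all $\gamma$ tests containing $i$ are positive. Writing $\mathcal{P}\subseteq\{1,\dots,T\}$ for the random set of positive tests, each of the $d$ defectives contributes exactly $\gamma$ tests, so $|\mathcal{P}|\le\gamma d$. Conditioning on $\mathcal{D}$ and on the columns $\{\mathbf{x}_j\}_{j\in\mathcal{D}}$ (so that $\mathcal{P}$ is fixed while $\mathbf{x}_i$ remains uniform over weight-$\gamma$ vectors and independent of $\mathcal{P}$ for any $i\notin\mathcal{D}$, by the i.i.d.\ construction and the combinatorial prior), the probability that $i$ is masked equals $\binom{|\mathcal{P}|}{\gamma}/\binom{T}{\gamma}\le\binom{\gamma d}{\gamma}/\binom{T}{\gamma}$. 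A union bound over the at most $n$ non-defective items then gives $P_e\le n\binom{\gamma d}{\gamma}/\binom{T}{\gamma}$.

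It remains to apply the combinatorial estimate $\binom{\gamma d}{\gamma}/\binom{T}{\gamma}\le(e\gamma d/T)^\gamma$, which follows from $\binom{\gamma d}{\gamma}\le(e\gamma d/\gamma)^\gamma=(ed)^\gamma$ together with $\binom{T}{\gamma}\ge(T/\gamma)^\gamma$; this is exactly where the factor $e$ in the theorem enters. Substituting $T\ge e\gamma d(n/\epsilon)^{1/\gamma}$ yields $P_e\le n\,(e\gamma d/T)^\gamma\le n\cdot(\epsilon/n)=\epsilon$, and since the design is itself randomized this is the desired conclusion.

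I do not expect a genuine obstacle, as this is a clean first-moment argument; the steps that require care are (i) the conditional-independence claim used to evaluate the masking probability, which is a routine consequence of the i.i.d.\ columns and the combinatorial prior, (ii) the inequality $\binom{\gamma d}{\gamma}\le(ed)^\gamma$ responsible for the constant $e$, and (iii) verifying that the stated regime ($\gamma\in o(\log n)$, $d\in\Theta(n^\theta)$ with $\theta\in[0,1)$, $n$ large, $\epsilon$ small) guarantees the minor technical conditions used along the way, such as $\gamma\le T$ and the harmlessness of the ceiling. One could also observe that replacing the fixed-support design by a ``near-constant tests-per-item'' design (each item's $\gamma$ tests chosen with replacement) replaces $\binom{|\mathcal{P}|}{\gamma}/\binom{T}{\gamma}$ with $(|\mathcal{P}|/T)^\gamma$ and removes the factor $e$, hinting that the constant in this theorem is not tight --- consistent with the improvements pursued later in the paper.
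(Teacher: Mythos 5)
Your proposal is correct: the constant-column-weight design with COMP decoding, the observation that errors can only come from masked non-defectives, the conditional masking probability $\binom{|\mathcal{P}|}{\gamma}/\binom{T}{\gamma}\le\binom{\gamma d}{\gamma}/\binom{T}{\gamma}\le(e\gamma d/T)^{\gamma}$, and the union bound over non-defectives do yield $P_e\le n(\epsilon/n)=\epsilon$ for $T=\lceil e\gamma d(n/\epsilon)^{1/\gamma}\rceil$. Note, however, that this paper does not prove Theorem~\ref{thm:gamma_upperbound} at all --- it is quoted from \cite{Ven19} as background --- so there is no in-paper proof to compare against; your argument is essentially the standard one from that reference (sparse random placement plus COMP and a first-moment bound, with the factor $e$ arising exactly from the binomial estimates you identify). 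Your closing remark is also consistent with the paper's agenda: the looseness of the $(n/\epsilon)^{1/\gamma}$ term and of the constant is precisely what Sections~\ref{ch:sparse_adap_group_testing} and~\ref{ch:sparse_nonadap_group_testing} improve upon via the adaptive splitting algorithm and the DD analysis under the near-constant tests-per-item design.
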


\begin{theorem}\textup{{\cite{Ven19}}} \label{thm:gamma_lowerbound}
    For any sufficiently large $n$, sufficiently small $\epsilon>0$, $\gamma\in o(\log n)$, and $d\in\Theta(n^\theta)$ for some positive constant $\theta\in[0,1)$, any non-adaptive group testing algorithm that tests each item at most $\gamma$ times and has a probability of error of at most $\epsilon$ requires at least $\gamma d(\frac{n}{d})^{(1-5\epsilon)/\gamma}$ tests.
\end{theorem}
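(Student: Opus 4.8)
The plan is to prove the contrapositive: whenever $T<\gamma d(n/d)^{(1-5\epsilon)/\gamma}$, one has $P_e>\epsilon$ (in fact $P_e\to 1$). The argument is a purely information-theoretic counting bound, and the divisibility constraint enters in exactly one place --- every column of $\mathsf{X}$ has at most $\gamma$ ones, so every achievable outcome vector has Hamming weight at most $\gamma d$.

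\emph{Step 1 (counting lower bound on $P_e$).} For a fixed realization of $\mathsf{X}$, call $\mathbf{y}\in\{0,1\}^T$ \emph{feasible} if $\mathbf{y}=\mathbf{y}(\mathcal{S})$ for some size-$d$ set $\mathcal{S}$, where $\mathbf{y}(\mathcal{S})_t=\bigvee_{i\in\mathcal{S}}x_{ti}$, and let $M$ be the number of feasible vectors. Since $\widehat{\mathcal{D}}$ depends on $(\mathsf{X},\mathbf{y})$ only while $\mathcal{D}$ is uniform over the $\binom{n}{d}$ candidates, I would group the candidates by their common outcome vector: for each feasible $\mathbf{y}$ the events $\{\widehat{\mathcal{D}}(\mathsf{X},\mathbf{y})=\mathcal{S}\}$ are disjoint, so $\sum_{\mathcal{S}:\,\mathbf{y}(\mathcal{S})=\mathbf{y}}\mathbb{P}(\widehat{\mathcal{D}}(\mathsf{X},\mathbf{y})=\mathcal{S})\le 1$, and summing over feasible $\mathbf{y}$ gives $\mathbb{P}(\widehat{\mathcal{D}}=\mathcal{D}\mid\mathsf{X})\le M/\binom{n}{d}$, i.e.\ $P_e\ge 1-M/\binom{n}{d}$. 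Because the bound on $M$ obtained in Step 2 is uniform over all admissible designs, this inequality survives averaging over a randomized design and/or decoder.

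\emph{Step 2 (bounding $M$).} An OR of $d$ columns, each of Hamming weight $\le\gamma$, has weight $\le\gamma d$; hence every feasible $\mathbf{y}$ lies in the radius-$\gamma d$ Hamming ball and $M\le\sum_{k=0}^{\gamma d}\binom{T}{k}$. If $\gamma d\ge T$ I would argue directly that $M\le 2^T$, and since $\gamma=o(\log n)$ forces $T<\gamma d=o(d\log(n/d))$ this already gives $2^T=o\big(\binom{n}{d}\big)$ and $P_e\to1$; otherwise use the standard tail estimate $\sum_{k=0}^{\gamma d}\binom{T}{k}\le(eT/(\gamma d))^{\gamma d}$.

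\emph{Step 3 (conclusion).} Using $\binom{n}{d}\ge(n/d)^d$, Steps 1--2 give
\begin{align}
P_e\;\ge\;1-\left(\frac{\big(eT/(\gamma d)\big)^{\gamma}}{n/d}\right)^{\!d}.
\end{align}
When $T<\gamma d(n/d)^{(1-5\epsilon)/\gamma}$, the base of this power is at most $e^{\gamma}(n/d)^{-5\epsilon}$, which tends to $0$ because $\ln(n/d)=\Theta(\log n)$ (here $\theta<1$ matters) while $\gamma=o(\log n)$; in particular it is below $1/2$ for all large $n$, so $P_e\ge1-2^{-d}>\epsilon$. The constant $5$ is not optimized --- the chain of inequalities carries considerable slack, so any fixed shrinkage of the exponent would suffice --- hence I expect no real difficulty there. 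The one genuinely substantive ingredient is the weight bound $|\mathbf{y}|\le\gamma d$ of Step 2; the only delicate (but routine) bookkeeping is verifying that the Step 1 counting bound is valid for randomized designs and decoders and handling the corner case $\gamma d\ge T$.
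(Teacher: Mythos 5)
Your proposal is correct: the weight bound $|\mathbf{y}|\le\gamma d$, the counting inequality $\mathbb{P}(\widehat{\mathcal{D}}=\mathcal{D}\mid\mathsf{X})\le M/\binom{n}{d}$, the tail estimate $\sum_{k\le\gamma d}\binom{T}{k}\le(eT/(\gamma d))^{\gamma d}$, and the final asymptotics ($\gamma\in o(\log n)$ against $\log(n/d)=\Theta(\log n)$, using $\theta<1$) all hold, and the randomized-design/decoder averaging and the corner case $\gamma d\ge T$ are handled properly. It is worth noting, however, that this paper does not actually reprove this theorem; it quotes it from Gandikota et al., whose original argument is Fano-based and intrinsically non-adaptive. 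What you have written is essentially the paper's \emph{own} counting machinery (Theorem~\ref{thm:counting_bound} and the proof of Theorem~\ref{thm:gamma_lower_bound_adap}): the same bound $\mathbb{P}(\mathrm{suc})\le\sum_{i=0}^{\gamma d}\binom{T}{i}/\binom{n}{d}$, with the only differences being that the paper bounds the binomial sum via the entropy estimate $e^{TH_2(\gamma d/T)}$ and then solves for $T$, extracting the sharper threshold $e^{-(1+o(1))}\gamma d(n/d)^{1/\gamma}$ and covering adaptive algorithms (by arguing that each defective contributes at most $\gamma$ positive tests even adaptively), whereas you bound the sum by $(eT/(\gamma d))^{\gamma d}$ and stop at the weaker $(1-5\epsilon)/\gamma$ exponent of the quoted statement. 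Your route buys a strong-converse-type conclusion ($P_e\to1$ below the stated threshold, so the $5\epsilon$ slack is indeed immaterial), and with the paper's adaptive justification of the weight bound it would extend verbatim beyond the non-adaptive setting; the paper's version buys the asymptotically tighter constant in the exponent, which is what makes its converse nearly match the adaptive algorithm.
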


For $\rho$-sized tests, the following achievability and converse results were also proved in \cite{Ven19}.
\begin{theorem}\textup{{\cite{Ven19}}} \label{thm:rho_upperbound}
    For any sufficiently large $n$, sufficiently small $\zeta>0$, $\rho\in\Theta\big((\frac{n}{d})^{\beta}\big)$ (for some constant $\beta\in[0,1)$), and $d\in\Theta(n^\theta)$ for some positive constant $\theta\in[0,1)$, there exists a randomized non-adaptive group testing design that includes at most $\rho$ items per test, using at most $\big\lceil\frac{1+\zeta}{(1-\alpha)(1-\beta)}\big\rceil\big\lceil\frac{n}{\rho}\big\rceil$ tests and ensuring a reconstruction error of at most $\epsilon=n^{-\zeta}$.
\end{theorem}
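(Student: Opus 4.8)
The plan is to combine a simple partition-based test design with COMP (Combinatorial Orthogonal Matching Pursuit) decoding. I would set $k := \big\lceil \frac{1+\zeta}{(1-\theta)(1-\beta)} \big\rceil$, the leading factor in the claimed bound, where the exponent ``$\alpha$'' in the statement is the sparsity exponent $\theta$ with $d=\Theta(n^\theta)$. The design uses $k$ independent \emph{rounds}: in each round, draw a uniformly random permutation of the $n$ items and cut it into $\lceil n/\rho\rceil$ consecutive blocks of size at most $\rho$, letting each block form one test. This yields exactly $k\lceil n/\rho\rceil$ tests, each of size at most $\rho$, so both the size constraint and the claimed test count hold; note that each item lies in exactly $k$ tests (there is no per-item constraint to worry about here).

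For decoding I would use COMP: declare item $i$ defective if and only if $i$ appears in no negative test. Since any test containing a defective item is positive, a defective item appears in no negative test and is always correctly identified; hence there are no false negatives, and it suffices to bound the probability of a false positive. Fix a non-defective item $i$ and a round $j$, and condition on the (arbitrary) defective set $\mathcal{D}$ of size $d$. Item $i$ is \emph{masked} in round $j$ exactly when the block containing $i$ also contains at least one defective; by exchangeability, the blockmates of $i$ form a uniformly random subset of $\{1,\dots,n\}\setminus\{i\}$ of size at most $\rho-1$, so a short hypergeometric estimate bounds the masking probability by $\frac{\rho d}{n-\rho} = (1+o(1))\frac{\rho d}{n}$, using $\rho = o(n)$ (a consequence of $\rho = \Theta((n/d)^\beta)$ with $\beta<1$). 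Since the $k$ permutations are drawn independently, the probability that $i$ is masked in all $k$ rounds is at most the $k$-th power of this, and a union bound over the at most $n$ non-defective items gives $P_e \le (1+o(1))\, n\big(\tfrac{\rho d}{n}\big)^k$, uniformly in $\mathcal{D}$.

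It then remains to substitute the assumed scalings: from $d = \Theta(n^\theta)$ and $\rho=\Theta((n/d)^\beta)$ one gets $\frac{\rho d}{n} = \Theta\big((d/n)^{1-\beta}\big) = \Theta\big(n^{-(1-\theta)(1-\beta)}\big)$, hence $P_e = \Theta\big(n^{\,1-k(1-\theta)(1-\beta)}\big)$. The choice of $k$ makes $k(1-\theta)(1-\beta)\ge 1+\zeta$, so $P_e = O(n^{-\zeta})$; whenever the ceiling defining $k$ is not exact the exponent is strictly below $-\zeta$, which absorbs the hidden constant and gives $P_e \le n^{-\zeta}$ for all large $n$ (the borderline case where $\tfrac{1+\zeta}{(1-\theta)(1-\beta)}$ is an integer can be handled by shrinking $\zeta$ slightly).

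The argument itself is short, so the care is in the details: (i) getting the masking probability with the correct leading constant, in particular observing that the within-round dependencies among items sharing a block are irrelevant — only the single-item marginal masking event and independence \emph{across} rounds are used; and (ii) ensuring that the $O(1)$ and $(1+o(1))$ factors accumulated over the $k$ rounds and the union bound do not spoil the clean $n^{-\zeta}$ target, which is exactly why the statement carries the slack $\zeta$ in the numerator and a ceiling. I expect (ii), i.e.\ controlling these constants across the admissible range of $\beta,\theta,\zeta$, to be the main thing to get right.
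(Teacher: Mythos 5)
There is nothing in this paper to compare your proof against: Theorem~\ref{thm:rho_upperbound} is quoted from \cite{Ven19} as background and the paper gives no proof of it (its own $\rho$-sized-test contribution is the adaptive Algorithm~\ref{alg:adaptive_algo_rho}, whose first stage — a partition into $\lceil n/\rho\rceil$ groups of size $\rho$ — your design parallels). Judged on its own, your argument is correct and is essentially the standard construction behind the cited result: $k=\big\lceil\frac{1+\zeta}{(1-\theta)(1-\beta)}\big\rceil$ independent random partitions into $\lceil n/\rho\rceil$ tests of size at most $\rho$, COMP decoding with no false negatives, a per-round masking probability for a fixed non-defective of at most $\frac{(\rho-1)d}{n-1}\le\frac{\rho d}{n}$ (this exact form lets you drop the $(1+o(1))$ factor you carry from $\frac{\rho d}{n-\rho}$), independence across rounds, and a union bound over the at most $n$ non-defectives, giving $P_e\le n\big(\frac{\rho d}{n}\big)^k$; your reading of the undefined $\alpha$ in the statement as the sparsity exponent $\theta$ is the intended one. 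The only loose end is the one you flag yourself: with $d\in\Theta(n^\theta)$ and $\rho\in\Theta\big((\frac{n}{d})^\beta\big)$ the hidden constants give $P_e\le C\,n^{1-k(1-\theta)(1-\beta)}$, and in the degenerate case where $\frac{1+\zeta}{(1-\theta)(1-\beta)}$ is exactly an integer the exponent equals $-\zeta$ and $C>1$ is not absorbed; ``shrinking $\zeta$'' does not fix this, since it proves the weaker bound $n^{-\zeta'}$ with $\zeta'<\zeta$ while the test budget stays the same. You should instead either exclude that measure-zero set of $\zeta$ (consistent with the ``sufficiently small $\zeta$'' phrasing) or interpret the scaling assumptions with constant one (i.e.\ $d\le n^\theta$, $\rho\le(\frac{n}{d})^\beta$), under which $P_e\le (n-d)\big(\frac{\rho d}{n}\big)^k<n^{-\zeta}$ holds even at the borderline. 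This is a technicality at the level of constants, not a gap in the main argument.
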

\begin{theorem}\textup{{\cite{Ven19}}} \label{thm:rho_lowerbound} 
    For any sufficiently large $n$, sufficiently small $\epsilon>0$, $\rho\in\Theta\big((\frac{n}{d})^{\beta}\big)$ (for some constant $\beta\in[0,1)$), and $d\in\Theta(n^\theta)$ for some positive constant $\theta\in[0,1)$, any non-adaptive group testing algorithm that includes $\rho$ items per test and has a probability of error of at most $\epsilon$ requires at least $\big(\frac{1-6\epsilon}{1-\beta}\big)\frac{n}{\rho}$ tests.
\end{theorem}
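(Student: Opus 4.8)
The plan is to prove the bound by a counting argument in which the size constraint limits the number of \emph{distinct} outcome vectors $\mathbf{y}(\mathcal{D})$ that can occur. The key structural observation is that $\rho$-sized tests let the defectives turn only a few tests positive: every positive test contains at least one of the $d$ defectives, so the number of positive tests is at most $\sum_{i\in\mathcal{D}}\deg(i)$, where $\deg(i)$ is the number of tests containing item $i$. Since $\sum_{i=1}^{n}\deg(i)=\sum_{t=1}^{T}|\{i:x_{ti}=1\}|\le T\rho$ and $\mathcal{D}$ is uniform over $d$-subsets, the expected number of positive tests is at most $\frac{d}{n}\sum_i\deg(i)\le\frac{dT\rho}{n}$. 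Hence, by Markov's inequality, for any fixed $\eta\in(0,1)$, all but at most an $\eta$-fraction of the $\binom{n}{d}$ defective sets have an outcome vector with fewer than $m:=\frac{dT\rho}{n\eta}$ ones.

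The second ingredient converts the hypothesis $P_e\le\epsilon$ into near-injectivity of $\mathcal{D}\mapsto\mathbf{y}(\mathcal{D})$. Let $\mathcal{U}$ be the set of $d$-subsets whose outcome vector is shared by no other $d$-subset. Under the uniform prior, conditioned on an outcome with two or more preimages, any decoder is correct with probability at most $\tfrac12$; since a random $\mathcal{D}$ yields such an outcome with probability $1-|\mathcal{U}|/\binom{n}{d}$, we get $1-\epsilon\le\tfrac12+\tfrac{|\mathcal{U}|}{2\binom{n}{d}}$, i.e.\ $|\mathcal{U}|\ge(1-2\epsilon)\binom{n}{d}$. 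Intersecting this with the Markov bound, at least $(1-2\epsilon-\eta)\binom{n}{d}$ defective sets lie in $\mathcal{U}$ \emph{and} induce fewer than $m$ positive tests; these have pairwise-distinct outcome vectors, and the number of length-$T$ binary vectors with fewer than $m$ ones is at most $\sum_{j<m}\binom{T}{j}\le m\binom{T}{m}\le m(eT/m)^{m}$ (valid since $m/T=\tfrac{d\rho}{n\eta}\to0$). Therefore
\begin{align}
 (1-2\epsilon-\eta)\binom{n}{d}\;\le\;m\,(eT/m)^{m}.
\end{align}

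It remains to take logarithms and simplify. Using $\binom{n}{d}\ge(n/d)^{d}$ and substituting $m=\frac{dT\rho}{n\eta}$ (so that $eT/m=\frac{en\eta}{d\rho}$), dividing through by $d\log(n/d)$ turns this into
\begin{align}
 1-o(1)\;\le\;\frac{T\rho}{n\eta}\cdot\frac{\log\big(\tfrac{en\eta}{d\rho}\big)}{\log(n/d)},
\end{align}
where the $o(1)$ absorbs $\frac{\log(1-2\epsilon-\eta)}{d\log(n/d)}$ and $\frac{\log m}{d\log(n/d)}$, both of which vanish: in the non-trivial case $T=O(n/\rho)$ one has $m=\Theta(d)$, hence $\log m=O(\log d)+O(1)=o\big(d\log(n/d)\big)$ since $d=o(n)$. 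Finally, $\rho=\Theta\big((n/d)^{\beta}\big)$ gives $\log\big(\tfrac{en\eta}{d\rho}\big)=(1-\beta)\log(n/d)+O(1)$, so the displayed inequality reduces to $\frac{T\rho}{n}\ge\frac{\eta(1-o(1))}{1-\beta}$; choosing $\eta=1-3\epsilon$ and absorbing the $o(1)$ into the constant for all sufficiently large $n$ yields $T\ge\frac{1-6\epsilon}{1-\beta}\cdot\frac{n}{\rho}$.

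The main difficulty I anticipate is the asymptotic bookkeeping at the end: one must verify that every lower-order term is genuinely $o\big(d\log(n/d)\big)$, and track the constant carefully, since Markov's inequality inflates $m$ by a factor $1/\eta$ over the mean number of positive tests, and this (together with the inherent $2\epsilon$ loss from non-injectivity) is what produces the stated $\frac{1-6\epsilon}{1-\beta}$ rather than the sharper $\frac{1}{(1-\beta)(1-\theta)}$ that would match Theorem~\ref{thm:rho_upperbound}. Closing that residual gap would apparently require a finer argument (e.g.\ showing that reliable recovery essentially forces the set of ``possible defectives'' to coincide with $\mathcal{D}$), which I do not attempt here.
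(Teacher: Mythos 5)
The paper does not actually prove Theorem \ref{thm:rho_lowerbound}: it is imported verbatim from Gandikota et al.\ \cite{Ven19}, whose proof is based on Fano's inequality (bounding each test's entropy via $H(y_t)\le H_2(d\rho/n)\approx \frac{d\rho}{n}\log\frac{n}{d\rho}$ and comparing with $H(\mathcal{D})\approx d\log(n/d)$). Your argument is a genuinely different, counting-style route, and as far as I can check it is sound: the deterministic ``at most $\gamma d$ positive tests'' fact that drives the paper's own Theorem \ref{thm:counting_bound} for $\gamma$-divisible items has no analogue under the $\rho$-size constraint, and you correctly substitute an average-case bound ($\E[\#\text{positives}]\le dT\rho/n$ via $\sum_i \deg(i)\le T\rho$) plus Markov, and then replace the ``success $\le$ \#outcomes$/\binom{n}{d}$'' step with the near-injectivity bound $|\mathcal{U}|\ge(1-2\epsilon)\binom{n}{d}$ so that the low-weight restriction and unique decodability can be intersected; the asymptotics ($m/T\to 0$ so the binomial tail bound applies, $\log m=o(d\log(n/d))$ after the harmless reduction to $T=O(n/\rho)$, and $\log\frac{en\eta}{d\rho}=(1-\beta+o(1))\log(n/d)$) all check out and deliver $T\ge\frac{(1-3\epsilon)(1-o(1))}{1-\beta}\cdot\frac{n}{\rho}\ge\frac{1-6\epsilon}{1-\beta}\cdot\frac{n}{\rho}$ for large $n$. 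What your approach buys relative to \cite{Ven19} is consistency with the combinatorial technique used elsewhere in this paper (and a transparent handling of the error-probability dependence); what the Fano route buys is avoiding the Markov step (and its $1/\eta$ loss) entirely. Two small points you should tidy in a full write-up: handle non-integer $m$ and the requirement $m\ge 1$, $m\le T/2$ explicitly when invoking $\sum_{j<m}\binom{T}{j}\le m(eT/m)^m$, and note that for randomized test designs the whole argument should be carried out conditionally on the realized design $\mathsf{X}$ and then averaged, exactly as in the paper's proof of Theorem \ref{thm:counting_bound}.
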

We observe that under the $\rho$-sized test constraint, both the lower and upper bounds have the same leading order term $\frac{n}{\rho}$. Hence, there is not much of a gap between the lower and upper bounds.\footnote{See also \cite{Geb20} for very recent improvements providing sharp constants.} However, under the $\gamma$-divisible items constraint, the lower bound contains the term $(\frac{n}{d})^{(1-5\epsilon)/\gamma}$ while the upper bound contains the term $(\frac{n}{\epsilon})^{1/\gamma}$. Hence, there is significant gap between the lower and upper bounds; we will see that the gap can be narrowed all the way down to a constant factor in the adaptive setting, and can also be significantly reduced in the non-adaptive setting.  See the following subsection for further details.  % In particular, for broad scaling regimes on $d$ and $\gamma$, and for the near-constant tests-per-item random design, we will provide upper and lower bounds on the number of tests that are both of the form $\Theta\big( \gamma d\big(\max\big\{d,\frac{n}{d}\big\}^{\frac{1}{\gamma}}\big)^{1+o(1)} \big)$, thereby matching asymptotically in the leading terms.

\subsection{Overview of the Paper}

The structure of the paper, as well as the main contributions, are outlined as follows:
\begin{itemize}
    \item In Section \ref{ch:sparse_adap_group_testing}, we consider the adaptive setting. We present an information-theoretic lower bound for $\gamma$-divisible items (Theorem \ref{thm:gamma_lower_bound_adap}), which strengthens the previous information-theoretic lower bound in \cite{Ven19} for $\gamma$-divisible items by improving its dependence on error probability, as well as extending its validity to the adaptive setting. Furthermore, we present adaptive algorithms for both $\gamma$-divisible items and $\rho$-sized tests, and show that both algorithms recover the defective set with zero error probability using a near-optimal number of tests (Theorem \ref{thm:gamma_upperbound_adap} and Section \ref{sec:adap_algo_rho}).  Informally, under mild assumptions, the optimal number of tests is shown to be within a factor $e^{1+o(1)}$ of $\gamma d(\frac{n}{d})^{1/\gamma}$ for $\gamma$-divisible items, and within a $1+o(1)$ factor of $\frac{n}{\rho}$ for $\rho$-sized tests.
    \item In Section \ref{ch:sparse_nonadap_group_testing}, we consider the non-adaptive setting. To further complement the preceding lower bound, we provide an additional lower bound that can be tighter (Corollary \ref{cor:gamma_lowerbound_ours_nearconstantwtdesign}), but that is specific to the near-constant tests-per-item design (rather than general designs).  In addition, we analyze the performance of the DD algorithm \cite{Ald14a} (Theorem \ref{thm:gamma_upperbound_nonadap}), and show that the number of tests can be significantly less than that of COMP (considered in \cite{Ven19}).  Informally, a special case of our results states that in the scaling regime $d = \Theta(n^{\theta})$ and $\gamma = \Theta( (\log n)^c )$ with $\theta,c \in (0,1)$, the optimal number of tests behaves as $\Theta(\gamma d\max\{n^\theta,n^{1-\theta}\}^{\frac{1+o(1)}{\gamma}})$.

    % By generalizing the $\gamma$-divisible items constraint and the $\rho$-sized tests constraint into a general constraint, we obtain information-theoretic lower bounds for both settings which are consistent with previous information-theoretic lower bounds in \cite{Ven19}. Furthermore, we extend an existing decoding algorithm to $\gamma$-divisible items and show that it performs better than a previously analyzed algorithm in .
    % \item In Section \ref{ch:conclusion_and_future_work}, we review the main contributions of the paper, and present various directions for future research.
\end{itemize}
In the final stages of preparing this paper, we noticed the concurrent work of \cite{Geb20}, whose results are similar to those that we develop for the non-adaptive setting.  In particular, the optimal number of tests for the near-constant tests-per-item design are characterized up to a constant factor in \cite{Geb20} whenever $\gamma = \Theta(1)$ (tight bounds are also given for $\rho$-sized test constraints, but these are more separate from our results).  While our results are not quite as strong in this regime (see the discussion following Theorem \ref{thm:gamma_upperbound_nonadap}), they have the advantage of also applying in regimes where $\gamma \to \infty$ (e.g., $\gamma = (\log n)^c$ for $c \in (0,1)$).  In addition, the proof techniques used are complementary, with ours building on \cite{Ald14} and \cite{Joh16}, whereas \cite{Geb20} builds on \cite{Coj19} and \cite{Coj19a}.  Finally, the adaptive setting is not considered in \cite{Geb20}.

{\bf Notation.} Throughout the paper, the function $\log(\cdot)$ has base $e$, and we make use of Bachmann-Landau asymptotic notation (\textit{i.e.}, $O$, $o$, $\Omega$, $\omega$, $\Theta$).

\section{The Adaptive Setting} \label{ch:sparse_adap_group_testing}

%\subsection{Introduction}
%
%In this section, we seek information-theoretic bounds and algorithms for the sparse adaptive setting. 
%% 
%Throughout the section, we consider the setup described in Section \ref{sec:setup} and Section \ref{sec:constraints}. More specifically, we consider the sparse adaptive group testing problem with small error probability, exact recovery, noiseless tests, binary outcomes, and combinatorial prior. Concretely, we target the error probability being bounded by some $\epsilon>0$:
%\begin{align}
%    P_e&=\mathbb{P}[\widehat{\mathbf{u}}\neq\mathbf{u}]\leq\epsilon,
%\end{align}
%where the probability is taken over the randomness of the set of defective items. Our main contributions are as follows:
%\begin{itemize}
%    \item In Section \ref{sec:adap_info_theo_lower_bound_gamma}, we provide an information-theoretic lower bound for $\gamma$-divisible items under the sparse adaptive setting.
%    \item In Section \ref{sec:adap_algo_gamma}, we provide an algorithm for $\gamma$-divisible items and study the number of tests for reliable recovery with zero error probability.
%    \item In Section \ref{sec:adap_algo_rho}, we provide an algorithm for $\rho$-divisible tests and study the number of tests for reliable recovery with zero error probability.
%\end{itemize}
%Our analysis will make use of several techniques and results from probability theory and information theory.

In this section, we seek information-theoretic bounds and algorithms for the adaptive setting, considering both the cases of $\gamma$-divisible items and $\rho$-sized tests.

\subsection{Information Theoretic Lower Bound for $\gamma$-divisible Items} \label{sec:adap_info_theo_lower_bound_gamma}

In this section, we present our information-theoretic lower bounds for sparse group testing under the $\gamma$-divisible items. We first prove a counting bound which gives us an upper bound on the success probability $\mathbb{P}(\textup{suc})=1-P_e$, following similar proof techniques as \cite{Bal13}, with suitable refinements to account for the $\gamma$-divisibility constraint. Afterwards, we will use the bound on $\mathbb{P}(\textup{suc})$ to prove the converse result (lower bound on $T$).
\begin{theorem} {\em (Counting-Based Bound)} \label{thm:counting_bound} 
    In the case of $n$ items with $d$ defectives where each item can be tested at most $\gamma$ times, any algorithm (possibly adaptive) to recover the defective set $\mathcal{D}$ with $T$ tests has success probability $\mathbb{P}(\textup{suc})$ satisfying
    \begin{align}
        \mathbb{P}(\textup{suc})\leq\frac{\sum_{i=0}^{\gamma d}{T\choose i}}{{n\choose d}}. \label{eq:prob_suc_bound}
    \end{align}
\end{theorem}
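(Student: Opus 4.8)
The plan is to adapt the classical counting (or ``volume'') converse for group testing, as used in \cite{Bal13}, to the $\gamma$-divisible setting, the new ingredient being a bound on the number of distinct outcome vectors that can arise. First I would reduce to a deterministic algorithm: since $\mathbb{P}(\textup{suc}) = 1 - P_e$ is an average over the internal randomness of the (possibly randomized) adaptive testing strategy and decoder, it suffices to prove \eqref{eq:prob_suc_bound} conditionally on any fixed realization of that randomness and then average. So fix a deterministic adaptive algorithm for the remainder.

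Next I would invoke the standard observation about adaptive testing: because each test pool is a deterministic function of the previously observed outcomes, the entire realized test sequence — and hence the final estimate — is a deterministic function of the outcome vector $\mathbf{y} \in \{0,1\}^T$ alone. Write $\widehat{\mathcal{D}} = g(\mathbf{y})$ for this function, and for a realized defective set $\mathcal{D}$ of size $d$ let $\mathbf{y}(\mathcal{D})$ denote the outcome vector it produces. If two defective sets $\mathcal{D}_1 \neq \mathcal{D}_2$ are both recovered correctly, then $g(\mathbf{y}(\mathcal{D}_1)) = \mathcal{D}_1 \neq \mathcal{D}_2 = g(\mathbf{y}(\mathcal{D}_2))$, which forces $\mathbf{y}(\mathcal{D}_1) \neq \mathbf{y}(\mathcal{D}_2)$. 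Hence the number of defective sets on which the algorithm succeeds is at most the number of attainable outcome vectors $\mathbf{y}$.

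The key step is to bound this number of attainable outcome vectors using the divisibility constraint. Fix any defective set $\mathcal{D}$ with $|\mathcal{D}| = d$ and run the (deterministic) algorithm; this realizes a concrete sequence of $T$ test pools, and the $\gamma$-divisibility constraint guarantees that each item — in particular each of the $d$ defective items — lies in at most $\gamma$ of these pools. Since a test is positive only if it contains a defective item, a union/charging argument over the defectives shows that the number of positive outcomes is at most $\gamma d$: each positive test is charged to some defective it contains, and each defective is charged at most $\gamma$ times. Therefore every attainable $\mathbf{y}$ has Hamming weight at most $\gamma d$, so there are at most $\sum_{i=0}^{\gamma d} \binom{T}{i}$ of them. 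Combining with the previous paragraph, at most $\sum_{i=0}^{\gamma d}\binom{T}{i}$ of the $\binom{n}{d}$ equally likely defective sets are recovered correctly, so the conditional success probability is at most $\sum_{i=0}^{\gamma d}\binom{T}{i}/\binom{n}{d}$; averaging over the algorithm's randomness preserves the bound and gives \eqref{eq:prob_suc_bound}.

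The main point requiring care is the interaction between adaptivity and the divisibility constraint: one must check that both ``the outcome vector $\mathbf{y}$ determines the estimate $\widehat{\mathcal{D}}$'' and ``each defective item appears in at most $\gamma$ of the \emph{realized} tests'' remain valid once the test pools are allowed to depend on earlier outcomes. Both hold because, for a fixed deterministic algorithm and a fixed realized defective set $\mathcal{D}$, the whole test-and-decode interaction is deterministic and the constraint is imposed on that realized sequence. Everything else is a routine counting comparison against the uniform (combinatorial) prior on $\mathcal{D}$.
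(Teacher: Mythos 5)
Your proposal is correct and follows essentially the same route as the paper: both arguments count the attainable outcome vectors, use the $\gamma$-divisibility constraint to argue that every realized outcome vector has at most $\gamma d$ positive entries (each positive test is charged to a defective it contains, and each defective lies in at most $\gamma$ realized tests, even adaptively), and compare this count against the $\binom{n}{d}$ equiprobable defective sets. The only cosmetic difference is that the paper computes the success probability of the optimal decoder via the inverse-image sets $\mathcal{A}_{\mathbf{y}}$ (picking uniformly within each), whereas you argue via injectivity of the map from correctly recovered defective sets to outcome vectors after conditioning on the algorithm's randomness; these are equivalent formulations of the same counting bound.
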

\begin{proof}
    See Section \ref{sec:counting_bound_proof}.
\end{proof}

The intuition behind \eqref{eq:prob_suc_bound} is that the denominator represents the number of defective sets of size $d$, and the numerator represents the number of possible tests outcomes (since there are always at most $\gamma d$ positive tests).  Once \eqref{eq:prob_suc_bound} is in place, the following converse follows from an asymptotic analysis.

\begin{theorem} \label{thm:gamma_lower_bound_adap} 
    {\em (General Converse Bound)} 
    Fix $\epsilon\in(0,1)$, and suppose that $d\in o(n)$, $\gamma\in o(\log{n})$, and $\gamma d\rightarrow\infty$ as $n\rightarrow\infty$. Then any non-adaptive or adaptive group testing algorithm that tests each item at most $\gamma$ times and has a probability of error of at most $\epsilon$ requires at least $e^{-(1+o(1))}\gamma d\big(\frac{n}{d}\big)^{1/\gamma}$ tests.
\end{theorem}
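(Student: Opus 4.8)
The plan is to start from the counting bound in Theorem \ref{thm:counting_bound}: since the error probability is at most $\epsilon$, we have $\mathbb{P}(\textup{suc}) \geq 1-\epsilon$, so combining with \eqref{eq:prob_suc_bound} yields
\[
  1-\epsilon \;\leq\; \frac{\sum_{i=0}^{\gamma d}\binom{T}{i}}{\binom{n}{d}}.
\]
The strategy is then to upper bound the numerator in terms of $T$ and $\gamma d$, lower bound the denominator $\binom{n}{d}$, take logarithms, and solve for $T$. First I would use the standard bound $\sum_{i=0}^{k}\binom{T}{i} \leq \binom{T}{k}\cdot\frac{T}{T-2k+1}$ or, more simply, the crude bound $\sum_{i=0}^{\gamma d}\binom{T}{i}\leq (\gamma d+1)\binom{T}{\gamma d}$, valid once $T \geq 2\gamma d$ (which holds here since $T/(\gamma d)\to\infty$ by \eqref{eq:T_gamma_d_ineq}, given the implicit assumption $T\in\Omega(d\log(n/d))$ forced by the unconstrained converse). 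Then $\binom{T}{\gamma d}\leq \big(\frac{eT}{\gamma d}\big)^{\gamma d}$, and for the denominator $\binom{n}{d}\geq \big(\frac{n}{d}\big)^d$.

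Substituting these into the inequality and rearranging, I would obtain something like
\[
  \Big(\frac{n}{d}\Big)^d (1-\epsilon) \;\leq\; (\gamma d + 1)\Big(\frac{eT}{\gamma d}\Big)^{\gamma d}.
\]
Taking logarithms: $d\log\frac{n}{d} + \log(1-\epsilon) \leq \log(\gamma d+1) + \gamma d\big(\log\frac{eT}{\gamma d}\big)$. Dividing by $\gamma d$ gives
\[
  \frac{1}{\gamma}\log\frac{n}{d} \;\leq\; \log\frac{eT}{\gamma d} + \frac{\log(\gamma d+1) - \log(1-\epsilon)}{\gamma d},
\]
and exponentiating yields $T \geq \frac{\gamma d}{e}\big(\frac{n}{d}\big)^{1/\gamma}\cdot\exp\!\big(-\frac{\log(\gamma d+1)-\log(1-\epsilon)}{\gamma d}\big)$. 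The key observation is that the correction term in the exponent is $o(1)$: since $\gamma d\to\infty$ and $\gamma\in o(\log n)$, we have $\gamma d = o(\log n \cdot d)$, but we need $\log(\gamma d)/(\gamma d)\to 0$, which holds because $\gamma d\to\infty$; and $\log(1-\epsilon)$ is a fixed constant divided by $\gamma d\to\infty$. Hence $\exp(-o(1)) = 1-o(1)$, so $T \geq (1-o(1))\frac{\gamma d}{e}\big(\frac{n}{d}\big)^{1/\gamma} = e^{-(1+o(1))}\gamma d\big(\frac{n}{d}\big)^{1/\gamma}$, which is the claimed bound (the factor $\frac{1}{e}$ being absorbed into $e^{-(1+o(1))}$ since $\frac{1}{e} = e^{-1}$ and the $-o(1)$ absorbs the rest).

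The main obstacle I anticipate is twofold. First, verifying that the crude combinatorial estimates are tight enough: the bound $\binom{n}{d}\geq (n/d)^d$ loses a factor that behaves like $(\frac{n}{n-d})^{\,n-d}$-type corrections, but since $d=o(n)$ this only contributes a multiplicative $e^{o(d)}$, which after dividing by $\gamma d$ in the exponent still vanishes — I'd want to confirm this carefully, perhaps using $\binom{n}{d}\geq \big(\frac{n-d}{d}\big)^d$ or Stirling directly. Second, and more delicately, I need the condition $\gamma\in o(\log n)$ to ensure $\big(\frac{n}{d}\big)^{1/\gamma}\to\infty$ is genuinely the dominant behavior and that the sub-exponential correction factors (like $\gamma d+1$ raised to the $1/(\gamma d)$ power) don't interfere with the leading constant $e^{-1}$. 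Everything else is routine asymptotic bookkeeping; the heart of the argument is simply the counting bound plus careful tracking of which error terms are $o(1)$ after normalization by $\gamma d$.
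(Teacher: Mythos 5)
Your proposal is correct and follows essentially the same route as the paper's proof: both start from the counting bound of Theorem \ref{thm:counting_bound}, upper bound the numerator $\sum_{i=0}^{\gamma d}\binom{T}{i}$, lower bound $\binom{n}{d}\geq(n/d)^d$, and then rearrange after dividing by $\gamma d$, using $\gamma d\to\infty$ and $\frac{\gamma d}{T}=o(1)$ (via \eqref{eq:T_gamma_d_ineq}) to absorb all corrections into the $e^{-(1+o(1))}$ factor. The only cosmetic difference is that you bound the binomial sum by $(\gamma d+1)\big(\frac{eT}{\gamma d}\big)^{\gamma d}$ whereas the paper uses the entropy bound $e^{TH_2(\gamma d/T)}$ and Taylor-expands; these are asymptotically equivalent here and yield the same constant.
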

\begin{proof}
    See Section \ref{sec:gamma_lowerbound_adap_proof}.
\end{proof}

Since $\epsilon$ only affects the $e^{o(1)}$ term, asymptotically, the number of tests required remains unchanged for any nonzero target success probability.  This is in analogy with the strong converse results of \cite{Bal13,Joh15}. 

Theorem \ref{thm:gamma_lower_bound_adap} strengthens the previous information-theoretic lower bound in \cite{Ven19} for $\gamma$-divisible items (stating that $T\geq\gamma d(\frac{n}{d})^{(1-5\epsilon)/\gamma}$) by improving the dependence on $\epsilon$, as well as extending its validity to the adaptive setting (whereas \cite{Ven19} used an approach based on Fano's inequality that is specific to the non-adaptive setting).

\subsection{Adaptive Algorithm for $\gamma$-Divisible Items} \label{sec:adap_algo_gamma}

We first consider the recovery of the defective set given knowledge of the size  $d$ of the defective set. Afterwards, we consider the estimation of $d$.

\subsubsection{Recovering the Defective Set}

Our algorithm for the case that $d$ is known is described in Algorithm \ref{alg:adaptive_algo},
\begin{algorithm}[t]
    \begin{algorithmic}[1]
        \REQUIRE Number of items $n$, number of defective items $d$, and divisibility of each item $\gamma$
        \STATE Initialize $M\leftarrow (\frac{n}{d})^{\frac{\gamma-1}{\gamma}}$ and defective set $\mathcal{D}\leftarrow\emptyset$
        \STATE Arbitrarily group the $n$ items into $\frac{n}{M}$ groups of size $M$
        \STATE Test each group and discard any that return a negative outcome
        \STATE Label the remaining groups incrementally as $G^{(0)}_j$, where $j=1,2,\dots$
        \FOR{$i=1$ to $\gamma-1$}{
        % \STATE Initialize $j\leftarrow1$
        \FOR{each group $G^{(i-1)}_j$ from the previous stage}{
        \STATE Arbitrarily group all items in $G^{(i-1)}_j$ into $M^{1/(\gamma-1)}$ sub-groups of size $M^{1-i/(\gamma-1)}$
        \STATE Test each sub-group and discard any that return a negative outcome
        \STATE Label the remaining sub-groups incrementally as $G^{(i)}_j$
        }\ENDFOR
        }\ENDFOR
        \STATE Add the items in all the remaining groups $G^{(\gamma-1)}_j$ to $\mathcal{D}$
        \RETURN $\mathcal{D}$
    \end{algorithmic}
    \caption{Adaptive algorithm for $\gamma$-divisible items \label{alg:adaptive_algo}}
\end{algorithm}
where we assume for simplicity that $(\frac{n}{d})^{1/\gamma}$ is an integer.\footnote{Note that we assume $d \in o(n)$ and $\gamma\in o(\log(\frac{n}{d}))$, meaning that $(\frac{n}{d})^{1/\gamma} \to \infty$. Hence, the effect of rounding is asymptotically negligible, and is accounted for by the $1+o(1)$ term in the theorem statement.} Algorithm \ref{alg:adaptive_algo} is reminiscent of Hwang's generalized binary splitting algorithm \cite{Hwa72}, but the depth of the corresponding tree is controlled by using much more than two branches per split; see Figure \ref{fig:adaptive_algo}.

Using Algorithm \ref{alg:adaptive_algo}, we have the following theorem, which is proved throughout the remainder of the subsection.
\begin{theorem} \label{thm:gamma_upperbound_adap}
    {\em (Adaptive Algorithm Performance)} 
    For $\gamma\in o(\log n)$, and $d\in o(n)$, there exists an adaptive group testing algorithm that tests each item at most $\gamma$ times that uses at most $\gamma d(\frac{n}{d})^{1/\gamma}$ tests to recover the defective set exactly with zero error probability given knowledge of $d$.
\end{theorem}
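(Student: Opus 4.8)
The plan is to analyze Algorithm~\ref{alg:adaptive_algo} directly and establish three things: (i) each item is tested at most $\gamma$ times; (ii) the returned set equals $\mathcal{D}$ deterministically; and (iii) the number of tests is at most $\gamma d\big(\tfrac{n}{d}\big)^{1/\gamma}$. Claim~(i) is immediate from the structure of the algorithm: an item lies in exactly one block of the initial partition (tested once in Stage~$0$) and, in each of the $\gamma-1$ subsequent stages, in exactly one sub-group of its (surviving) group, which is tested once; hence it is tested at most $\gamma$ times --- exactly $\gamma$ if it is never discarded, and fewer otherwise --- so the divisibility constraint holds.

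For claim~(ii) I would prove, by induction on the stage index $i$, the invariant that the groups surviving Stage~$i-1$ are precisely the blocks of the current partition that contain at least one defective. Under the noiseless OR model a pool tests positive if and only if it contains a defective, so exactly the defective-containing blocks of the initial partition survive Stage~$0$, giving the base case. For the inductive step, each surviving group is partitioned further and each sub-group is tested, and again a sub-group survives if and only if it contains a defective; in particular the unique sub-group containing any fixed defective always survives, so no defective is ever discarded. After Stage~$\gamma-1$ the blocks have size $M^{1-(\gamma-1)/(\gamma-1)}=M^{0}=1$, so the surviving groups are exactly the singletons $\{i\}$ with $i\in\mathcal{D}$, and the accumulator is set equal to $\mathcal{D}$; this gives zero error probability. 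Since this argument never uses the combinatorial prior, it in fact holds for every $\mathcal{D}$ of size $d$.

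For claim~(iii) the crucial observation --- and essentially the only step requiring care --- is that at the end of every stage at most $d$ groups survive, since the surviving groups are disjoint and each contains one of the $d$ defectives; this is what prevents the number of pools per stage from growing with $n$. Stage~$0$ uses $n/M$ tests, and for $i\ge 1$, Stage~$i$ splits each of the at most $d$ groups surviving Stage~$i-1$ into $M^{1/(\gamma-1)}$ sub-groups and tests each, using at most $d\,M^{1/(\gamma-1)}$ tests; summing over the $\gamma-1$ splitting stages bounds the total by $\tfrac{n}{M}+(\gamma-1)\,d\,M^{1/(\gamma-1)}$. Substituting $M=\big(\tfrac{n}{d}\big)^{(\gamma-1)/\gamma}$ gives $\tfrac{n}{M}=n^{1/\gamma}d^{\,1-1/\gamma}=d\big(\tfrac{n}{d}\big)^{1/\gamma}$ and $M^{1/(\gamma-1)}=\big(\tfrac{n}{d}\big)^{1/\gamma}$, so the bound telescopes to $d\big(\tfrac{n}{d}\big)^{1/\gamma}+(\gamma-1)\,d\big(\tfrac{n}{d}\big)^{1/\gamma}=\gamma d\big(\tfrac{n}{d}\big)^{1/\gamma}$. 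The only remaining loose end is integrality of the group sizes $M^{1-i/(\gamma-1)}$ and of the counts $n/M$ and $M^{1/(\gamma-1)}$: these are exact integers under the simplifying assumption that $(\tfrac{n}{d})^{1/\gamma}$ is an integer, and in general, since the paper's standing assumptions force $(\tfrac{n}{d})^{1/\gamma}\to\infty$, rounding each quantity up changes the count by only a $1+o(1)$ factor, which is harmless.
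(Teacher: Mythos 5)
Your proposal is correct and follows essentially the same route as the paper's proof: the same per-stage count of $n/M$ initial tests plus at most $dM^{1/(\gamma-1)}$ tests in each of the $\gamma-1$ subsequent stages, the same choice $M=(\tfrac{n}{d})^{(\gamma-1)/\gamma}$, and the same handling of integrality via the assumption that $(\tfrac{n}{d})^{1/\gamma}$ is an integer. You additionally spell out the zero-error correctness and the $\gamma$-divisibility constraint explicitly, which the paper leaves implicit in the algorithm's structure, but this does not change the substance of the argument.
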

\begin{proof}
    See Section \ref{sec:gamma_upperbound_adap_proof}.
\end{proof}

\textit{Comparisons:} Referring to Theorem \ref{thm:gamma_upperbound}, the upper bound for the non-adaptive algorithm of \cite{Ven19} using a randomized test design is $T\leq\big\lceil e\gamma d(\frac{n}{\epsilon})^{1/\gamma}\big\rceil$, where $\epsilon$ is the target error probability. The non-adaptive algorithm has a $(\frac{n}{\epsilon})^{1/\gamma}$ term in the upper bound, while our adaptive algorithm has a $(\frac{n}{d})^{1/\gamma}$ term. Since $\epsilon$ is small but $d$ is large, we see that our adaptive algorithm gives a significantly improved bound on the number of tests. Furthermore, the upper bound of our algorithm matches the information-theoretic lower bound in Theorem \ref{thm:gamma_lower_bound_adap} up to a constant factor of $e^{1+o(1)}$. This proves that our algorithm is nearly optimal.

\begin{figure}[t] 
  \centering
  \includegraphics[scale=0.35]{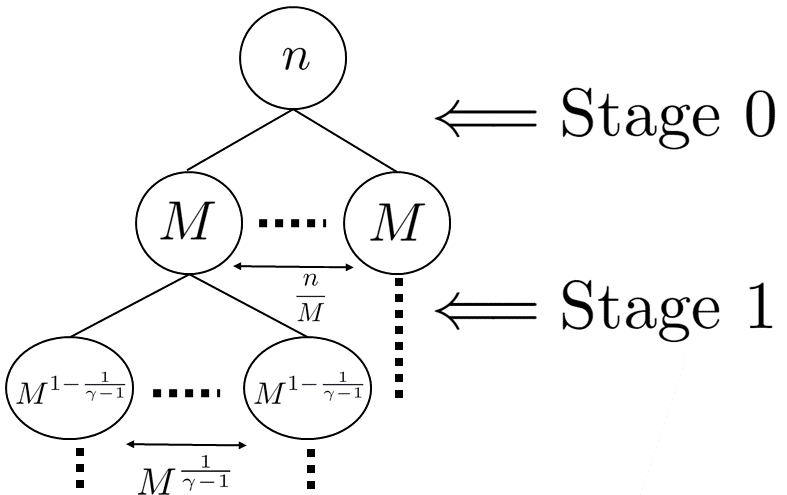}
  \caption{Visualization of splitting in the adaptive algorithm.}
  \vspace*{-2ex}
  \label{fig:adaptive_algo}
\end{figure}

\subsubsection{Estimating the Number of Defectives}

Since each item can appear in at most $\gamma$ tests, existing adaptive algorithms for estimating $d$ that place items in $\Omega(\log\log d)$ tests \cite{Dam10,Fal16} are not suitable when $\gamma \ll \log\log d$, and may be wasteful of the budget $\gamma$ even when $\gamma \gg \log \log d$.

To overcome this limitation, we introduce and evaluate two approaches to obtain a suitable input for $d$ in Algorithm \ref{alg:adaptive_algo} given knowledge of an upper bound $d_{\text{max}}\geq d$. The first approach uses $d_{\text{max}}$ directly in Algorithm \ref{alg:adaptive_algo}, while the second approach refines $d_{\text{max}}$ by deriving an estimate $\widehat{d}$ that is passed to Algorithm \ref{alg:adaptive_algo}. Note that we need $\widehat{d}$ to be an overestimate for the proof of Theorem \ref{thm:gamma_upperbound_adap} to still apply (with $\widehat{d}$ in place of $d$).

\paragraph{Using $d_{\text{max}}$ directly}

Assuming that $(\frac{n}{d_{\text{max}}})^{1/\gamma}$ is an integer, we first consider using $d_{\text{max}}$ directly in Algorithm \ref{alg:adaptive_algo} (in place of $d$) to recover the defective set $\mathcal{D}$.

\textit{Analysis:} Referring to Algorithm \ref{alg:adaptive_algo}, this changes our initialization of $M$ which becomes $(\frac{n}{d_{\text{max}}})^{(\gamma-1)/\gamma}$. Substituting the updated value of $M$ into \eqref{eq:upper_bound}, we obtain the following:
\begin{align}
    T\leq\frac{n}{(\frac{n}{d_{\text{max}}})^{(\gamma-1)/\gamma}}+(\gamma-1)d\Big[\Big(\frac{n}{d_{\text{max}}}\Big)^{\frac{\gamma-1}{\gamma}}\Big]^{\frac{1}{\gamma-1}},
\end{align}
which simplifies to 
\begin{align}
    T\leq(d_{\text{max}}-d+\gamma d)\Big(\frac{n}{d_{\text{max}}}\Big)^{\frac{1}{\gamma}}. \label{eq:upper_bound_method}
\end{align}

\paragraph{Binning Method}

\begin{algorithm}[t]
    \begin{algorithmic}[1]
        \REQUIRE Population of items, number of items $n$, upper bound $d_{\text{max}}\geq d$, and a probability parameter $\beta_n$
        \STATE Initialize number of bins $B\leftarrow d_{\text{max}}/\beta_n$
        \STATE Partition the items into $B$ bins of size $n/B$ each, uniformly at random
        \STATE Test each bin and discard any with a negative test outcome 
        \STATE $\widehat{d}\leftarrow\text{\#positive bins}/(1-\sqrt{\beta_n})$
        \RETURN $\widehat{d}$
    \end{algorithmic}
    \caption{Estimation of $d$ \label{alg:d_estimation_algo}}
\end{algorithm}
We will show that the bound on $T$ can be improved by forming a refined estimate of $d$ using knowledge of $d_\text{max}$, at the expense of having a non-zero (but asymptotically vanishing) probability of error.

Let $\beta_n$ be a given parameter, which we will assume tends to zero as $n \to \infty$. We first run Algorithm 2 to obtain a new input $\widehat{d}$ to Algorithm \ref{alg:adaptive_algo}. We then run Algorithm \ref{alg:adaptive_algo} with modified inputs (described in the following) to recover the defective set $\mathcal{D}$. Assuming that $(\frac{n}{\widehat{d}})^{1/\gamma}$ is an integer, we set the population of items in Algorithm \ref{alg:adaptive_algo} to be the remaining items left in the positive bins, the number of items as $d\times(\text{bin size})=d(\frac{\beta_n n}{d_{\text{max}}})$, the (upper bound on the) number of defective items as $\widehat{d}$, and the divisibility of each item as $\gamma-1$ (since each item is tested once in Algorithm \ref{alg:d_estimation_algo}).

\textit{Analysis:} We first show that the probability of a particular defective item colliding with any other defective item (\textit{i.e.}, falling in the same bin) tends to zero as $n\rightarrow\infty$. Referring to step 2 in Algorithm \ref{alg:d_estimation_algo}, conditioning on a particular item being in a particular bin, we see that the probability of another particular item being in the same bin is at most $1/B$. By the union bound, the probability of a particular defective item colliding with any of the other $d-1$ defective items is at most $d/B$, which behaves as
\begin{align}
    \frac{d}{B}=\frac{d}{d_{\text{max}}/\beta_n}\leq\frac{d}{d/\beta_n}=\beta_n\rightarrow0, \label{eq:union_bound}.
\end{align}

Secondly, we show that with high probability as $n\rightarrow\infty$, $\widehat{d}$ overestimates $d$. From \eqref{eq:union_bound}, we have
\begin{align}
    \mathbb{E}[\text{\#collisions}]\leq d\beta_n,
\end{align}
where \#collisions refer to the number of items that are in the same bin as any of the other $d-1$ items. By Markov's inequality, we have 
\begin{align}
    \mathbb{P}(\text{\#collisions}\geq d\sqrt{\beta_n})&\leq\sqrt{\beta_n},
\end{align}
which implies the following:
\begin{align}
    \mathbb{P}(d-\text{\#collisions}\geq d-d\sqrt{\beta_n})&\geq1-\sqrt{\beta_n}\\
    \implies\mathbb{P}\Big(\frac{d-\text{\#collisions}}{1-\sqrt{\beta_n}}\geq d\Big)&\geq1-\sqrt{\beta_n}.
\end{align}
Since $(\text{\#positive bins}\geq d-\text{\#collisions})$ always holds, we have $\mathbb{P}(\widehat{d}\geq d) \geq 1-\sqrt{\beta_n}$, which tends to one because $\beta_n\rightarrow0$.
%This is because whenever we want to increase the \#collisions, we must move a single defective item (where it is the only defective item in its bin) to another positive bin. This increases the \#collisions by 1 or 2 but always decrease the \#positive bins by 1.

Finally, we derive the new upper bound for $T$. After estimating $d$, we have used $B=d_{\text{max}}/\beta_n$ number of tests and have a remaining budget of $\gamma-1$ per item. We discard the bins (groups) that returned a negative outcome; instead of continuing with $n$ items, we continue with less than or equal to $(d\times\text{bin size})$ items. To simplify notation, our updated inputs (labeled with subscript ``new'') are
\begin{align}
    n_{\text{new}}&=\frac{\beta_n dn}{d_{\text{max}}},\;d_{\text{new}}=\widehat{d},\;\gamma_{\text{new}}=\gamma-1.
\end{align}
We can then run Algorithm \ref{alg:adaptive_algo} to recover the defective set. Substituting our updated inputs into \eqref{eq:upper_bound} and using $M=\big(\frac{\beta_n dn}{d_{\text{max}}\widehat{d}}\big)^{\frac{\gamma-2}{\gamma-1}}$, we have the following bound for $T$:
\begin{align}
    T&\leq \frac{d_{\text{max}}}{\beta_n}+\frac{\beta_ndn}{d_{\text{max}}(\frac{\beta_ndn}{d_{\text{max}}\widehat{d}})^{\frac{\gamma-2}{\gamma-1}}}+(\gamma-2)d\Big(\frac{\beta_ndn}{d_{\text{max}}\widehat{d}}\Big)^{\frac{1}{\gamma-1}},
\end{align}
which simplifies to
\begin{align}
    T&\leq\frac{d_{\text{max}}}{\beta_n}+(\widehat{d}-2d+\gamma d)\Big(\frac{\beta_ndn}{d_{\text{max}}\widehat{d}}\Big)^{\frac{1}{\gamma-1}}\\
    &\stackrel{(a)}{\leq}\frac{d_\text{max}}{\beta_n}+\Big(\frac{d}{1-\sqrt{\beta_n}}-2d+\gamma d\Big)\Big(\frac{\beta_nn}{d_{\text{max}}}\Big)^{\frac{1}{\gamma-1}}, \label{eq:binning_method}
\end{align}
where we used $d\leq\widehat{d}\leq\frac{d}{1-\sqrt{\beta_n}}$ in (a).

\textit{Comparisons:} By using $T$ satisfying the derived upper bounds, the first approach recovers the defective set with zero error probability, whereas the second approach recovers the defective set with a small error probability determined by the $\beta_n$ parameter.
Referring to \eqref{eq:upper_bound_method} and \eqref{eq:binning_method}, we consider two examples to compare the bounds on $T$. The first example is when $d_{\text{max}}=d$, and the second example is when $\gamma d\ll d_{\text{max}}\ll n$.

For $d_{\text{max}}=d$, as we would naturally expect,  \eqref{eq:upper_bound_method} is the better bound; its leading term is $\gamma d\big(\frac{n}{d}\big)^{1/\gamma}$.  In particular, we note the following two cases: (i) If $\beta_n \ll \frac{1}{ \gamma(\frac{n}{d})^{1/\gamma} }$, then the $\frac{d_{\text{max}}}{\beta_n}$ term in \eqref{eq:binning_method} is strictly higher than $\gamma d\big(\frac{n}{d}\big)^{1/\gamma}$; (ii) If $\beta_n \gg \frac{1}{ \gamma(\frac{n}{d})^{1/\gamma} }$, then some simple algebra gives $\frac{\beta_n n}{d} \gg \frac{1}{\gamma}\big(\frac{n}{d}\big)^{(\gamma-1)/\gamma}$, which implies that the $\gamma d\big(\frac{\beta_n n}{d}\big)^{1/(\gamma-1)}$ term from \eqref{eq:binning_method} is strictly higher than $\gamma d\big(\frac{n}{d}\big)^{1/\gamma}$ (note that $\big(\frac{1}{\gamma}\big)^{1/(\gamma-1)} = \Theta(1)$).

For $\gamma d\ll d_{\text{max}}\ll n$, the choice of $\beta_n$ can impact which bound is smaller.  First note that the dominating term in \eqref{eq:upper_bound_method} is $d_{\text{max}}\big(\frac{n}{d_{\text{max}}}\big)^{1/\gamma}$. Since the dominating term $\max\big\{\frac{d_{\text{max}}}{\beta_n},\gamma d\big(\frac{\beta_nn}{d_{\text{max}}}\big)^{1/(\gamma-1)}\big\}$ in \eqref{eq:binning_method} is not obvious, we consider both possibilities: (i) $d_{\text{max}}\big(\frac{n}{d_{\text{max}}}\big)^{1/\gamma}\gg\frac{d_{\text{max}}}{\beta_n}$ whenever $\beta_n\gg\big(\frac{d_{\text{max}}}{n}\big)^{1/\gamma}$; and (ii) $d_{\text{max}}\big(\frac{n}{d_{\text{max}}}\big)^{1/\gamma}\gg \gamma d\big(\frac{\beta_nn}{d_{\text{max}}}\big)^{\frac{1}{\gamma-1}}$ whenever $\beta_n\ll\big(\frac{d_{\text{max}}}{\gamma d}\big)^{\gamma-1}\big(\frac{d_{\text{max}}}{n}\big)^{1/\gamma}$.
Combining these cases, we see that if $\beta_n$ is in the range $\big(\frac{d_{\text{max}}}{n}\big)^{1/\gamma}\ll\beta_n\ll\big(\frac{d_{\text{max}}}{\gamma d}\big)^{\gamma-1}\big(\frac{d_{\text{max}}}{n}\big)^{1/\gamma}$, the dominating term in \eqref{eq:upper_bound_method} is greater than the dominating term in \eqref{eq:binning_method}. 

Since we have assumed $\beta_n$ to be decaying, we briefly discuss conditions under which the requirement $\big(\frac{d_{\text{max}}}{n}\big)^{1/\gamma}\ll\beta_n$ is consistent with this assumption.  While this lower bound on $\beta_n$ may not always vanish as $n \to \infty$, it does so in broad scaling regimes, including the following: $\gamma\in\Theta((\log n)^c)$ for some $c\in[0,1)$, and $d_{\text{max}}=d=\Theta(n^\theta)$ for some $\theta\in(0,1)$. To see this, note that
\begin{align}
    \lim_{n\rightarrow\infty}\log\Big(\frac{d_{\text{max}}}{n}\Big)^{\frac{1}{\gamma}}
    =\lim_{n\rightarrow\infty}(\alpha-1)(\log n)^{1-c}
    =-\infty, \label{eq:decaying_proof}
\end{align}
and that taking $\exp(\cdot)$ on both sides gives the desired result.

Hence, for $\beta_n$ in the appropriate range, when $d_{\text{max}}$ is close to $d$, using the upper bound directly in Algorithm \ref{alg:adaptive_algo} leads to a smaller $T$. On the other hand, when $\gamma d\ll d_{\text{max}}\ll n$, using the binning method before Algorithm \ref{alg:adaptive_algo} leads to a smaller $T$.

\subsection{Algorithm for $\rho$-Sized Tests} \label{sec:adap_algo_rho}

\begin{algorithm}[t]
    \begin{algorithmic}[1]
        \REQUIRE Population of items, number of items $n$, number of defective items $d$, and test size restriction $\rho$
        \STATE Initialize defective set $\mathcal{D}\leftarrow\emptyset$
        \STATE Randomly group $n$ items into $n/\rho$ groups of size $\rho$
        \FOR{each group $G_i$ where $i\in\{1,2,\dots,n/\rho\}$}{
        \WHILE{testing $G_i$ returns a positive outcome}{
        \STATE run Algorithm 4 on $G_i$ and add its one defective item output $d^*$ into $\mathcal{D}$
        \STATE $G_i\leftarrow G_i\setminus\{d^*\}$
        }\ENDWHILE
        }\ENDFOR
        \RETURN $\mathcal{D}$
    \end{algorithmic}
    \caption{Adaptive algorithm for $\rho$-sized tests \label{alg:adaptive_algo_rho}}
\end{algorithm}

\begin{algorithm}[t]
    \begin{algorithmic}[1]
        \REQUIRE a group of items $G_i$
        \STATE If $G_i$ consists of a single item, return that item.
        \STATE Pick half of the items in $G_i$ and call this set $G_i'$. Perform a single test on $G_i'$.
        \STATE If the test is positive, set $G_i\leftarrow G_i'$. Otherwise, set $G_i\leftarrow G_i\setminus G_i'$. Return to step 1.
        \RETURN $\mathcal{D}$
    \end{algorithmic}
    \caption{Binary splitting sub-routine \label{alg:binary_splitting}}
\end{algorithm}

While our main focus is on the $\gamma$-divisible constraint (motivated by it having larger gaps in the bounds \cite{Ven19}), here we briefly pause to provide a simple adaptive algorithm for the $\rho$-sized test constraint, shown in Algorithm \ref{alg:adaptive_algo_rho}. This is a direct modification of Hwang's generalized binary splitting algorithm \cite{Hwa72}, in which we divide the $n$ items into $\frac{n}{\rho}$ groups of size $\rho$, instead of $d$ groups of size $\frac{n}{d}$ as in the original algorithm.

\textit{Analysis:} Let $d_1,\dotsc,d_{\frac{n}{\rho}}$ be the number of defective items in each of the initial $\frac{n}{\rho}$ groups. Note that the assumption $\rho\in o(\frac{n}{d})$ (see Section \ref{sec:previous}) implies $d\in o(\frac{n}{\rho})$, most groups will not have a defective item.  In the binary splitting stage of the algorithm, we can round the halves in either direction if they are not an integer. Hence, for each of the initial $\frac{n}{\rho}$ groups, we take at most $\lceil\log_2\rho\rceil$ adaptive tests to find a defective item, or one test to confirm that there are no defective item. Therefore, for each of the initial $\frac{n}{\rho}$ groups, we need $\max\{1,d_i\log_2\rho+O(d_i)\}$ tests to find $d_i$ defective items. Summing across all $\frac{n}{\rho}$ groups, we need a total of $T=\sum_{i=1}^{n/\rho}\max\{1,d_i\log_2\rho+O(d_i)\}$ tests. This leads to the following upper bound:
\begin{align}
    T&\leq\frac{n}{\rho}+d\log_2\rho+O(d)\\
    &\stackrel{(a)}{=}\frac{n}{\rho}(1+o(1))+d\log_2\rho,
\end{align}
where (a) uses $d\in o\big(\frac{n}{\rho}\big)$. With the further condition $\rho\in O\big(\frac{n}{d\log(n/d)}\big)$, we have $\frac{n}{\rho}\in\Omega\big(d\log\big(\frac{n}{d}\big)\big)$ and $d\log\rho\in o\big(d\log\big(\frac{n}{d}\big)\big)$. Thus, we can further simplify to get
\begin{align}
    T\leq\frac{n}{\rho}(1+o(1)).
\end{align}
This upper bound is tight in the sense that attaining vanishing error probability trivially requires a fraction $1-o(1)$ of the items to be tested at least once, which implies $T\geq\frac{n}{\rho}(1 - o(1))$ by the $\rho$-sized test constraint.

\subsection{Proof of Theorem \ref{thm:counting_bound} (Counting-Based Bound)} \label{sec:counting_bound_proof}
Given a population of $n$ objects, we write $\Sigma_{n,d}$ for the collection of subsets of size $d$ from the population. Furthermore, we write $\mathcal{D}$ for the true defective set.

We follow the steps of \cite{Bal13} as follows: The testing procedure defines a mapping $\theta:\Sigma_{n,d}\rightarrow\{0,1\}^T$. Given a putative defective set $S\in\Sigma_{n,d}$, $\theta(S)$ is the vector of test outcomes, with positive tests represented as 1s and negative tests represented as 0s. For each $\textbf{y}\in\{0,1\}^T$, we write $\mathcal{A}_{\textbf{y}}\subseteq\Sigma_{n,d}$ for the inverse image of \textbf{y} under $\theta$:
\begin{align}
    \mathcal{A}_{\textbf{y}}=\theta^{-1}(\textbf{y})=\{S\in\Sigma_{n,d}:\theta(S)=\textbf{y}\}.
\end{align}

The role of an algorithm that decodes the outcome of the tests is to mimic the effect of the inverse image map $\theta^{-1}$. Given a test output \textbf{y}, the optimal decoding algorithm would use a lookup table to find the inverse image $\mathcal{A}_{\textbf{y}}$. If this inverse image $\mathcal{A}_{\textbf{y}}=\{S\}$ has size $|\mathcal{A}_{\textbf{y}}|=1$, we can be certain that the defective set was $S$. In general, if $|\mathcal{A}_{\textbf{y}}|\geq1$, we cannot do better than pick uniformly among $\mathcal{A}_{\textbf{y}}$, with success probability $\frac{1}{|\mathcal{A}_{\textbf{y}}|}$ (We can ignore empty $\mathcal{A}_{\textbf{y}}$, since we are only concerned with vectors \textbf{y} that occur as a test output).

Hence, overall, the probability of recovering a defective set $S$ is $\frac{1}{|\mathcal{A}_{\theta(S)}|}$, depending only on $\theta(S)$. We can write the following expression for the success probability, conditioning over all the equiprobable values of the defective set:
\begin{align}
    \mathbb{P}(\text{suc})&\stackrel{(a)}{=}\sum_{S\in\Sigma_{n,d}}\mathbb{P}(\text{suc}|\mathcal{D}=S)\frac{1}{{n\choose d}}\\
    &=\frac{1}{{n\choose d}}\sum_{S\in\Sigma_{n,d}}\sum_{\textbf{y}\in\{0,1\}^T}\mathds{1}(\theta(S)=\textbf{y})\mathbb{P}(\text{suc}|\mathcal{D}=S)\\
    &=\frac{1}{{n\choose d}}\sum_{S\in\Sigma_{n,d}}\sum_{\textbf{y}\in\{0,1\}^T: |\mathcal{A}_{\textbf{y}}|\geq1}\mathds{1}(\theta(S)=\textbf{y})\frac{1}{|\mathcal{A}_{\textbf{y}}|}\\
    &=\frac{1}{{n\choose d}}\sum_{\textbf{y}\in\{0,1\}^T \,:\, |\mathcal{A}_{\textbf{y}}|\geq1}\frac{1}{|\mathcal{A}_{\textbf{y}}|}\Big(\sum_{S\in\Sigma_{n,d}}\mathds{1}(\theta(S)=\textbf{y})\Big)\\
    &=\frac{1}{{n\choose d}}\sum_{\textbf{y}\in\{0,1\}^T \,:\, |\mathcal{A}_{\textbf{y}}|\geq1}\frac{1}{|\mathcal{A}_{\textbf{y}}|}|\mathcal{A}_{\textbf{y}}|\\
    &=\frac{|\{\textbf{y}\in\{0,1\}^T \,:\, |\mathcal{A}_{\textbf{y}}|\geq1\}|}{{n\choose d}}\\
    &\stackrel{(b)}{\leq}\frac{|\{\text{\textbf{y} with $\leq\gamma d$ ones}\}|}{{n\choose d}}
    =\frac{\sum_{i=0}^{\gamma d}{T\choose i}}{{n\choose d}},
\end{align}
where (a) uses the law of total probability and the uniform prior on $\mathcal{D}$, and (b) uses the fact that at most $\gamma d$ test outcomes can be positive, even in the adaptive setting. This is because adding another defective always introduces at most $\gamma$ additional positive tests.

\subsection{Proof of Theorem \ref{thm:gamma_lower_bound_adap} (General Converse for $\gamma$-Divisible Items)} \label{sec:gamma_lowerbound_adap_proof}

From the counting bound in (2), we upper bound the sum of binomial coefficients \cite[Section~4.7.]{Ash90} to obtain
\begin{align}
    \mathbb{P}(\text{suc})&\leq\frac{e^{TH_2(\frac{\gamma d}{T})}}{{n\choose d}}\equiv\delta, \label{eq:weak_counting_bound}
\end{align}
where $H_2(\cdot)$ is the binary entropy function in nats. From \eqref{eq:weak_counting_bound}, we have $e^{TH_2(\frac{\gamma d}{T})}/{n\choose d}=\delta$, which implies that
\begin{align}
    \log\bigg(\delta{n\choose d}\bigg)&=TH_2\Big(\frac{\gamma d}{T}\Big)\\
    &=\gamma d\log{\frac{T}{\gamma d}}+(T-\gamma d)\log\frac{1}{1-\frac{\gamma d}{T}}\\
    &\stackrel{(a)}{=}\gamma d\log\frac{T}{\gamma d}+\gamma d(1+o(1)), \label{eq:simplified_entropy}
\end{align}
where (a) uses a Taylor expansion and the fact that $\frac{\gamma d}{T}\in o(1)$ from \eqref{eq:T_gamma_d_ineq}; hence, we have $(1-\frac{\gamma d}{T})^{-1}=\exp(\frac{\gamma d}{T})(1+o(1))$ which is used to obtain the simplification. Rearranging \eqref{eq:simplified_entropy}, we obtain
\begin{align}
    \gamma d\log\frac{T}{\gamma d}&=\log\bigg(\delta{n\choose d}\bigg)-\gamma d(1+o(1))\\
    \implies \log\frac{T}{\gamma d}&=\frac{1}{\gamma d}\log\bigg(\delta{n\choose d}\bigg)-(1+o(1)),
\end{align}
which gives
\begin{align}
    T&=e^{-(1+o(1))}\gamma d\bigg(\delta{n\choose d}\bigg)^{\frac{1}{\gamma d}}\\
    &\stackrel{(a)}{\geq}e^{-(1+o(1))}\gamma d\delta^{\frac{1}{\gamma d}}\Big(\frac{n}{d}\Big)^{\frac{1}{\gamma}}, \label{eq:tests}
\end{align}
where (a) follows from the fact that ${n\choose d}\geq\big(\frac{n}{d}\big)^d$.

The proof is completed by noting that for a fixed target success probability $\delta=1-\epsilon$, $\delta^{1/(\gamma d)}\rightarrow1$ as $\gamma d\rightarrow\infty$.

\subsection{Proof of Theorem \ref{thm:gamma_upperbound_adap} (Adaptive Algorithm Performance)} \label{sec:gamma_upperbound_adap_proof}

Similar to Hwang's generalized binary splitting algorithm \cite{Hwa72}, the idea behind the parameter $M$ in Algorithm \ref{alg:adaptive_algo} is that when $d$ becomes large, having large groups during the initial splitting stage is wasteful, as it results in each test having a very high probability of being positive (not very informative). Hence, we want to find the appropriate group sizes that result in more informative tests to minimize the number of tests. 

Each stage (outermost for-loop in Algorithm \ref{alg:adaptive_algo}) here refers to the process where all groups of the same sizes are split into smaller groups (as seen in Figure \ref{fig:adaptive_algo}). We let $M$ be the group size at the initial splitting stage of the algorithm. The algorithm first tests $n/M$ groups of size $M$ each,\footnote{Note that $\frac{n}{M}$ is an integer for our chosen $M$ following \eqref{eq:upper_bound}, which gives $\frac{n}{M}=d(\frac{n}{d})^{1/\gamma}$, and $(\frac{n}{d})^{1/\gamma}$ was assumed to be an integer earlier.} then steadily decrease the sizes of each group down the stages: $M\rightarrow M^{1-1/(\gamma-1)}\rightarrow M^{1-2/(\gamma-1)}\rightarrow\dots\rightarrow 1$ (see Figure \ref{fig:adaptive_algo} for visualization). Hence, we have $n/M$ groups in the initial splitting and $M^{\frac{1}{\gamma-1}}$ groups in all subsequent splits. 

With the above observations, we can derive an upper bound on the total number of tests needed. We have $n/M$ tests in the first stage. Since we have $d$ defectives and split into $M^{\frac{1}{\gamma-1}}$ sub-groups in subsequent stages, the number of smaller groups that each stage can produce is at most $dM^{\frac{1}{\gamma-1}}$. This implies that the number of tests conducted at each stage is at most $dM^{\frac{1}{\gamma-1}}$, giving the following bound on $T$:
\begin{align}
    T &\leq \frac{n}{M}+(\gamma-1)dM^{\frac{1}{\gamma-1}} \label{eq:upper_bound}.
\end{align}
We optimize with respect to $M$ by differentiating the upper bound and setting it to zero, which gives $M=(\frac{n}{d})^{\frac{\gamma-1}{\gamma}}$. Substituting $M=(\frac{n}{d})^{\frac{\gamma-1}{\gamma}}$ into the general upper bound in \eqref{eq:upper_bound}, we obtain the following upper bound:
\begin{align}
    T\leq\frac{n}{(\frac{n}{d})^{\frac{\gamma-1}{\gamma}}}+(\gamma-1)d\bigg[\Big(\frac{n}{d}\Big)^{\frac{\gamma-1}{\gamma}}\bigg]^{\frac{1}{\gamma-1}}
    % &=d^{\frac{\gamma-1}{\gamma}}n^{1-\frac{\gamma-1}{\gamma}}+d(\gamma-1)\Big(\frac{n}{d}\Big)^{\frac{1}{\gamma}}\\
    % &=d^{1-\frac{1}{\gamma}}n^{\frac{1}{\gamma}}+d(\gamma-1)\Big(\frac{n}{d}\Big)^{\frac{1}{\gamma}}\\
    % &=d\Big(\frac{n}{d}\Big)^{\frac{1}{\gamma}}+d(\gamma-1)\Big(\frac{n}{d}\Big)^{\frac{1}{\gamma}}\\
    =\gamma d\Big(\frac{n}{d}\Big)^{\frac{1}{\gamma}}. \label{eq:adaptive_upper_bound}
\end{align}

\section{The Non-Adaptive Setting} \label{ch:sparse_nonadap_group_testing}

In this section, we develop bounds and algorithms for the non-adaptive setting with $\gamma$-divisible items.

\subsection{Converse Bound for the Near-Constant Tests-Per-Item Design} \label{sec:non_adap_info_theo_lower_bound}

In this section, we present an information-theoretic lower bound on the number of tests for the near-constant test-per-item random design with parameter $\gamma$.  Note that this is in contrast to Theorems \ref{thm:gamma_lowerbound} and \ref{thm:gamma_lower_bound_adap}, which hold for {\em arbitrary} non-adaptive test designs.  Of course, lower bounds for arbitrary designs are generally preferable; however, the converse specific to the random design will be seen to be significantly tighter in denser scaling regimes.  See \cite{Ald14a,Joh16,Li19} for similar design-specific converse bounds in other contexts.

We follow the high-level approach of \cite{Ald15}, showing that if both the Combinatorial Orthogonal Matching Pursuit (COMP) algorithm \cite{Cha14} and the Smallest Satisfying Set (SSS) algorithm \cite{Ald14a} fail, then so does any algorithm.  We proceed by introducing these algorithms formally.

\begin{definition}
    The COMP algorithm for noiseless non-adaptive group testing is given as follows: Mark each item that appears in a negative test as non-defective, and refer to every other item as a possibly defective. We write $\mathcal{PD}$ for the set of such items, yielding $\widehat{\mathcal{D}}_{\text{COMP}} = \mathcal{PD}$.
\end{definition}
We observe that the COMP algorithm succeeds if and only if every non-defective item is included in at least one negative test.

For the SSS algorithm, we first state a key definition, and then describe the algorithm.
\begin{definition}
    We say that a putative defective set $\mathcal{J}$ is \textit{satisfying} if:
    \begin{enumerate}
        \item No negative test contains a member of $\mathcal{J}$.
        \item Every positive test contains at least one member of $\mathcal{J}$.
    \end{enumerate}
\end{definition}
\begin{definition}
    The SSS algorithm for noiseless non-adaptive group testing is given as follows: Find the smallest satisfying set (breaking ties arbitrarily), and take that as the estimate $\widehat{\mathcal{D}}_{\text{SSS}}$.  
\end{definition}
Note that the true defective set $\mathcal{D}$ is certainly a satisfying set, and hence SSS is guaranteed to return a set of no larger size, giving $|\widehat{\mathcal{D}}_{\text{SSS}}| \leq d$. In addition, we can identify a particular failure event for SSS \cite{Ald14a}: If a defective item $i\in\mathcal{D}$ is not the unique defective item in any positive test, then $\mathcal{D}\setminus\{i\}$ will be a smaller satisfying set than $\mathcal{D}$, so SSS is certain to fail.

Following the above outline, our algorithm-independent converse for the near-constant tests-per-item design will be a simple corollary to the following theorem.

\begin{theorem} \label{thm:gamma_lowerbound_ours_nearconstantwtdesign}
    {\em (Design-Specific Converse for COMP and SSS)}
    Under the near-constant tests-per-item design, with $d\in\Theta(n^\theta)$ for some positive constant $\theta\in(0,1)$, and $\gamma\in\Theta\big((\log n)^c\big)$ tests per item for some $c \in [0,1)$, if
    \begin{equation}
        T = \gamma d^{1 + 1/\gamma} (1-\zeta) \label{eq:T_choice}
    \end{equation}
    for fixed $\zeta \in (0,1)$, then we have
    \begin{gather}
          \mathbb{P}( \widehat{\mathcal{D}}_{\text{COMP}} \ne \mathcal{D} ) = 1-o(1) \\
          \mathbb{P}( |\widehat{\mathcal{D}}_{\text{SSS}}| < d ) \ge \frac{1 - o(1)}{(1 - \zeta + o(1))^{\gamma} + 1}. \label{eq:prob_SSS}
    \end{gather}
\end{theorem}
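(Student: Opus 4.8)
The plan is to analyze the COMP failure and SSS failure events separately under the near-constant tests-per-item design, where each of the $n$ items independently selects $\gamma$ tests uniformly at random with replacement from the $T$ available tests. Throughout I would use the standard fact that for a fixed defective set $\mathcal D$ of size $d$, a given test is negative with probability roughly $\bigl(1 - \tfrac{1}{T}\bigr)^{\gamma d} \approx e^{-\gamma d / T}$, and with the choice $T = \gamma d^{1+1/\gamma}(1-\zeta)$ one gets $\gamma d / T = d^{-1/\gamma}/(1-\zeta)$, so the probability a test is negative is $\exp\bigl(-d^{-1/\gamma}/(1-\zeta) + o(\cdot)\bigr)$. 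This is the single quantity that drives both bounds.

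\textbf{COMP.} COMP fails iff some non-defective item appears in no negative test. Fix a non-defective item $i$; conditioned on the placements of the $\gamma d$ defective test-slots, the probability that a particular one of item $i$'s $\gamma$ chosen tests is negative is (up to lower-order corrections) $p_{\mathrm{neg}} := e^{-\gamma d/T}$, so the probability that item $i$ avoids every negative test is at least $(1 - p_{\mathrm{neg}})^{\gamma}$ — actually I should be careful and lower-bound the probability that item $i$ lands only in positive tests. The cleanest route is a second-moment / Paley–Zygmund argument on $N = $ number of non-defective items hitting no negative test: show $\E[N] \to \infty$ and $\Var(N) = o(\E[N]^2)$, using that the placements of distinct non-defective items are independent given the defective configuration, and that the conditioning on the (random) number and location of negative tests concentrates. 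Since $\gamma d / T = d^{-1/\gamma}/(1-\zeta) \to 0$ in the stated regime (because $d = \Theta(n^\theta)$ and $\gamma = \Theta((\log n)^c)$ with $c<1$ force $d^{1/\gamma} \to \infty$), we have $p_{\mathrm{neg}} \to 1$, hence $\E[N] \approx (n-d) p_{\mathrm{neg}}^{\gamma} = (n-d)\exp(-\gamma d^{-1/\gamma}/(1-\zeta)) = (n-d)\exp(-d^{1-1/\gamma}\cdot\text{something})$... I must double-check this scaling gives $\E[N]\to\infty$; the point of choosing $T$ just below $\gamma d^{1+1/\gamma}$ is precisely that $\gamma \cdot \gamma d/T = \gamma d^{-1/\gamma}/(1-\zeta)$ and $n p_{\mathrm{neg}}^{\gamma} \to \infty$ exactly in this window, which is the same computation appearing in \cite{Joh16,Ald14a}. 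Conclude $\mathbb P(N \ge 1) \to 1$, i.e.\ COMP fails w.h.p.

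\textbf{SSS.} SSS has $|\widehat{\mathcal D}_{\mathrm{SSS}}| < d$ whenever some defective item $i$ is \emph{masked}, i.e.\ $i$ is not the unique defective in any positive test: then $\mathcal D\setminus\{i\}$ is satisfying. Fix $i\in\mathcal D$. For each of item $i$'s $\gamma$ chosen tests, that test fails to be "uniquely-$i$" if at least one of the other $d-1$ defectives also lands in it, which has probability $1 - (1-1/T)^{\gamma(d-1)} \approx 1 - e^{-\gamma d/T} = 1 - p_{\mathrm{neg}}$. Since the choices across item $i$'s $\gamma$ tests are (nearly) independent, $\mathbb P(i \text{ masked}) \gtrsim (1 - p_{\mathrm{neg}})^{\gamma}$... but wait, $p_{\mathrm{neg}} \to 1$ so $(1-p_{\mathrm{neg}})^\gamma \to 0$, and the denominator $(1-\zeta+o(1))^\gamma + 1$ in \eqref{eq:prob_SSS} suggests the intended computation is slightly different: $1 - p_{\mathrm{neg}} = 1 - e^{-\gamma d/T}$, and $\gamma d/T = d^{-1/\gamma}/(1-\zeta)$ is small, so $1 - p_{\mathrm{neg}} \approx \gamma d/T = \tfrac{1}{1-\zeta} d^{-1/\gamma}$; raising to the $\gamma$ gives $(1-\zeta)^{-\gamma} d^{-1}$, so $\mathbb P(i\text{ masked}) \gtrsim (1-\zeta)^{-\gamma}/d$, and summing/using inclusion-exclusion over the $d$ defectives gives a constant of order $\frac{d\cdot(1-\zeta)^{-\gamma}/d}{1 + d\cdot(1-\zeta)^{-\gamma}/d} = \frac{(1-\zeta)^{-\gamma}}{1+(1-\zeta)^{-\gamma}} = \frac{1}{(1-\zeta)^{\gamma}+1}$, which is exactly \eqref{eq:prob_SSS}. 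So the SSS step is: lower-bound $\mathbb P(\exists\, i\in\mathcal D \text{ masked})$ by a second-moment (Bonferroni) bound on the count of masked defectives, $Z$, showing $\E[Z] \to (1-\zeta)^{-\gamma}$ (or $\gtrsim$ that) and controlling $\E[Z^2]$ via the near-independence of different defectives' test choices, then apply $\mathbb P(Z\ge 1) \ge \E[Z]^2/\E[Z^2] \ge \frac{1-o(1)}{(1-\zeta+o(1))^\gamma + 1}$.

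\textbf{Main obstacle.} The delicate part is handling the "with replacement" feature and the conditioning cleanly: the events "item $i$'s test-choices" and the locations of negative/positive tests are not literally independent (a test's status depends on which defectives chose it, and a defective may choose the same test twice), so the Taylor approximations $e^{-\gamma d/T}$ and the independence-across-$\gamma$-choices heuristic each incur $o(1)$ corrections that must be shown uniform. Concretely, the second-moment computation for $Z$ (masked defectives) requires bounding the covariance between "$i$ masked" and "$j$ masked" for $i\ne j\in\mathcal D$, where the coupling comes through shared tests; I would bound this by conditioning on the test-choices of all defectives other than $i,j$ and showing the residual dependence contributes only a $1+o(1)$ factor, exploiting $\gamma d/T \to 0$. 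This is the same type of estimate carried out in \cite{Joh16} and \cite{Ald14a}, so I would follow those computations closely, and it is where essentially all the work lies; the COMP bound is comparatively routine given the same toolkit.
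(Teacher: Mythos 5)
Your proposal follows essentially the same route as the paper's proof: the paper likewise lower-bounds $\mathbb{P}(|\widehat{\mathcal{D}}_{\text{SSS}}|<d)$ through the count of defectives masked by the other defectives, combining concentration of the number of distinct positive tests (a coupon-collector argument) with a de Caen/second-moment-type bound in which the pairwise probabilities $\mathbb{P}(A_i\cap A_j)$ are shown to be a $(1+o(1))$ multiple of the product via a counting argument adapted from \cite{Joh16} (precisely the covariance control you defer to that reference), and it handles COMP by the same conditional-independence computation you note, arriving at the same constants $\frac{1-o(1)}{(1-\zeta+o(1))^{\gamma}+1}$ and $1-o(1)$. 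Minor slips in your sketch (e.g., $\E[N]\approx (n-d)p_{\mathrm{neg}}^{\gamma}$ should read $(n-d)(1-p_{\mathrm{neg}})^{\gamma}$, and the negligibility of the residual $i$--$j$ dependence really rests on $\gamma^2 T/w^2\to 0$, i.e.\ $\gamma d^{1/\gamma}\ll d$, rather than on $\gamma d/T\to 0$ alone) do not change the argument.
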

\begin{proof}
    See Section \ref{sec:gamma_lowerbound_ours_proof_SSS}.
\end{proof}

In the scaling regime $\gamma\in\Theta\big((\log n)^c\big)$ for some $c\in[0,1)$, the right-hand side of \eqref{eq:prob_SSS} approaches one if $c>0$ (large $\gamma$), is close to one if $c=0$ (constant $\gamma$) as long as $\gamma$ is large compared to $\frac{1}{\zeta}$, and is always at least $\frac{1}{2} + o(1)$.

\begin{corollary} \label{cor:gamma_lowerbound_ours_nearconstantwtdesign} 
    {\em (Design-Specific Converse for Arbitrary Algorithms)}
    Under the near-constant tests-per-item design, with $d\in\Theta(n^\theta)$ for some positive constant $\theta\in(0,1)$, and $\gamma\in\Theta\big((\log n)^c\big)$ tests per item for some $c \in [0,1)$, if
    \begin{equation}
        T \le \gamma d^{1 + 1/\gamma} (1-\zeta) \label{eq:T_choice2}
    \end{equation}
    for some $\zeta > 0$, then the error probability is bounded away from zero regardless of the decoding algorithm.
\end{corollary}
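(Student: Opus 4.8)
\emph{Proof proposal.} The plan is to derive Corollary~\ref{cor:gamma_lowerbound_ours_nearconstantwtdesign} from Theorem~\ref{thm:gamma_lowerbound_ours_nearconstantwtdesign} via the standard ``if COMP and SSS both fail then every algorithm fails'' reduction of \cite{Ald15}. First I would recall the observation underpinning this reduction: for the observed pair $(\mathsf{X},\mathbf{y})$, a size-$d$ set $S$ is \emph{satisfying} in the sense of the definition above if and only if it is \emph{consistent} with $(\mathsf{X},\mathbf{y})$, i.e., the outcome vector produced by defective set $S$ under design $\mathsf{X}$ is exactly $\mathbf{y}$ (condition~1 keeps the negative tests negative, condition~2 keeps the positive tests positive). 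Since the near-constant design $\mathsf{X}$ is independent of $\mathcal{D}$ and the prior on $\mathcal{D}$ is uniform over size-$d$ sets, the posterior of $\mathcal{D}$ given $(\mathsf{X},\mathbf{y})$ is uniform over the collection of size-$d$ satisfying sets. Hence, on any realization of $(\mathsf{X},\mathbf{y})$ for which this collection has cardinality at least $2$, every (possibly randomized) decoder errs with conditional probability at least $\tfrac12$, and this conditional bound is unaffected by the decoder's internal randomness.

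Next I would exhibit a data-measurable event on which there are at least two size-$d$ satisfying sets. Let $E$ be the event that (i) $|\mathcal{PD}|\ge d+1$ and (ii) the smallest satisfying set has size strictly less than $d$; both conditions are functions of $(\mathsf{X},\mathbf{y})$ only. On $E$, fix a satisfying set $S$ with $|S|\le d-1$; since no member of $S$ lies in a negative test, $S\subseteq\mathcal{PD}$, and so $\mathcal{PD}\setminus S$ has at least $d+1-|S|\ge 2$ elements. Adjoining any $d-|S|$ of these to $S$ produces a size-$d$ set $S'$ that is again satisfying (condition~2 is inherited from $S\subseteq S'$, and condition~1 holds since the adjoined items lie in $\mathcal{PD}$), and there are $\binom{d+1-|S|}{\,d-|S|\,}\ge 2$ distinct such $S'$. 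Thus $E$ forces at least two size-$d$ satisfying sets, so $\mathbb{P}(\text{error}\mid\mathsf{X},\mathbf{y})\ge\tfrac12$ on $E$, and therefore $\mathbb{P}(\text{error})\ge\tfrac12\,\mathbb{P}(E)$.

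It then remains to lower-bound $\mathbb{P}(E)$ using Theorem~\ref{thm:gamma_lowerbound_ours_nearconstantwtdesign}. Because $\mathcal{D}\subseteq\mathcal{PD}$ always and $|\mathcal{D}|=d$, condition~(i) coincides with $\{\widehat{\mathcal{D}}_{\text{COMP}}\ne\mathcal{D}\}$, while condition~(ii) coincides with $\{|\widehat{\mathcal{D}}_{\text{SSS}}|<d\}$. A union bound gives $\mathbb{P}(E)\ge\mathbb{P}(|\widehat{\mathcal{D}}_{\text{SSS}}|<d)-\mathbb{P}(\widehat{\mathcal{D}}_{\text{COMP}}=\mathcal{D})\ge\frac{1-o(1)}{(1-\zeta+o(1))^{\gamma}+1}-o(1)$, which is bounded away from zero (at least $\tfrac12-o(1)$ for fixed $\zeta$, and tending to $1$ when $\gamma\to\infty$). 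This settles the boundary value $T=\gamma d^{1+1/\gamma}(1-\zeta)$; for $T\le\gamma d^{1+1/\gamma}(1-\zeta)$ one reparametrizes, writing $T=\gamma d^{1+1/\gamma}(1-\zeta')$ with $\zeta'\ge\zeta>0$, and notes that the right-hand side of the above display is nondecreasing in $\zeta'$, so the same positive lower bound on $P_e$ persists.

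The main obstacle I anticipate is making the first step fully rigorous: one must verify that the ``indistinguishability'' event $E$ is genuinely a function of the observed data $(\mathsf{X},\mathbf{y})$ alone (so that the posterior-is-uniform computation and the conditioning argument are valid), and one must check that the combinatorial extension in the second step really requires both failures---SSS failing supplies a single small satisfying set, and it is precisely the slack $|\mathcal{PD}|\ge d+1$ coming from COMP's failure that upgrades this to two \emph{distinct} satisfying sets of the \emph{correct} size $d$. All the probabilistic content is inherited from Theorem~\ref{thm:gamma_lowerbound_ours_nearconstantwtdesign}, so no new concentration estimates are needed.
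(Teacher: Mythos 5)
Your proposal is correct and takes essentially the same route as the paper: combine Theorem \ref{thm:gamma_lowerbound_ours_nearconstantwtdesign} with the reduction of \cite{Ald15} stating that simultaneous COMP and SSS failure defeats every decoder, which gives exactly the bound $P_e \ge \tfrac{1}{2}\big(\mathbb{P}(|\widehat{\mathcal{D}}_{\text{SSS}}|<d)+\mathbb{P}(\widehat{\mathcal{D}}_{\text{COMP}}\neq\mathcal{D})-1\big)$ that you obtain. The only difference is presentational: the paper cites this reduction (and dismisses the case of strict inequality in \eqref{eq:T_choice2} by noting a decoder may ignore tests), whereas you re-derive it inline via the uniform posterior over satisfying sets and the two-satisfying-sets argument, and handle smaller $T$ by reparametrizing $\zeta$ — both treatments of the sub-threshold case being at a comparable (informal) level of rigor.
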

\begin{proof}
    It was proved in \cite{Ald15} that if $\mathbb{P}( \widehat{\mathcal{D}}_{\text{COMP}} \ne \mathcal{D} ) + \mathbb{P}( |\widehat{\mathcal{D}}_{\text{SSS}}| < d ) \ge 1 + \epsilon$ for some $\epsilon>0$, then the error probability is at least $\frac{\epsilon}{2}$ for an arbitrary algorithm.  Hence, the desired result follows immediately from Theorem \ref{thm:gamma_lowerbound_ours_nearconstantwtdesign}; it suffices to consider \eqref{eq:T_choice2} holding with equality, because any decoding algorithm can always choose to ignore some of the tests.
\end{proof}

We observe that the converse in Corollary \ref{cor:gamma_lowerbound_ours_nearconstantwtdesign} is tighter (i.e., has a higher lower bound on $T$) than that of Theorem \ref{thm:gamma_lowerbound} when $d$ is ``large'', i.e., when $\theta$ is above $\frac{1}{2}$, or in particular, close to one.

\subsection{Analysis of the DD Algorithm with $\gamma$-Divisible Items} \label{sec:non_adap_algo_gamma}

We continue focus on the random near-constant tests-per-item design for the $\gamma$-divisible items constraint, where $\gamma\in o\big(\log\big(\frac{n}{d}\big)\big)$ tests are chosen uniformly at random with replacement for each item.   However, we now turn our attention to upper bounds.
% More specifically, $\gamma$ entries of each column in $\mathsf{X}$ are selected uniformly at random with replacement and set to one. The remaining entries are set to zero. Some items may be in fewer than $\gamma$ tests, hence the terminology ``near-constant".

We will use the Definite Defectives (DD) decoding algorithm \cite{Ald14a}, which is defined as follows.
\begin{definition}
    The Definite Defectives (DD) algorithm for noiseless non-adaptive group testing has two keys steps.
    \begin{enumerate}
        \item Since $y_t=1$ if and only if the test pool contains a defective item, we can be sure that each item that appears in a negative test is not defective. We form a list of such items from all the negative tests, which we refer to as the guaranteed non-defective ($\mathcal{ND}$) set. The rest of the items, $ \mathcal{PD} :=\{1,\dots,n\}\setminus\mathcal{ND}$, form the possibly defective ($\mathcal{PD}$) set.
        \item Since every positive test must contain at least one defective item, if a test with $Y=1$ contains exactly one item from  $\mathcal{PD}$, then we can be certain that the item in question is defective. The DD algorithm estimates $\mathcal{D}$ using $\widehat{\mathcal{D}}$ to be the set of $\mathcal{PD}$ items which appear in a positive test with no other $\mathcal{PD}$ item.
    \end{enumerate}
\end{definition}

Note that the first step is the same as COMP; it never makes a mistake in adding to $\mathcal{ND}$ (items are correctly marked as non-defective).  Similarly, the second step never makes a mistake in adding to $\widehat{\mathcal{D}}$ (items are correctly marked as defective). Hence, any errors due to DD come from marking a true defective as non-defective in the second step, meaning that $\widehat{\mathcal{D}}\subseteq\mathcal{D}$. The choice to mark all remaining items as non-defective is motivated by the sparsity of the problem (recall that $d\in o(n)$), since \textit{a priori} an item is much less likely to be defective than non-defective. 

By analyzing the DD algorithm, we obtain the following theorem.
\begin{theorem} \label{thm:gamma_upperbound_nonadap}
    For $\gamma\in\Theta\big((\log n)^c\big)$ for some $c\in[0,1)$, $d\in\Theta(n^\theta)$ for some $\theta\in(0,1)$, $\alpha_2\in(0,1)$, and any function $\beta_n$ decaying as $n$ increases, under the near-constant tests-per-item design with parameter $\gamma$ and a number of tests given by 
    \begin{align}
        T=\gamma d\max
        \bigg\{e^{\frac{1}{\alpha_2}H_2(\max\{\alpha_2,\frac{1}{2}\})}\Big(\frac{d}{\beta_n}\Big)^{\frac{1}{\alpha_2\gamma}},
        2^{1/\gamma}\Big(\frac{n-d}{d}\Big)^{\frac{1}{\gamma}}\Big(\frac{d}{\beta_n}\Big)^{\frac{1}{(1-\alpha_2)\gamma^2}}\bigg\}, \label{eq:T_long}
    \end{align}
    the DD algorithm ensures an error probability of at most
    \begin{align}
        P_e \le \exp\bigg(-\frac{3d}{16}\Big(\frac{\beta_n}{d}\Big)^{\frac{1}{(1-\alpha_2)\gamma)}}\bigg)+2\exp(-2(\gamma d)^{1/3})+2\beta_n(1+o(1)).
    \end{align}
\end{theorem}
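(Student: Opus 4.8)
# Proof Proposal for Theorem~\ref{thm:gamma_upperbound_nonadap}

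\textbf{Overall strategy.} The plan is to bound the DD error probability via the standard two-event decomposition: (i) the ``$\mathcal{ND}$-stage'' failure, where some non-defective item is never placed in a negative test (so it contaminates $\mathcal{PD}$), and (ii) the ``$\mathcal{DD}$-stage'' failure, where some defective item is never the unique $\mathcal{PD}$-member of a positive test. Since DD is one-sided ($\widehat{\mathcal{D}} \subseteq \mathcal{D}$ always), $P_e$ is at most the probability that some defective is missed, which we split as $P_e \le \mathbb{P}(\text{some non-defective masks a defective}) \le \mathbb{P}(E_{\mathcal{ND}}) + \mathbb{P}(E_{\mathcal{DD}} \mid E_{\mathcal{ND}}^c)$. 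I will also condition throughout on a high-probability ``typical mask'' event, namely that the number of \emph{masked} tests (tests containing $\ge 1$ defective) is close to its expectation; this is where the $2\exp(-2(\gamma d)^{1/3})$ term comes from, via a concentration bound (Azuma/McDiarmid on the placement choices, or a Poissonization argument as in \cite{Joh16}) on the count of positive tests around $\gamma d(1 - o(1))$ — using $\gamma d \to \infty$.

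\textbf{Step 1: Controlling the $\mathcal{ND}$ stage.} Fix a non-defective item $j$. Under the near-constant design it lands in $\gamma$ uniformly random tests (with replacement). Conditioned on the defective placements, the number of ``disguised'' tests (positive tests) is some $T' \le \gamma d$; item $j$ fails to be cleared iff all $\gamma$ of its tests land among those $T'$ disguised tests, which has probability at most $(T'/T)^{\gamma} \le (\gamma d / T)^{\gamma}$ (after conditioning on the typical-mask event to get $T' \le \gamma d$). A union bound over $n - d$ non-defectives gives $\mathbb{P}(E_{\mathcal{ND}}) \le (n-d)(\gamma d/T)^{\gamma}$. Plugging in the second branch of the $\max$ in \eqref{eq:T_long} — which forces $T \ge \gamma d \cdot 2^{1/\gamma}(\tfrac{n-d}{d})^{1/\gamma}(\tfrac{d}{\beta_n})^{1/((1-\alpha_2)\gamma^2)}$ — makes $(\gamma d/T)^\gamma \le \tfrac{1}{2(n-d)/d}\cdot(\beta_n/d)^{1/((1-\alpha_2)\gamma)}$, so $\mathbb{P}(E_{\mathcal{ND}}) \le \tfrac{d}{2}(\beta_n/d)^{1/((1-\alpha_2)\gamma)}$, which is absorbed into (and in fact dominated by) the first exponential term $\exp(-\tfrac{3d}{16}(\beta_n/d)^{1/((1-\alpha_2)\gamma)})$ after a Chernoff-type rewriting, or contributes the polynomially small piece directly.

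\textbf{Step 2: Controlling the $\mathcal{DD}$ stage.} Condition on $E_{\mathcal{ND}}^c$, so $\mathcal{PD} = \mathcal{D}$ exactly (no masked non-defectives remain); then DD succeeds iff every defective $i$ is the \emph{sole} defective in at least one of its $\gamma$ tests. Fix defective $i$. For each of its $\gamma$ tests, the conditional probability that no \emph{other} defective shares it is roughly $(1 - \gamma(d-1)/T)^{?}$ — more carefully, each other defective independently places $\gamma$ darts into $T$ tests, so the chance a given test of $i$ avoids all $d-1$ others is about $(1 - 1/T)^{\gamma(d-1)} \approx e^{-\gamma(d-1)/T}$. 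Hence $i$ is ``hidden'' in all $\gamma$ of its tests with probability at most $(1 - e^{-\gamma d/T})^{\gamma}$, roughly $(1 - e^{-1/L})^{\gamma}$ where $L := T/(\gamma d)$. A union bound over $d$ defectives gives $\mathbb{P}(E_{\mathcal{DD}} \mid E_{\mathcal{ND}}^c) \le d(1 - e^{-1/L})^{\gamma}$. The first branch of \eqref{eq:T_long} is engineered so that $L = T/(\gamma d) \ge e^{H_2(\max\{\alpha_2,1/2\})/\alpha_2}(d/\beta_n)^{1/(\alpha_2\gamma)}$; substituting and using the inequality $1 - e^{-x} \le x$ together with an $H_2$-based bound on $(d(\cdots)^\gamma)$ yields exactly $\exp(-\tfrac{3d}{16}(\beta_n/d)^{1/((1-\alpha_2)\gamma)})$ plus lower-order $\beta_n$ terms, after the same Chernoff manipulation used in \cite{Ald14a,Joh16}. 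The $2\beta_n(1+o(1))$ term will come from loose ends: the probability that the realized number of tests-per-item deviates (``near-constant'' slack), or a residual union-bound term that is $O(\beta_n)$ rather than exponentially small.

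\textbf{Main obstacle.} The delicate part is handling the \emph{dependence} between the masking event for item $i$ and the number of disguised tests, and doing so in a way that produces the precise constant $3/16$ and the precise exponent $1/((1-\alpha_2)\gamma)$ rather than merely ``some small term''. This requires (a) a careful concentration statement for the positive-test count — most cleanly via the Poissonization/coupling argument of \cite{Joh16}, replacing the ``with replacement'' placements by independent Poisson-distributed test memberships and controlling the coupling error by $o(1)$ — and then (b) a Chernoff bound on the sum of indicator variables ``defective $i$ is masked'', where the per-item failure probabilities must be bounded \emph{uniformly} after conditioning on the Poissonized mask. The algebra matching the two branches of the $\max$ to the two failure events, and verifying that the $H_2$-exponent choice $e^{H_2(\max\{\alpha_2,1/2\})/\alpha_2}$ is exactly what is needed to push $d(1-e^{-1/L})^\gamma$ below the target exponential, is the computational heart of the argument; I expect it to mirror the optimization in the proof of the DD bound in \cite{Ald14a} adapted to the $\gamma$-constrained regime, with the free parameter $\alpha_2$ playing the role of a split between ``how much budget goes to separating defectives from each other'' versus ``from non-defectives''.
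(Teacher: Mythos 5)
There is a genuine gap, and it lies at the heart of your decomposition. You condition Step~2 on $E_{\mathcal{ND}}^c$, i.e.\ on $\mathcal{PD}=\mathcal{D}$ exactly, and in Step~1 you try to show $\mathbb{P}(E_{\mathcal{ND}})$ is negligible via the union bound $(n-d)(\gamma d/T)^\gamma$. But with $T$ given by \eqref{eq:T_long}, the second branch of the $\max$ only guarantees $(n-d)(\gamma d/T)^\gamma\leq\frac{d}{2}\big(\frac{\beta_n}{d}\big)^{\frac{1}{(1-\alpha_2)\gamma}}$, and this quantity is not small: it equals half of $G_{\max}=d\big(\frac{\beta_n}{d}\big)^{\frac{1}{(1-\alpha_2)\gamma}}$, which is polynomially large in $n$ for any $\gamma\geq 2$ (e.g.\ roughly $n^{\theta(1-\frac{1}{(1-\alpha_2)\gamma})}$). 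So your bound on $\mathbb{P}(E_{\mathcal{ND}})$ is vacuous, and in fact the expected number of non-defective intruders in $\mathcal{PD}$ diverges, so $\mathcal{PD}=\mathcal{D}$ fails with high probability at this $T$. Requiring the COMP stage to succeed outright would force $T$ up to the COMP-type scaling with an $n^{1/\gamma}$ factor, which is precisely what the theorem is designed to beat; your claim that this term is ``absorbed into and dominated by'' the exponential $\exp\big(-\frac{3d}{16}(\beta_n/d)^{\frac{1}{(1-\alpha_2)\gamma}}\big)$ has the inequality backwards.

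The paper's argument instead tolerates a large number of intruders: it only needs $G=|\mathcal{PD}\setminus\mathcal{D}|\leq G_{\max}$, which is shown by Bernstein's inequality with $t=G_{\max}/2$ (this is the actual source of the constant $3/16$, not a Chernoff rewriting of the $\mathcal{ND}$ union bound). The parameter $\alpha_2$ then splits the defective-side analysis in a way your Step~2 does not capture: the first branch of \eqref{eq:T_long} ensures that, with probability $1-\beta_n(1+o(1))$, no defective $i$ has more than $\alpha_2\gamma$ collisions with $\mathcal{D}\setminus\{i\}$ (this is where $e^{\gamma H_2(\max\{\alpha_2,1/2\})}$ enters, via a binomial-tail bound on the collision count), leaving at least $(1-\alpha_2)\gamma$ tests in which $i$ is the unique defective; a coupon-collecting argument conditioned on the concentrated positive-test count then bounds the probability that intruders cover all of these tests by $(G/d)^{(1-\alpha_2)\gamma}(1+o(1))$, which with $G\leq G_{\max}$ and a union bound over the $d$ defectives contributes the second $\beta_n(1+o(1))$ term. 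Your per-defective estimate $d(1-e^{-1/L})^\gamma$ ignores the intruders entirely (they are the main threat once $G\gg 1$) and does not produce the stated exponents; likewise, the $2\beta_n$ term does not come from ``near-constant'' slack (each item is placed in exactly $\gamma$ draws) but from these two union-bound events. To repair the proof you would need to abandon the $\mathcal{PD}=\mathcal{D}$ conditioning and work with the three-way conditioning on $\{G\leq G_{\max}\}$, $\{\max_i C_i<\alpha_2\gamma\}$, and the concentration of the number of positive tests.
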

\begin{proof}
    See Section \ref{sec:gamma_upperbound_nonadap}.
\end{proof}

In order to better understand this bound on $T$, we simplify it in two different scaling regimes:
\begin{enumerate}
    \item \underline{Large $\gamma$:} $\gamma\in\Theta((\log n)^c)$ for some $c\in(0,1)$, and $d\in\Theta(n^\theta)$ for some $\theta\in(0,1)$
    \item \underline{Constant $\gamma$:} $\gamma\in O(1)$, and $d\in\Theta(n^\theta)$ for some $\theta\in(0,1)$.
\end{enumerate}
In both regimes, we assume that $\beta_n$ is a slowly decaying term, since we are primarily interested in attaining $P_e \to 0$ rather than the speed of convergence.  

It will be useful to compare the bounds in terms of the following quantity:
\begin{align}
    \eta&=\lim_{n\rightarrow\infty}\frac{\log(\frac{n}{d})}{\gamma\log(\frac{T}{\gamma d})}. \label{eq:eta}
\end{align}
Observe that for any fixed value of $\eta > 0$, re-arranging gives $T = \gamma d \big( \big( \frac{n}{d} \big)^{\frac{1}{\gamma}} \big)^{\frac{1+o(1)}{\eta}}$.  We henceforth use the notation $\tilde{O}(\cdot)$ and $\tilde{\Omega}(\cdot)$ the denote the asymptotic behavior of $T$ up to factors that do not impact $\eta$, and accordingly omit $\beta_n$ from such expressions.

For regime 1 (large $\gamma$), letting $\alpha_2$ be a fixed constant close to one, we obtain that $e^{\frac{1}{\alpha_2}H_2(\max\{\alpha_2,\frac{1}{2}\})}$ can be made arbitrarily close to one, and in addition, the assumed scaling on $\gamma$ and $d$ gives
% in addition, we claim that $\big(\frac{n-d}{d}\big)^{1/\gamma}d^{\frac{1}{(1-\alpha_2)\gamma^2}}=\big(\frac{n}{d}\big)^{(1-o(1))/\gamma}$. To see the latter, note that
\begin{align}
    \Big(\frac{n-d}{d}\Big)^{\frac{1}{\gamma}}d^{\frac{1}{(1-\alpha_2)\gamma^2}}
    &=\Big(\frac{n-d}{d^{1-O(1/\gamma)}}\Big)^{\frac{1}{\gamma}}\\
    &=\bigg(\frac{n^{\frac{1}{1-O(1/\gamma)}}(1-O(d/n))^{\frac{1}{1-O(1/\gamma)}}}{d}\bigg)^{\frac{1-O(1/\gamma)}{\gamma}}\\
    &=\Big(\frac{n}{d}\Big)^{\frac{1-o(1)}{\gamma}}.
\end{align}
By substituting into \eqref{eq:T_long} and omitting $\beta_n$ as explained above, we obtain
\begin{align}
    T=\tilde{O}\Big(\gamma d\max\{n^\theta,n^{1-\theta}\}^{\frac{1}{\gamma}}\Big),
\end{align}
which matches the $\Omega\big(\gamma d\max\{n^\theta,n^{1-\theta}\}^{\frac{1}{\gamma}}\big)$ lower bound obtained by combining Theorem \ref{thm:gamma_lower_bound_adap} and Corollary \ref{cor:gamma_lowerbound_ours_nearconstantwtdesign}.

We plot $\eta$ against $\theta\in(0,1)$ in Figure \ref{fig:plot_gamma_large_case} to show how the asymptotic bound of the DD algorithm compares to the converse (Theorems \ref{thm:gamma_lowerbound} and \ref{thm:gamma_lower_bound_adap}) and the COMP bound (Theorem \ref{thm:gamma_upperbound}).
\begin{figure}[t]
    \centering
    \includegraphics[scale=0.8]{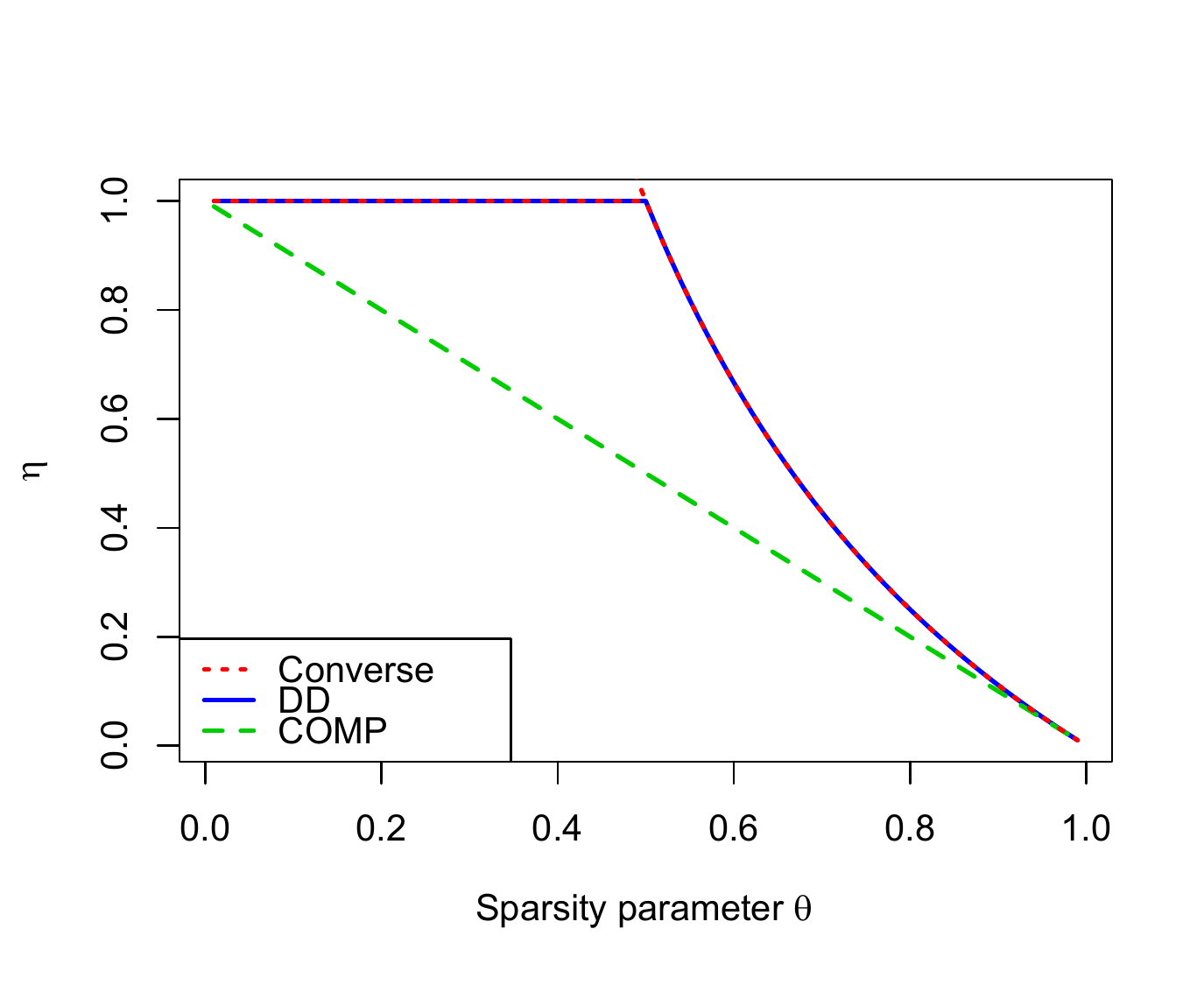}
    \caption{Plots of the variable $\eta$ in \eqref{eq:eta} against the sparsity parameter $\theta$ for the converse, DD algorithm, and COMP algorithm, when $n\rightarrow\infty$ and $\gamma=(\log n)^c$ for some $c\in(0,1)$.} \label{fig:plot_gamma_large_case}
\end{figure}
Note that for the COMP algorithm, we have omitted $\epsilon$ in our asymptotic bound (similarly to $\beta_n$ above), giving $T=\tilde{O}(\gamma dn^{1/\gamma})$. From Figure \ref{fig:plot_gamma_large_case}, we see that the DD algorithm performs better than the COMP algorithm, and achieves the optimal value of $\eta$ for all $\theta\in(0,1)$.

For regime 2 (constant $\gamma$), we similarly substitute the scaling laws into \eqref{eq:T_long} (and omit $\beta_n$) to get
\begin{align}
    T=\tilde{O}\Big(\gamma d\max\Big\{n^{\frac{\theta}{\alpha_2\gamma}},n^{\frac{1-\theta}{\gamma}+\frac{\theta}{(1-\alpha_2)\gamma^2}}\Big\}\Big). 
\end{align}
We numerically optimize with respect to $\alpha_2$ to obtain our bound on $T$. Figure \ref{fig:plot_gamma_10_case} shows how the asymptotic bound of the DD algorithm compares to the converse (Theorems \ref{thm:gamma_lowerbound} and \ref{thm:gamma_lower_bound_adap}) and the COMP bound (Theorem \ref{thm:gamma_upperbound}), when $\gamma=10$.  We see that DD again significantly outperforms COMP, but falls short of the converse.

\begin{figure}[t]
    \centering
    \includegraphics[scale=0.8]{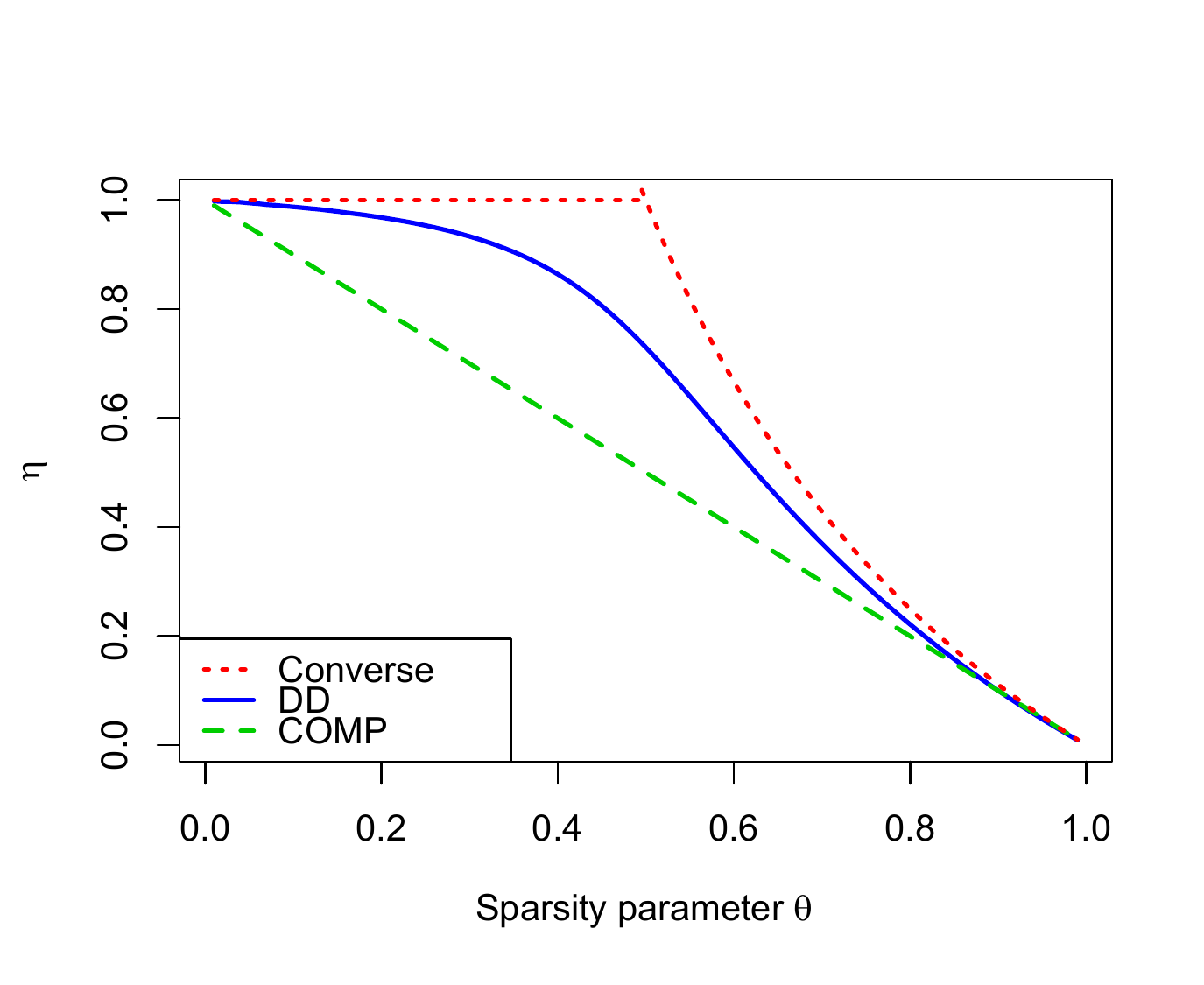}
    \caption{Plots of the variable $\eta$ in \eqref{eq:eta} against the sparsity parameter $\theta$ for the converse, DD algorithm, and COMP algorithm, when $n\rightarrow\infty$ and $\gamma=10$.} \label{fig:plot_gamma_10_case}
\end{figure}

This last example provides a useful point of comparison with the concurrent work of \cite{Geb20}.  It was shown therein that the converse curve in Figure \ref{fig:plot_gamma_10_case} can in fact be matched exactly.  Thus, the proof techniques of \cite{Geb20} appear to be more powerful than ours in the regime $\gamma = \Theta(1)$.  On the other hand, regimes where $\gamma \to \infty$, such as that considered in Figure \ref{fig:plot_gamma_large_case}, are not considered in \cite{Geb20}.

\subsection{Preliminary Definitions and Results} \label{sec:prelim_def_and_results}

Before presenting the main proofs, we introduce some useful definitions and auxiliary results.

\begin{definition} \label{def:mask}
    Consider an item $i$ and a set of items $\mathcal{L}$ not including $i$. We say that item $i$ is \textit{masked} by $\mathcal{L}$ if every test that includes $i$, also includes at least one member of $\mathcal{L}$.
\end{definition}

\begin{definition} \label{def:collision}
    The number of \textit{collisions} between a given item $i$ and a given set of items $\mathcal{L}$ refers to the number of tests selected by the near-constant tests-per-item design for item $i$ (including repetitions in the sampling with replacement) that also include at least one member of $\mathcal{L}$.
\end{definition}

Next, we introduce some auxiliary lemmas that will be used throughout our main proofs. 

\begin{lemma} \label{lem:(1-o(1))^gamma->1}
    If $\gamma\in\Theta\big((\log n)^c\big)$ for some $c\in [0,1)$, and $d\in\Theta(n^{\theta})$ for some $\theta\in(0,1]$, then we have $\big(1\pm\frac{1}{d^{\kappa / \gamma}}\big)^{\gamma}=1\pm o(1)$ for any fixed $\kappa > 0$. 
\end{lemma}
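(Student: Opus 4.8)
The plan is to reduce the claim $\big(1\pm d^{-\kappa/\gamma}\big)^{\gamma}=1\pm o(1)$ to showing that the exponent $\gamma d^{-\kappa/\gamma}$ tends to zero, and then apply the elementary bounds $1 - \gamma x \le (1-x)^\gamma \le e^{-\gamma x}$ and $1 + \gamma x \le (1+x)^\gamma \le e^{\gamma x}$, valid for $x = d^{-\kappa/\gamma} \in (0,1)$.  Under these bounds, $(1 \pm d^{-\kappa/\gamma})^\gamma$ is squeezed between $1 - \gamma d^{-\kappa/\gamma}$ (or $1+\gamma d^{-\kappa/\gamma}$) and $e^{\pm \gamma d^{-\kappa/\gamma}}$, and since $e^{\pm t} = 1 \pm O(t)$ for $t \to 0$, the whole quantity is $1 \pm O(\gamma d^{-\kappa/\gamma})$, which is $1 \pm o(1)$ as soon as $\gamma d^{-\kappa/\gamma} \to 0$.

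First I would establish that $\gamma d^{-\kappa/\gamma} \to 0$ by taking logarithms.  Since $\gamma = \Theta\big((\log n)^c\big)$ with $c \in [0,1)$ and $d = \Theta(n^\theta)$ with $\theta \in (0,1]$, we have $\log d = \Theta(\log n)$, so
\begin{align}
    \log\big(\gamma d^{-\kappa/\gamma}\big)
    = \log \gamma - \frac{\kappa \log d}{\gamma}
    = \Theta(\log\log n) - \Theta\!\left( \frac{\log n}{(\log n)^c} \right)
    = \Theta(\log\log n) - \Theta\big((\log n)^{1-c}\big).
\end{align}
Because $1 - c > 0$, the term $(\log n)^{1-c}$ dominates $\log\log n$, so the right-hand side tends to $-\infty$, and exponentiating gives $\gamma d^{-\kappa/\gamma} \to 0$.  (The multiplicative $\Theta(1)$ constants hidden in the scalings of $\gamma$ and $d$ only affect the additive $O(1)$ and $O(\log\log n)$ terms in the above and do not change the conclusion.)

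The remaining steps are routine: fix $\kappa > 0$, note that for all large $n$ we have $d^{-\kappa/\gamma} < 1$ (indeed it tends to $0$), apply the two-sided bounds above to get
\begin{align}
    1 - \gamma d^{-\kappa/\gamma} \;\le\; \big(1 - d^{-\kappa/\gamma}\big)^{\gamma} \;\le\; 1,
    \qquad
    1 \;\le\; \big(1 + d^{-\kappa/\gamma}\big)^{\gamma} \;\le\; e^{\gamma d^{-\kappa/\gamma}} = 1 + o(1),
\end{align}
and conclude $\big(1 \pm d^{-\kappa/\gamma}\big)^\gamma = 1 \pm o(1)$.  There is no real obstacle here; the only point requiring care is the asymptotic comparison $(\log n)^{1-c} \gg \log\log n$, which is exactly what forces the exponent to vanish and which relies crucially on $c < 1$ (for $c = 1$ the claim would generally fail).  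I would present that comparison explicitly, as in \eqref{eq:decaying_proof} elsewhere in the paper, and keep the rest terse.
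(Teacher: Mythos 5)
Your proposal is correct and follows essentially the same route as the paper: both reduce the claim to showing $\gamma d^{-\kappa/\gamma}\to 0$ (equivalently $d^{\kappa/\gamma}\gg\gamma$) and establish this via the comparison $(\log n)^{1-c}\gg\log\log n$, which is exactly the paper's chain of implications. Your explicit Bernoulli/exponential squeeze merely fills in the elementary step the paper states without proof, so there is nothing substantive to change.
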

\begin{proof}
    Throughout the proof, we write $g_n \gg f_n$ as a shorthand for $f_n = o(g_n)$.
    Since $\big(1\pm\frac{1}{d^{\kappa/\gamma}}\big)^{\gamma}\rightarrow1$ if $d^{\kappa/\gamma}\gg\gamma$, it suffices to show that $d^{\kappa/\gamma}\gg\gamma$. We have
    \begin{align}
        &\kappa\theta(\log n)^{1-c}\gg c\log\log n\\
        \implies&\frac{\kappa\theta}{(\log n)^c}(\log n)\gg\log(\log n)^c\\
        \implies&\log(n^{\kappa\theta/(\log n)^c})\gg\log(\log n)^c\\
        \implies&\quad n^{\kappa\theta/(\log n)^c}\gg(\log n)^c.
    \end{align}
    Since $d\in\Theta(n^{\theta})$ and $\gamma\in\Theta\big((\log n)^c\big)$, by substitution in the above equation, we get $d^{1/\gamma}\gg\gamma$ which completes the proof.
\end{proof}

Let $W^{(\mathcal{D})}$ be the total number of positive tests containing at least one item from $\mathcal{D}$. To understand the distribution of this quantity, it is helpful to think of the process by which elements of the columns are sampled as a \textit{coupon collector} problem, where each coupon corresponds to one of the $T$ tests. For a single defective item, $W^{(\{i\})}$ is the number of distinct coupons selected when $\gamma$ coupons are chosen uniformly at random from a population of $T$ coupons. In general, for the defective set $\mathcal{D}$ of size $d$, the independence of distinct columns means that $W^{(\mathcal{D})}$ is the number of distinct coupons collected when choosing $\gamma d$ coupons uniformly at random from a population of $T$ coupons. We now give a concentration measure result for $W^{(\mathcal{D})}$ around its mean, which follows via the same arguments as the unconstrained setting \cite{Joh16}.

\begin{lemma} \label{lem:concentration_pos_tests}
    When making $\gamma d\in o(T)$ draws with replacement from a total of $T$ coupons, the total number of distinct coupons $W^{(\mathcal{D})}$ satisfies
    \begin{align}
        \mathbb{P}\big( |W^{(\mathcal{D})}-\gamma d(1-\delta_n)|\geq(\gamma d)^{2/3}\big) \leq 2\exp(-2(\gamma d)^{1/3}),
    \end{align}
    where $\delta_n\in O\big(\frac{\gamma d}{T}\big)$.
\end{lemma}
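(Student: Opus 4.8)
The plan is to recognize $W^{(\mathcal{D})}$ as a function of the $\gamma d$ i.i.d. coupon draws that changes by at most one when any single draw is altered, and then apply McDiarmid's bounded-differences inequality. Concretely, let $Z_1,\dots,Z_{\gamma d}$ be the coupon indices drawn uniformly and independently from $\{1,\dots,T\}$ (these are exactly the tests selected by the near-constant tests-per-item design, pooled across all $d$ defective columns, including repetitions). Then $W^{(\mathcal{D})} = |\{Z_1,\dots,Z_{\gamma d}\}|$ is the number of distinct values. Changing one $Z_j$ while keeping the others fixed can change the number of distinct values by at most $1$, so the bounded-differences constants are all equal to $1$.

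First I would compute the mean. By linearity of expectation over the $T$ coupons, $\E[W^{(\mathcal{D})}] = T\bigl(1 - (1 - 1/T)^{\gamma d}\bigr)$. Using $\gamma d \in o(T)$ and a Taylor expansion, $(1-1/T)^{\gamma d} = 1 - \frac{\gamma d}{T} + O\bigl(\bigl(\frac{\gamma d}{T}\bigr)^2\bigr)$, hence $\E[W^{(\mathcal{D})}] = \gamma d\bigl(1 - \delta_n\bigr)$ for some $\delta_n \in O\bigl(\frac{\gamma d}{T}\bigr)$, which matches the statement. Then McDiarmid's inequality gives, for any $t > 0$,
\begin{align}
    \mathbb{P}\bigl( |W^{(\mathcal{D})} - \E[W^{(\mathcal{D})}]| \ge t \bigr) \le 2\exp\Bigl(-\frac{2t^2}{\gamma d}\Bigr).
\end{align}
Substituting $t = (\gamma d)^{2/3}$ yields $2t^2/(\gamma d) = 2(\gamma d)^{4/3}/(\gamma d) = 2(\gamma d)^{1/3}$, so the right-hand side becomes $2\exp(-2(\gamma d)^{1/3})$, which is exactly the claimed bound with $\E[W^{(\mathcal{D})}] = \gamma d(1-\delta_n)$.

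The only mild subtlety — and the step I would be most careful about — is confirming the bounded-difference constant is genuinely $1$ across the full pooled sequence of $\gamma d$ draws (rather than arguing per-column and losing a factor), and checking that the $o(T)$ assumption is enough to keep $\delta_n$ in $O(\gamma d/T)$ uniformly; both are routine but worth stating cleanly. Since the near-constant tests-per-item design samples each column's $\gamma$ tests independently of the other columns, the pooled draws $Z_1,\dots,Z_{\gamma d}$ are indeed mutually independent, so McDiarmid applies directly. The argument is identical in structure to the unconstrained analysis of \cite{Joh16}, with $\gamma d$ playing the role that the total number of placements plays there.
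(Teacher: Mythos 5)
Your proposal is correct and follows essentially the same route as the paper's proof: computing $\E[W^{(\mathcal{D})}] = T\bigl(1-(1-1/T)^{\gamma d}\bigr) = \gamma d(1-\delta_n)$ via a Taylor expansion under $\gamma d \in o(T)$, and then applying McDiarmid's inequality to the $\gamma d$ independent pooled draws with bounded-difference constant $1$ and deviation $t=(\gamma d)^{2/3}$. The subtleties you flag (independence of the pooled draws across columns and the constant being $1$) are handled identically in the paper, so no changes are needed.
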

\begin{proof}
    For any coupon, the probability of not being selected is $1-\big(1-\frac{1}{T}\big)^{\gamma d}$, yielding
    \begin{align}
        \E[W^{(\mathcal{D})}]&=\bigg(1-\Big(1-\frac{1}{T}\Big)^{\gamma d}\bigg)T\\
        &\stackrel{(a)}{=}\bigg(1-\Big(1-\frac{\gamma d}{T}+O\Big(\Big(\frac{\gamma d}{T}\Big)^2\Big)\bigg)T\\
        &=\bigg(\frac{\gamma d}{T}-O\Big(\Big(\frac{\gamma d}{T}\Big)^2\Big)\bigg)T\\
        &\stackrel{(b)}{=}\gamma d(1-\delta_n),
    \end{align}
    where (a) applies a second order Taylor expansion, and in (b) we introduce $\delta_n\in O\big(\frac{\gamma d}{T}\big)$.  Let $Z_1,Z_2,\dots,Z_{\gamma d}$ be the labels of the selected coupons and $W(\gamma d)=f(Z_1,\dots,Z_{\gamma d})$ be the number of distinct coupons. We have the bounded property difference property
    \begin{align}
        |f(Z_1,\dots,Z_j,\dots,Z_{\gamma d})-f(Z_1,\dots,\widehat{Z}_j,\dots,Z_{\gamma d})|\leq1
    \end{align}
    for any $j,Z_1,Z_2,\dots,Z_{\gamma d}$, and $Y'_j$, since the largest difference we can make is swapping a distinct coupon $Z_j$ for a non-distinct coupon $Z'_j$, or vice versa. McDiarmid's inequality \cite{McD89} gives
    \begin{align}
        \mathbb{P}(|f(Z_1,Z_2,\dots,Z_{\gamma d})-\E[f(Z_1,Z_2,\dots,Z_{\gamma d})]|\geq\delta)\leq2\exp\Big(-\frac{2\delta^2}{\gamma d}\Big).
    \end{align}
    Setting $\delta=(\gamma d)^{2/3}$, we get the desired result.
\end{proof}
Let $W^{(\mathcal{D}\setminus i)}$ and $W^{(\mathcal{D}\setminus i,j)}$ be the total number of positive tests containing at least one item in $\mathcal{D}\setminus\{i\}$, and the total number of positive tests containing at least one item in $\mathcal{D}\setminus\{i,j\}$ respectively. We then immediately obtain the following two corollaries.

\begin{corollary} \label{cor:concentration_pos_tests_d-1}
    When making $\gamma(d-1)\in o(T)$ draws with replacement from a total of $T$ coupons, the total number of distinct coupons $W^{(\mathcal{D}\setminus i)}$ satisfies
    \begin{align}
        \mathbb{P}\big( |W^{(\mathcal{D}\setminus i)}-\gamma(d-1)(1-\delta_n^{(1)})|\geq(\gamma(d-1))^{2/3} \big) \leq 2\exp(-2(\gamma(d-1))^{1/3}),
    \end{align}
    where $\delta_n^{(1)}\in O\big(\frac{\gamma d}{T}\big)$.
\end{corollary}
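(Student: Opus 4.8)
The plan is to obtain Corollary~\ref{cor:concentration_pos_tests_d-1} as an immediate instance of Lemma~\ref{lem:concentration_pos_tests} with the defective set of size $d$ replaced by the defective set of size $d-1$. First I would observe that the coupon-collector interpretation developed before Lemma~\ref{lem:concentration_pos_tests} applies verbatim to any fixed subset of the defectives: since the columns associated with distinct items are sampled independently, and each defective item selects exactly $\gamma$ tests uniformly at random with replacement from the $T$ tests, the quantity $W^{(\mathcal{D}\setminus i)}$ is precisely the number of distinct coupons obtained when $\gamma(d-1)$ coupons are drawn uniformly with replacement from a population of $T$ coupons. Here $i$ is an arbitrary (but fixed) element of $\mathcal{D}$, so the statement holds for any such $i$; the identity of $i$ plays no role once we condition on which $d-1$ items we are tracking.

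Next I would invoke Lemma~\ref{lem:concentration_pos_tests} directly with $d$ replaced by $d-1$. The hypothesis $\gamma d \in o(T)$ (equivalently \eqref{eq:T_gamma_d_ineq}) immediately gives $\gamma(d-1) \in o(T)$, so the lemma applies. Its conclusion then reads exactly
\begin{align}
    \mathbb{P}\big( |W^{(\mathcal{D}\setminus i)}-\gamma(d-1)(1-\delta_n^{(1)})|\geq(\gamma(d-1))^{2/3} \big) \leq 2\exp(-2(\gamma(d-1))^{1/3}),
\end{align}
where the quantity $\delta_n^{(1)}$ arising from the mean computation satisfies $\delta_n^{(1)} = 1-\big(1-(1-1/T)^{\gamma(d-1)}\big)T/(\gamma(d-1)) \in O\big(\frac{\gamma(d-1)}{T}\big) \subseteq O\big(\frac{\gamma d}{T}\big)$, by the same second-order Taylor expansion as in the proof of Lemma~\ref{lem:concentration_pos_tests}. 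This is the whole argument.

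There is essentially no obstacle here: the only thing to be careful about is the book-keeping that the error bound $2\exp(-2(\gamma(d-1))^{1/3})$ and the deviation radius $(\gamma(d-1))^{2/3}$ should be written in terms of $d-1$ rather than $d$ (though since $d \to \infty$ in the regimes of interest, one could equally absorb the difference into the $o(\cdot)$/$O(\cdot)$ terms and state the bound with $d$, which is how it is typically used downstream). One should also note that the bounded-differences property and the McDiarmid step go through unchanged, since swapping one of the $\gamma(d-1)$ coupon draws changes the count of distinct coupons by at most one regardless of how many draws there are. Hence the corollary follows, and the analogous statement for $W^{(\mathcal{D}\setminus i,j)}$ with $\gamma(d-2)$ draws is obtained in exactly the same way.
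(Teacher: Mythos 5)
Your proposal is correct and is exactly the paper's argument: the paper presents this corollary as an immediate instance of Lemma \ref{lem:concentration_pos_tests} with $d-1$ draws-per-group in place of $d$, using the same coupon-collector identification of $W^{(\mathcal{D}\setminus i)}$ and noting $\delta_n^{(1)}\in O\big(\frac{\gamma d}{T}\big)$. No further comment is needed.
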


\begin{corollary} \label{cor:concentration_pos_tests_d-2}
    When making $\gamma(d-2)\in o(T)$ draws with replacement from a total of $T$ coupons, the total number of distinct coupons $W^{(\mathcal{D}\setminus i,j)}$ satisfies
    \begin{align}
        \mathbb{P}\big( |W^{(\mathcal{D}\setminus i,j)}-\gamma(d-2)(1-\delta_n^{(2)})|\geq(\gamma(d-2))^{2/3} \big) \leq 2\exp(-2(\gamma(d-2))^{1/3}),
    \end{align}
    where $\delta_n^{(2)}\in O\big(\frac{\gamma d}{T}\big)$.
\end{corollary}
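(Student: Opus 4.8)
The plan is to obtain Corollary \ref{cor:concentration_pos_tests_d-2} as a direct instantiation of Lemma \ref{lem:concentration_pos_tests}, with the defective set $\mathcal{D}$ (of size $d$) replaced by $\mathcal{D}\setminus\{i,j\}$ (of size $d-2$). The one point to verify is that the coupon-collector interpretation carries over: since the near-constant tests-per-item design samples the $\gamma$ tests for each item independently across items, the set of tests touched by $\mathcal{D}\setminus\{i,j\}$ is exactly the set of distinct coupons obtained when $\gamma(d-2)$ coupons are drawn uniformly with replacement from the $T$ tests. Hence $W^{(\mathcal{D}\setminus i,j)}$ has the same distribution as the number of distinct coupons in such a draw, and Lemma \ref{lem:concentration_pos_tests} applies verbatim with $d-2$ in place of $d$.

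First I would check the hypothesis $\gamma(d-2)\in o(T)$, which is immediate from $\gamma d\in o(T)$ since $\gamma(d-2)\le\gamma d$. The mean computation then goes through unchanged: $\E[W^{(\mathcal{D}\setminus i,j)}]=\big(1-(1-1/T)^{\gamma(d-2)}\big)T$, and a second-order Taylor expansion of $(1-1/T)^{\gamma(d-2)}$ — valid because $\gamma(d-2)/T\to 0$ — gives $\E[W^{(\mathcal{D}\setminus i,j)}]=\gamma(d-2)(1-\delta_n^{(2)})$ with $\delta_n^{(2)}\in O\big(\gamma(d-2)/T\big)=O\big(\gamma d/T\big)$, the last equality holding because $d-2=\Theta(d)$. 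Next I would reuse the bounded-differences argument of Lemma \ref{lem:concentration_pos_tests}: writing $W^{(\mathcal{D}\setminus i,j)}=f(Z_1,\dots,Z_{\gamma(d-2)})$ as a function of the sampled coupon labels, changing a single $Z_k$ alters the number of distinct coupons by at most $1$, so McDiarmid's inequality \cite{McD89} yields $\mathbb{P}\big(|W^{(\mathcal{D}\setminus i,j)}-\E[W^{(\mathcal{D}\setminus i,j)}]|\ge\delta\big)\le 2\exp\big(-2\delta^2/(\gamma(d-2))\big)$; taking $\delta=(\gamma(d-2))^{2/3}$ gives the stated bound. (Corollary \ref{cor:concentration_pos_tests_d-1} follows identically with $d-1$.)

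The main obstacle here is essentially nonexistent: this is a routine restatement of Lemma \ref{lem:concentration_pos_tests}, and the proof is a one- or two-line appeal to it. The only subtlety worth flagging is that the error term is reported as $\delta_n^{(2)}\in O(\gamma d/T)$ rather than $O(\gamma(d-2)/T)$, which is harmless since $d\to\infty$; keeping the cleaner $\gamma d$-form is convenient for the later analysis in Section \ref{sec:gamma_upperbound_nonadap}, where these concentration statements are combined (e.g.\ via a union bound over pairs $(i,j)$) with uniform $\delta_n$-type bounds.
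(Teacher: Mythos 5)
Your proposal is correct and matches the paper's approach exactly: the paper obtains this corollary immediately from Lemma \ref{lem:concentration_pos_tests} by viewing $W^{(\mathcal{D}\setminus i,j)}$ as the number of distinct coupons when $\gamma(d-2)$ coupons are drawn with replacement from $T$, and your instantiation (mean via Taylor expansion, bounded differences plus McDiarmid with $\delta=(\gamma(d-2))^{2/3}$, and $\delta_n^{(2)}\in O(\gamma d/T)$ since $d-2=\Theta(d)$) is precisely that argument.
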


\subsection{Proof of Theorem \ref{thm:gamma_lowerbound_ours_nearconstantwtdesign} (Converse for $\gamma$-Divisible Items)} \label{sec:gamma_lowerbound_ours_proof_SSS}

Throughout the proof, we condition on a fixed but otherwise arbitrary defective set $\mathcal{D}$.
We consider the event that some defective item $i\in\mathcal{D}$ is masked by the other defective items $\mathcal{D}\setminus\{i\}$, which leads to the event $|\widehat{\mathcal{D}}_{\text{SSS}}| < d$ \cite{Ald14a}. Hence, writing $A_i$ for the event that item $i\in\mathcal{D}$ is masked by $\mathcal{D}\setminus\{i\}$, de Caen's lower bound on a union \cite{Dec97} gives
\begin{align}
    \mathbb{P}( |\widehat{\mathcal{D}}_{\text{SSS}}| < d )
    \geq\mathbb{P}\bigg(\bigcup_{i\in\mathcal{D}}A_i\bigg)
    \geq\sum_{i\in\mathcal{D}}\frac{\mathbb{P}(A_i)^2}{\mathbb{P}(A_i)+\sum_{j\in\mathcal{D}\setminus\{i\}}\mathbb{P}(A_i\cap A_j)}. \label{eq:SSS_err_prob_bound}
\end{align}
We proceed by bounding the numerator and denominator separately.

\paragraph{Bounding the Numerator of \eqref{eq:SSS_err_prob_bound}}

Fixing the index $i$ of some defective item, we note that conditioned on $W^{(\mathcal{D}\setminus i)}=w$, the event $A_i$ occurs if each test that item $i$ occurs in is contained in the $w$ ``already hit'' tests. Hence, for any constant $c_1>0$, we have
\begin{align}
    \mathbb{P}(A_i)&=\sum_{w}\mathbb{P}(A_i|W^{(\mathcal{D}\setminus i)}=w)\mathbb{P}(W^{(\mathcal{D}\setminus i)}=w)\\
    &=\sum_{w}\Big(\frac{w}{T}\Big)^{\gamma}\mathbb{P}(W^{(\mathcal{D}\setminus i)}=w)\\
    &\geq\sum_{w\geq c_1\gamma(d-1)}\Big(\frac{w}{T}\Big)^{\gamma}\mathbb{P}(W^{(\mathcal{D}\setminus i)}=w)\\
    &\geq\sum_{w\geq c_1\gamma(d-1)}\Big(\frac{c_1\gamma(d-1)}{T}\Big)^{\gamma}\mathbb{P}(W^{(\mathcal{D}\setminus i)}=w)\\
    &=\Big(\frac{c_1\gamma(d-1)}{T}\Big)^{\gamma}\mathbb{P}(W^{(\mathcal{D}\setminus i)}\geq c_1\gamma(d-1)). \label{eq:SSS_err_prob_numerator}
\end{align}

\paragraph{Bounding the Denominator of \eqref{eq:SSS_err_prob_bound}}

We first derive a bound on $\mathbb{P}(A_i\cap A_j|W^{(\mathcal{D}\setminus i,j)}=w)$ that holds for any given $w = \Theta(\gamma d)$ (an event that holds with high probability by Corollary \ref{cor:concentration_pos_tests_d-2}), by suitably adapting the arguments of the unconstrained setting \cite{Joh16}.

For this part (and only this part), we represent columns of $\mathsf{X}$ corresponding to items $i$ and $j$ by lists, $\mathcal{T}_i=\{t_{i1},\dots,t_{i\gamma}\}$ and $\mathcal{T}_j=\{t_{j1},\dots,t_{j\gamma}\}$. Each list entry is obtained by choosing $t\in\{1,\dots,T\}$ uniformly at random with replacement, so duplicates may occur. Without loss of generality, we assume that the $w$ tests containing items from $\mathcal{D}\setminus\{i,j\}$ are those indexed by $1,\dots,w$. Any given list occurs with probability $1/T^{\gamma}$. Letting $\mathscr{A}_i$ be the set of list pairs $(\mathcal{T}_i,\mathcal{T}_j)$ under which the event $A_i$ occurs, and similarly for $\mathscr{A}_j$, we have
\begin{align}
    \mathbb{P}(A_i\cap A_j|W^{(\mathcal{D}\setminus i,j)}=w)=\frac{N_{ij}}{T^{2\gamma}}, \label{eq:conditional_prob_joint_A}
\end{align}
where 
\begin{align}
    N_{ij}=\sum_{\mathcal{T}_i}\sum_{\mathcal{T}_j}\mathds{1}\{(\mathcal{T}_i,\mathcal{T}_j)\in\mathscr{A}_i\cap\mathscr{A}_j\}
\end{align}
is the number of pairs of lists in $\mathscr{A}_i\cap\mathscr{A}_j$. Here the sets $\mathscr{A}_i$ and $\mathscr{A}_j$ implicitly depend on $w$. To bound $N_{ij}$, we separately consider the number of ``new positive tests'' caused by items $i$ and $j$; that is, not among the first $w$. Specifically, letting $N_{ij}(l)$ be defined as above with the summation limited to the case that there are $l$ such new positive tests, we have
\begin{align}
N_{ij}=\sum_{l=0}^{\gamma}N_{ij}(l),
\end{align}
where the summation goes up to $\gamma$ due to the fact that any new positive test containing $i$ must also contain $j$ and vice versa; otherwise, the masking under consideration would not occur.

To bound $N_{ij}(l)$, we consider the following procedure for choosing the lists:
\begin{itemize}
\item From $T-w$ tests, choose $l$ of them to be the new defective tests. This is one of ${T-w\choose l}$ options.
\item For both $i$ and $j$, assign one list index from $\{1,\dots,\gamma\}$ to each of the $l$ new defective tests. This is at most $\gamma^l$ options each, for $\gamma^{2l}$ in total.
\item For both $i$ and $j$, the remaining $\gamma-l$ list entries are chosen arbitrarily from the $w+l$ positive tests. This is $(w+l)^{\gamma-l}$ options each, for $(w+l)^{2(\gamma-l)}$ in total.
\end{itemize}
Combining these terms gives
\begin{align}
N_{ij}(l)&\leq{T-w\choose l}\cdot\gamma^{2l}\cdot(w+l)^{2(\gamma-l)}\\
&\leq(T-w)^l\cdot\gamma^{2l}\cdot(w+\gamma)^{2(\gamma-l)}\\
&=(w+\gamma)^{2\gamma}\cdot\Big(\frac{\gamma^2(T-w)}{(w+\gamma)^2}\Big)^l. \label{eq:bracketed_term}
\end{align}
Under the assumption that $w\geq c_1\gamma(d-2)$, the bracketed term 
$\frac{\gamma^2(T-w)}{(w+\gamma)^2}$ is less than any fixed $\epsilon_1>0$ for sufficiently large $n$. To see this, recall that $T\in\Theta(\gamma d\cdot d^{1/\gamma})=\Theta(\gamma n^{\theta+\theta/\gamma})$ and $w\in\Theta(\gamma d)=\Theta(\gamma n^\theta)$. By substituting the scaling regime for $T$ and $w$ into the bracketed term above and taking the log, we get
\begin{align}
\log\frac{\gamma^2(T-w)}{(w+\gamma)^2}
&=\log\frac{\gamma^2(\Theta(1)\gamma n^{\theta+\theta/\gamma}-\Theta(1)\gamma n^{\theta})}{(\Theta(1)\gamma n^{\theta}+\gamma)^2}\\
&\leq \log\frac{\Theta(1)\gamma^3n^{\theta+\theta/\gamma}}{\gamma^2n^{2\theta}}\\
&=\log \Theta(1) +\log(\gamma n^{\theta/\gamma-\theta})\\
&=\log\Theta(1) +\log\gamma+\Big(\frac{\theta}{\gamma}-\theta\Big)\log n.
\end{align}
%where (a) is by using $\Theta(1)\gamma n^{\theta+\theta/\gamma}-\Omega(1)\gamma n^{\theta}\leq\Theta(1)\gamma n^{\theta+\theta/\gamma}$ and $\Omega(1)\gamma n^{\theta}+\gamma\geq\Omega(1)\gamma n^{\theta}$. 
This expression tends to $-\infty$ for any $\gamma>1$, since $\gamma\in o(\log n)$. This implies that the bracketed term in \eqref{eq:bracketed_term} satisfies $\frac{\gamma^2(T-w)}{(w+\gamma)^2}\rightarrow0$ as $n\rightarrow\infty$, and is therefore less than any given $\epsilon_1 > 0$ for sufficiently large $n$. 

Summing over $l$, we obtain
\begin{align}
N_{ij}&\leq\sum_{l=0}^{\gamma}(w+\gamma)^{2\gamma}\cdot\Big(\frac{\gamma^2(T-w)}{(w+\gamma)^2}\Big)^l\\
&\leq(w+\gamma)^{2\gamma}\cdot\sum_{l=0}^\infty\epsilon_1^l\\
&=(w+\gamma)^{2\gamma}\cdot\frac{1}{1-\epsilon_1},
\end{align}
and substituting into \eqref{eq:conditional_prob_joint_A}, we obtain
\begin{align}
\mathbb{P}(A_i\cap A_j|W^{(\mathcal{D}\setminus i,j)}=w)
\leq\Big(\frac{w+\gamma}{T}\Big)^{2\gamma}\cdot\frac{1}{1-\epsilon_1}.
\end{align}
Now, for any $c_1,c_2>0$, we have
\begin{align}
\sum_{j\in\mathcal{D}\setminus\{i\}}\mathbb{P}(A_i\cap A_j)
&=(d-1)\sum_{w}\mathbb{P}(A_i\cap A_j|W^{(\mathcal{D}\setminus i,j)}=w)\mathbb{P}(W^{(\mathcal{D}\setminus i,j)}=w)\\
&\leq\frac{d-1}{1-\epsilon_1}\sum_{c_1\gamma(d-2)\leq w\leq c_2\gamma(d-2)}\Big(\frac{w+\gamma}{T}\Big)^{2\gamma}\mathbb{P}(W^{(\mathcal{D}\setminus i,j)}=w) \nonumber\\
&\quad\quad+(d-1)\mathbb{P}(W^{(\mathcal{D}\setminus i,j)}\notin[c_1\gamma(d-2),c_2\gamma(d-2)])\\
&\leq\frac{d-1}{1-\epsilon_1}\Big(\frac{c_2\gamma(d-2)+\gamma}{T}\Big)^{2\gamma}\mathbb{P}(c_1\gamma(d-2)\leq W^{(\mathcal{D}\setminus i,j)}\leq c_2\gamma(d-2)) \nonumber\\
&\quad\quad+(d-1)\mathbb{P}(W^{(\mathcal{D}\setminus i,j)}<c_1\gamma(d-2))+(d-1)\mathbb{P}(W^{(\mathcal{D}\setminus i,j)}>c_2\gamma(d-2)). \label{eq:SSS_err_prob_denominator}
\end{align}

\paragraph{Combining the Two Terms}

In accordance with Corollaries \ref{cor:concentration_pos_tests_d-1} and \ref{cor:concentration_pos_tests_d-2}, we choose $c_1$ and $c_2$ in \eqref{eq:SSS_err_prob_numerator} and \eqref{eq:SSS_err_prob_denominator} as follows:
\begin{align}
    c_1&=\min\bigg\{1-\delta^{(1)}_n-\frac{1}{(\gamma(d-1))^{1/3}},1-\delta^{(2)}_n-\frac{1}{(\gamma(d-2))^{1/3}}\bigg\}\\
    &\ge 1-\delta^{(3)}_n \\
    c_2&=1-\delta^{(2)}_n+\frac{1}{(\gamma(d-2))^{1/3}},
\end{align}
where $\delta_n^{(3)} \in O\big(\frac{\gamma d}{T}\big) + O\big( \frac{1}{(\gamma d)^{1/3}} \big) \rightarrow0$ (since $\delta^{(1)}_n$ and $\delta^{(2)}_n$ are both $O\big(\frac{\gamma d}{T}\big) $). 
%With our choices, the concentration results for both $W^{(\mathcal{D}\setminus i)}$ and $W^{(\mathcal{D}\setminus i,j)}$ in Corollary \ref{cor:concentration_pos_tests_d-1} and Corollary \ref{cor:concentration_pos_tests_d-2} respectively will hold. 
We also introduce
\begin{align}
    c_3=1+\Big(\frac{d-2}{d-1}\Big)(-\delta^{(2)}_n+(\gamma(d-2))^{-1/3}), \label{eq:c_3}
\end{align}
and note the useful fact
\begin{align}
    c_3\gamma(d-1)=c_2\gamma(d-2)+\gamma, \label{eq:c_3_fact}
\end{align}
which will be used later.

The concentration results from Corollaries \ref{cor:concentration_pos_tests_d-1} and \ref{cor:concentration_pos_tests_d-2} imply that
\begin{align}
    \mathbb{P}(A_i)&\stackrel{(a)}{\geq} \Big(\frac{c_1\gamma(d-1)}{T}\Big)^{\gamma}(1-o(1)) \label{eq:de_caen_sub_bound_1}\\
    \sum_{j\in\mathcal{D}\setminus\{i\}}\mathbb{P}(A_i\cap A_j)&\stackrel{(b)}{\leq}\frac{d-1}{1-\epsilon_1}\Big(\frac{c_2\gamma(d-2)+\gamma}{T}\Big)^{2\gamma}(1-o(1)) +o(1), \label{eq:de_caen_sub_bound_3}
\end{align}
where:
\begin{itemize}
    \item (a) follows by substituting our chosen $c_1$ into the $\mathbb{P}(\cdot)$ part of \eqref{eq:SSS_err_prob_numerator} and applying the concentration result in Corollary \ref{cor:concentration_pos_tests_d-1};
    \item (b) follows by substituting our chosen $c_1$ and $c_2$ into the $\mathbb{P}(\cdot)$ parts of \eqref{eq:SSS_err_prob_denominator}, and then apply the concentration results in Corollary \ref{cor:concentration_pos_tests_d-2}.
\end{itemize}
In these steps, we also used the fact that $2(d-1)\exp(-(\gamma(d-2))^{1/3})\rightarrow0$ since $\gamma\geq1$ and $d\rightarrow\infty$. 

In addition to \eqref{eq:de_caen_sub_bound_1}, we have the simple upper bound
\begin{align}
    \mathbb{P}(A_i)&\leq\Big(\frac{\gamma(d-1)}{T}\Big)^{\gamma}, \label{eq:de_caen_sub_bound_2}
\end{align}
which holds since the number of positive tests that contain at least one item in $\mathcal{D}\setminus\{i\}$ is trivially at most $\gamma(d-1)$. Substituting \eqref{eq:de_caen_sub_bound_1}--\eqref{eq:de_caen_sub_bound_2} into \eqref{eq:SSS_err_prob_bound}, we obtain
\begin{align}
    \mathbb{P}( |\widehat{\mathcal{D}}_{\text{SSS}}| < d )
    &\geq\sum_{i\in\mathcal{D}}\frac{\Big(\frac{c_1\gamma(d-1)}{T}\Big)^{2\gamma}(1-o(1))}{\Big(\frac{\gamma(d-1)}{T}\Big)^{\gamma}+\frac{d-1}{1-\epsilon_1}\Big(\frac{c_2\gamma(d-2)+\gamma}{T}\Big)^{2\gamma}(1-o(1))+o(1)}\\
    &\stackrel{(a)}{=}\frac{d\Big(\frac{\gamma(d-1)}{T}\Big)^{2\gamma}c_1^{2\gamma}(1-o(1))}{\Big(\frac{\gamma(d-1)}{T}\Big)^{\gamma}+\frac{d-1}{1-\epsilon_1}\Big(\frac{\gamma(d-1)}{T}\Big)^{2\gamma}c^{2\gamma}_3(1-o(1))+o(1)}\\
    &=\frac{d\Big(\frac{\gamma(d-1)}{T}\Big)^{\gamma}c_1^{2\gamma}(1-o(1))}{1+\frac{d-1}{1-\epsilon_1}\Big(\frac{\gamma(d-1)}{T}\Big)^{\gamma}c_3^{2\gamma}(1-o(1))+o(1)}\\
    &=\frac{d\Big(\frac{c^2_1\gamma(d-1)}{T}\Big)^{\gamma}(1-o(1))}{1+\frac{d-1}{1-\epsilon_1}\Big(\frac{c^2_3\gamma(d-1)}{T}\Big)^{\gamma}(1-o(1))+o(1)}, \label{eq:de_caen_err_bound_derived}
\end{align}
where (a) is by applying \eqref{eq:c_3_fact} in the denominator. 

In the following, it will be convenient to work with the following choice of $T$:
\begin{equation}
    T=\gamma d^{1/\gamma}(d-1)(c^2_3)(1-\epsilon_2). \label{eq:T_choice_new}
\end{equation}
Since we consider $d \to \infty$ and $c_2 = 1 - o(1)$, this choice is consistent with \eqref{eq:T_choice} for some $\epsilon_2 = \zeta + o(1)$.
Substituting \eqref{eq:T_choice_new} into \eqref{eq:de_caen_err_bound_derived}, we get
\begin{align}
    \mathbb{P}( |\widehat{\mathcal{D}}_{\text{SSS}}| < d )
    &\geq\frac{d\Big(\frac{c^2_1\gamma(d-1)}{\gamma d^{1/\gamma}(d-1)(c^2_3)(1-\epsilon_2)}\Big)^{\gamma}(1-o(1))}{1+\frac{d-1}{1-\epsilon_1}\Big(\frac{c^2_3\gamma(d-1)}{\gamma d^{1/\gamma}(d-1)(c^2_3)(1-\epsilon_2)}\Big)^{\gamma}(1-o(1))+o(1)}\\
    &=\frac{\big(\frac{c_1}{c_3}\big)^{2\gamma}\big(\frac{1}{1-\epsilon_2}\big)^\gamma(1-o(1))}{1+\frac{d-1}{d(1-\epsilon_1)}\big(\frac{1}{1-\epsilon_2}\big)^\gamma(1-o(1))+o(1)}\\
    &\stackrel{(a)}{=}\frac{\big(1-\delta^{(2)}_n-(\gamma(d-2))^{-1/3}\big)^{2\gamma}}{\Big(1+\big(\frac{d-2}{d-1}\big)(-\delta^{(2)}_n+(\gamma(d-2))^{-1/3})\Big)^{2\gamma}}\cdot\frac{1-o(1)}{(1-\epsilon_2)^\gamma+\frac{d-1}{d(1-\epsilon_1)}}\\
    &\stackrel{(b)}{=}\frac{1-o(1)}{(1-\epsilon_2)^\gamma+\frac{d-1}{d(1-\epsilon_1)}},
\end{align}
where (a) follows by substituting $c_1$ and $c_3$, and in (b),we note that both the numerator and denominator are in $\big(1-O\big(\frac{1}{d^{1/\gamma}}\big)\pm O\big(\frac{1}{(\gamma d)^{1/3}}\big)\big)^{2\gamma}=\big(1-O\big(\frac{1}{d^{1/\gamma}}\big)\big)^{2\gamma}$, and then apply Lemma \ref{lem:(1-o(1))^gamma->1}.  The proof is concluded by recalling that $\epsilon_1$ may be arbitrarily small, and $\epsilon_2 = \zeta + o(1)$.

\paragraph{Bounding the COMP Error Probability}

We use the fact that COMP fails if and only if at least one non-defective item is masked by $\mathcal{D}$, and denote the associated error probability by $\mathbb{P}^{\text{COMP}}(\text{err})$. 

Recall from Lemma \ref{lem:concentration_pos_tests} that the number $W^{(\mathcal{D})}$ of positive tests lies in $[\gamma d(1-\delta_n)-(\gamma d)^{2/3},\gamma d(1-\delta_n)+(\gamma d)^{2/3}]$ with high probability, where $\delta_n \in O\big( \frac{\gamma d}{T} \big) \to 0$.  For any $w$ in this range and any non-defective item $i$, we have
\begin{align}
    \mathbb{P}(\text{$i$ masked by $\mathcal{D}$}|W^{(\mathcal{D})}=w)
    &\geq \Big(\frac{\gamma d(1-\delta_n^{(4)})}{T}\Big)^{\gamma}, \label{eq:prob_err_COMP1}
\end{align}
where $\delta^{(4)}_n \in O\big(\frac{\gamma d}{T}\big)=O\big(\frac{1}{d^{1/\gamma}}\big) \to 0$, since $i$ is masked if and only if all of its $\gamma$ tests are those among the $w$ positive ones.

Using \eqref{eq:prob_err_COMP1}, we derive an upper bound on $\mathbb{P}^{\text{COMP}}(\text{suc}) = 1 - \mathbb{P}^{\text{COMP}}(\text{err})$:
\begin{align}
    \mathbb{P}^{\text{COMP}}(\text{suc})
    &=\sum_w\mathbb{P}(W^{(\mathcal{D})}=w)\mathbb{P}^{\text{COMP}}(\text{suc}|W^{(\mathcal{D})}=w)\\
    &=\sum_w\mathbb{P}(W^{(\mathcal{D})}=w)\mathbb{P}(\text{all $i$ not masked by $\mathcal{D}$}|W^{(\mathcal{D})}=w)\\
    &=\sum_w\mathbb{P}(W^{(\mathcal{D})}=w)\Big(1-\mathbb{P}(\text{$i$ masked by $\mathcal{D}$}|W^{(\mathcal{D})}=w)\Big)^{n-d}\\
    % &\stackrel{(a)}{\leq}\sum_{\substack{w\in[\gamma d(1-\delta_n)-(\gamma d)^{2/3} \\ ,\gamma d(1-\delta_n)+(\gamma d)^{2/3}]}}\mathbb{P}(W^{(\mathcal{D})}=w)\Big(1-\mathbb{P}(\text{$i$ masked by $\mathcal{D}$}|W^{(\mathcal{D})}=w)\Big)^{n-d}  +2\exp(-2(\gamma d)^{\frac{1}{3}}) \\
    % &\quad+\sum_{\substack{w\notin[\gamma d(1-\delta_n)-(\gamma d)^{2/3} \\ ,\gamma d(1-\delta_n)+(\gamma d)^{2/3}]}}\mathbb{P}(W^{(\mathcal{D})}=w)\Big(1-\mathbb{P}(\text{$i$ masked by $\mathcal{D}$}|W^{(\mathcal{D})}=w)\Big)^{n-d}\\
    &\stackrel{(a)}{\leq}\bigg(1-\Big(\frac{\gamma d(1-\delta^{(4)}_n)}{T}\Big)^{\gamma}\bigg)^{n-d}
    +2\exp(-2(\gamma d)^{\frac{1}{3}})\\
    &\stackrel{(b)}{=}\bigg(1-\Big(\frac{\gamma d(1-\delta^{(4)}_n)}{T}\Big)^{\gamma}\bigg)^{n-d}+o(1), \label{eq:COMP_final}
\end{align}
%(a) follows by using Lemma \ref{lem:concentration_pos_tests} to upper bound $\mathbb{P}(W^{(\mathcal{D})}\notin[\gamma d(1-\delta_n)-(\gamma d)^{2/3},\gamma d(1-\delta_n)+(\gamma d)^{2/3}])$, 
where (a) follows from Lemma \ref{lem:concentration_pos_tests} and \eqref{eq:prob_err_COMP1}, and (b) uses $2\exp(-2(\gamma d)^{\frac{1}{3}})\rightarrow0$ since $\gamma\geq1$ and $d\rightarrow\infty$.  It follows from \eqref{eq:COMP_final} that
    % and upper bounding $\mathbb{P}(W^{(\mathcal{D})}\in[\gamma d(1-\delta_n)-(\gamma d)^{2/3},\gamma d(1-\delta_n)+(\gamma d)^{2/3}])$ by 1. For the second term, we first upper bound $\big(1-\mathbb{P}(\text{$i$ masked by $\mathcal{D}$}|W^{(\mathcal{D})}=w)\big)^{n-d}$ by 1. Next, we apply (b) is obtained by using the fact that $2\exp(-2(\gamma d)^{\frac{1}{3}})\rightarrow0$ since $\gamma\geq1$ and $d\rightarrow\infty$. Applying the above results, we provide a lower bound on $\mathbb{P}^{\text{COMP}}(\text{err})$:
\begin{align}
    \mathbb{P}^{\text{COMP}}(\text{err})
    &=1-\mathbb{P}^{\text{COMP}}(\text{suc})\\
    &\geq1-\bigg(1-\Big(\frac{\gamma d(1-\delta^{(4)}_n)}{T}\Big)^{\gamma}\bigg)^{n-d}-o(1)\\
    &=1-\bigg(1-\Big(\frac{\gamma d}{T}\Big)^{\gamma}(1-\delta^{(4)}_n)^{\gamma}\bigg)^{n-d}-o(1). \label{eq:prob_err_COMP2}
\end{align}
Recall that we choose $T=\gamma d^{\frac{1}{\gamma}}(d-1)(c_3^2)(1-\epsilon_2)$ for some constant $\epsilon_2>0$ (see \eqref{eq:T_choice_new}); substituting into \eqref{eq:prob_err_COMP2}, we obtain
\begin{align}
    \mathbb{P}^{\text{COMP}}(\text{err})
    &\geq1-\bigg(1-\Big(\frac{\gamma d}{\gamma d^{1/\gamma}(d-1)(c_3^2)(1-\epsilon_2)}\Big)^{\gamma}(1-\delta_n^{(4)})^\gamma\bigg)^{n-d}-o(1)\\
    &=1-\bigg(1-\frac{1}{d}\Big(1+\frac{1}{d-1}\Big)^{\gamma}\frac{(1-\delta_n^{(4)})^\gamma}{c_3^{2\gamma}(1-\epsilon)^\gamma}\bigg)^{n-d}-o(1)\\
    &\stackrel{(a)}{=}1-\bigg(1-\frac{(1+o(1))}{d(1-\epsilon_2)^{\gamma}}\bigg)^{n-d}-o(1)\\
    &\stackrel{(b)}{=}1-\exp\Big(-\frac{(n-d)(1+o(1))}{d(1-\epsilon_2)^{\gamma}}\Big)-o(1)\\
    &=1-\exp\Big(-\frac{n(1+o(1))}{d(1-\epsilon_2)^{\gamma}}\Big)-o(1), \label{eq:COMP_final2}
\end{align}
where:
\begin{itemize}
    \item (a) follows by applying $\big(1+\frac{1}{d-1}\big)^{\gamma}=1+o(1)$ (since $d\gg\gamma$ under our considered scaling laws), as well as noting that both $c_3$ and $1-\delta_n^{(4)}$ behave as $1-O\big(\frac{1}{d^{1/\gamma}}\big)$, and applying Lemma \ref{lem:(1-o(1))^gamma->1} to get $c_3^{2\gamma} = 1-o(1)$ and $(1-\delta_n^{(4)})^\gamma\in1-o(1)$. 
    \item (b) follows by first noting that $\frac{1+o(1)}{d(1-\epsilon_2)^{\gamma}}\in o(1)$ (proved shortly), and then applying $1-z=e^{-z(1+o(1))}$ when $z\in o(1)$. To see why $\frac{1+o(1)}{d(1-\epsilon_2)^{\gamma}}\in o(1)$, we can take the log of the denominator and substitute the respective scaling regimes to get $\Theta(\theta \log n)+(\log n)^c\log(1-\epsilon_2) = \Theta(\log n)$.
\end{itemize}
The right-hand side of \eqref{eq:COMP_final2} approaches one as $n\rightarrow\infty$, since $\exp\big(-\frac{n}{d(1-\epsilon_2)^{\gamma}}\big)\rightarrow0$ by the assumption that $d\in\Theta(n^{\theta})$ with $\theta\in(0,1)$.

\subsection{Proof of Theorem \ref{thm:gamma_upperbound_nonadap} (DD Performance)} \label{sec:gamma_upperbound_nonadap}

We again condition on a fixed but otherwise arbitrary defective set $\mathcal{D}$.
We observe that first and second steps recover $\mathcal{D}$ correctly when each defective item $i$ is not masked (see Definition \ref{def:mask}) by $ \mathcal{PD} \setminus\{i\}$. Hence, we want to derive a bound on $T$ ensuring that the probability of each defective item $i$ being masked by $ \mathcal{PD} \setminus\{i\}$ is vanishing. Each defective item $i$ is masked by $ \mathcal{PD} \setminus\{i\}$ only when the number of collisions (see Definition \ref{def:collision}) between $i$ and $\mathcal{PD}$ is $\gamma$. Since $\mathcal{PD}$ can be split into two sets $\mathcal{D}$ and $\mathcal{PD}\setminus\mathcal{D}$, we can consider the number of collisions between $i$ and each of these two sets separately. This motivates the main steps of our proof:
\begin{enumerate}
    \item We derive a concentration result on the number of non-defective items in  $\mathcal{PD}$.
    \item We derive a bound on $T$ such that there is a low probability of any defective item $i$ incurring ``too many'' collisions with $\mathcal{D}\setminus\{i\}$.
    \item Conditioned on not having too many collisions between defectives in the sense of the previous item, we derive a bound on $T$ such that there is also a low probability of any defective item $i$ having every one of its ``collision-free'' tests contain at at least one item from $ \mathcal{PD} \setminus\mathcal{D}$.
    \item Taking the maximum between the two bounds on $T$ gives us the required number of tests.
\end{enumerate}
We proceed by analysing the two steps of the DD algorithm separately.

\paragraph{Analysis of the First Step}

Let $G=| \mathcal{PD} \setminus\mathcal{D}|$ denote the number of non-defective items in  $\mathcal{PD}$, where $G=\sum_{i=1}^{n-d}G_i$ with $G_i\in\{0,1\}$ being the corresponding indicator variable for a single non-defective. Conditioned on the number of positive tests being $W^{(\mathcal{D})}=w^{(\mathcal{D})}$, we have
\begin{align}
    (G|W^{(\mathcal{D})}=w^{(\mathcal{D})})\sim\text{Binomial}\bigg(n-d,\Big(\frac{w^{(\mathcal{D})}}{T}\Big)^{\gamma}\bigg),
\end{align}
where $\big(\frac{w^{(\mathcal{D})}}{T}\big)^{\gamma}\leq\big(\frac{\gamma d}{T}\big)^{\gamma}$ because the number of positive tests is at most $\gamma d$. Since $G_i \in \{0,1\}$, Bernstein's inequality gives the following for any $t > 0$:
\begin{align}
    \mathbb{P}\Big(\sum_{i=1}^{n-d}G_i>\mathbb{E}[G]+t\Big)
    &\leq\exp\bigg(\frac{-\frac{1}{2}t^2}{\sum_{i=1}^{n-d}\Var[G_i]+\frac{1}{3}t}\bigg)\\
    &\stackrel{(a)}{\leq}\exp\bigg(\frac{-\frac{1}{2}t^2}{\sum_{i=1}^{n-d}\mathbb{E}[G_i^2]+\frac{1}{3}t}\bigg)\\
    &\stackrel{(b)}{\leq}\exp\bigg(\frac{-\frac{1}{2}t^2}{\sum_{i=1}^{n-d}\mathbb{E}[G_i]+\frac{1}{3}t}\bigg)\\
    &\stackrel{(c)}{\leq}\exp\bigg(\frac{-\frac{1}{2}t^2}{\mathbb{E}[G]+\frac{1}{3}t}\bigg),
    \label{eq:Bernstein_bound_on_G}
\end{align}
where (a) uses $\Var[G_i]=\mathbb{E}[G_i^2]-(\mathbb{E}[G_i])^2\leq\mathbb{E}[G_i^2]$, (b) follows since $G_i\in\{0,1\}$ and hence $\mathbb{E}[G_i^2]=\mathbb{E}[G_i]$, and (c) is due to the linearity of expectation.  We will return to \eqref{eq:Bernstein_bound_on_G} and select $t$ later in the analysis.

\paragraph{Analysis of the Second Step}

Firstly, we want to show that the event in which the number of collisions between a chosen defective item $i$ and $\mathcal{D}\setminus\{i\}$ is ``close to $\gamma$'' (to be formalized later) is a rare event. It is easy to see that rearranging the columns of the test matrix only amounts to re-labeling items.  Hence, for clarity, we think of the test matrix as being rearranged such that the first $d$ columns are for the defective items, as shown in Figure \ref{fig:d_collide}.
\begin{figure}[t] 
  \centering
  \includegraphics[scale=0.4]{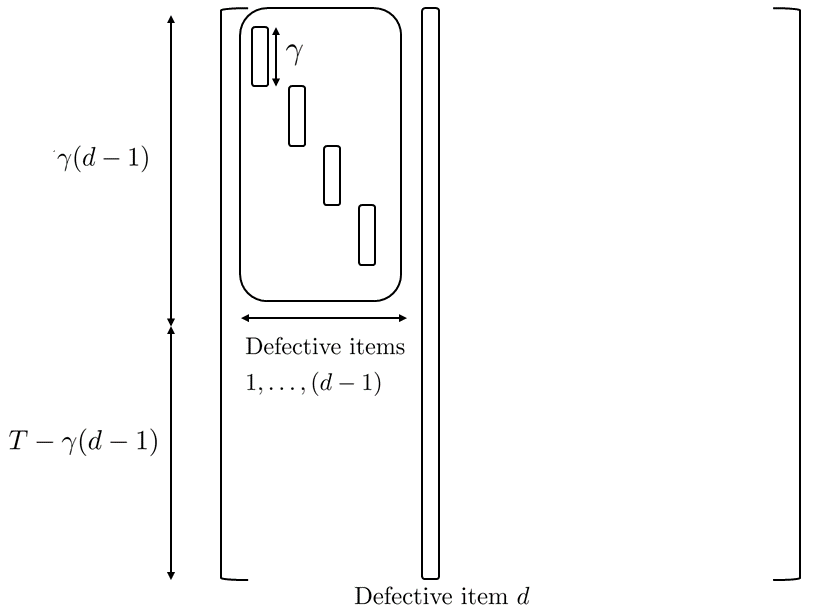}
  \caption{Rearranged test matrix to illustrate that collision within defectives are rare.  This illustration shows the extreme case that the first $d-1$ defective items are each in a distinct set of $\gamma$ tests.}
  \label{fig:d_collide}
\end{figure}
Referring to Figure \ref{fig:d_collide}, let $C_i$ be the number of collisions between a given defective item $i$ (with $i=d$ in Figure \ref{fig:d_collide}) and $\mathcal{D}\setminus\{i\}$. Recall that $W^{(\mathcal{D}\setminus i)}$ denotes the number of positive tests containing at least one item in $\mathcal{D}\setminus\{i\}$. Given $W^{(\mathcal{D}\setminus i)}=w^{(\mathcal{D}\setminus i)}$, we have
\begin{align}
    (C_i|W^{(\mathcal{D}\setminus i)}=w^{(\mathcal{D}\setminus i)})\sim\text{Binomial}\Big(\gamma,\frac{w^{(\mathcal{D}\setminus i)}}{T}\Big),
\end{align}
where $\frac{w^{(\mathcal{D}\setminus i)}}{T}\leq\frac{\gamma(d-1)}{T}<\frac{\gamma d}{T}$ because any $w^{(\mathcal{D}\setminus i)}$ is at most $\gamma(d-1)$. We want to show that $\mathbb{P}(C_i\geq\alpha_2\gamma)$ is small, where $\alpha_2\in(0,1)$. We first note that
\begin{align}
    \mathbb{P}(C_i=\alpha_2\gamma) &={\gamma\choose\alpha_2\gamma}\Big(\frac{w^{(\mathcal{D}\setminus i)}}{T}\Big)^{\alpha_2\gamma}(1-p)^{\gamma-\alpha_2\gamma}\\
    &\stackrel{(a)}{\leq}e^{\gamma H_2(\alpha_2)}\Big(\frac{\gamma d}{T}\Big)^{\alpha_2\gamma}.
\end{align}
where (a) is due to ${\gamma\choose\alpha_2\gamma}\leq e^{\gamma H_2(\alpha_2)}$ (where $H_2(\cdot)$ is the binary entropy function in nats), $\frac{w^{(\mathcal{D}\setminus i)}}{T}<\frac{\gamma d}{T}$, and $(1-p)^{\gamma-\alpha_2\gamma}\leq1$. We proceed to show that $\mathbb{P}(C_i\geq\alpha_2\gamma)$ behaves similarly to $\mathbb{P}(C_i=\alpha_2\gamma)$:
\begin{align}
    \mathbb{P}(C_i\geq\alpha_2\gamma)&=\sum_{k=\alpha_2\gamma}^{\gamma}\mathbb{P}(C_i=k)\\
    &\leq\sum_{k=\alpha_2\gamma}^{\gamma}e^{\gamma H_2(k/\gamma)}\Big(\frac{\gamma d}{T}\Big)^k\\
    &\stackrel{(a)}{\leq}e^{\gamma H_2(\max\{\alpha_2,\frac{1}{2}\})}\sum_{k=\alpha_2\gamma}^{\infty}\Big(\frac{\gamma d}{T}\Big)^k\\
    &\stackrel{(b)}{=}e^{\gamma H_2(\max\{\alpha_2,\frac{1}{2}\})}\frac{(\frac{\gamma d}{T})^{\alpha_2\gamma}}{1-(\frac{\gamma d}{T})}\\
    &\stackrel{(c)}{=}e^{\gamma H_2(\max\{\alpha_2,\frac{1}{2}\})}\Big(\frac{\gamma d}{T}\Big)^{\alpha_2\gamma}(1+o(1)),
\end{align}
where (a) uses the fact that $H_2(p)$ is increasing for $p\leq\frac{1}{2}$ and decreasing for $p\geq\frac{1}{2}$, (b) uses the sum to infinity of a geometric series, and (c) uses the fact that $\frac{\gamma d}{T}\in o(1)$. By the union bound, we have the following:
\begin{align}
    \mathbb{P}\bigg(\bigcup_{i=1}^d\{C_i\geq\alpha_2\gamma\}\bigg)\leq de^{\gamma H_2(\max\{\alpha_2,\frac{1}{2}\})}\Big(\frac{\gamma d}{T}\Big)^{\alpha_2\gamma}(1+o(1)). \label{eq:error_prob_2}
\end{align}
For \eqref{eq:error_prob_2} to approach zero, we consider the following condition for $T$, where $\beta_n$ is a slowly decaying term as $n\rightarrow\infty$:
\begin{align}
    e^{\gamma H_2(\max\{\alpha_2,\frac{1}{2}\})}\Big(\frac{\gamma d}{T}\Big)^{\alpha_2\gamma} &\leq\frac{\beta_n}{d} ,
    \label{eq:error_bound_term2}
\end{align}
which simplifies to
\begin{align}
    T&\geq\gamma de^{\frac{1}{\alpha_2}H_2(\max\{\alpha_2,\frac{1}{2}\})}\Big(\frac{d}{\beta_n}\Big)^{\frac{1}{\alpha_2\gamma}}. \label{eq:DD_tests_1}
\end{align}

Now, we study the probability of defective item $i$ not being in $\widehat{\mathcal{D}}$. We will first condition on the event that for any defective item $i$, the number of collisions between defective item $i$ and $\mathcal{D}\setminus\{i\}$ is not too high (to be formalized later). After conditioning, we consider the event where every test that includes defective item $i$, and no other defective item, contains at least one item from $ \mathcal{PD} \setminus\mathcal{D}$. This is equivalent to the event that defective item $i \notin \widehat{\mathcal{D}}$. We derive a bound on $T$ such that the probability of this event is vanishing. % We start by claiming that $T\in\Omega\big(\gamma d\big(\frac{n}{d}\big)^{1/\gamma}d^{1/\gamma^2}\big)$.

We condition on the following events:
\begin{enumerate}
    \item $\bigcap_{i=1}^d\{C_i<\alpha_2\gamma\}$. This occurs with high probability since we already ensured $\mathbb{P}\big( \bigcup_{i=1}^d\{C_i\geq\alpha_2\gamma\}\big) \rightarrow0$. % Thus, we condition on $\neg[\bigcup_{i=1}^d\{C_i\geq\alpha_2\gamma\}]\equiv\bigcap_{i=1}^d\{C_i<\alpha_2\gamma\}$ (by De Morgan's law).
    \item $W^{(\mathcal{D})}=\gamma d(1-\delta^{-}_n)$, where $\delta^{-}_n \in [\delta_n-(\gamma d)^{-1/3},\delta_n+(\gamma d)^{-1/3}]$ in accordance with Lemma \ref{lem:concentration_pos_tests}, and hence $\delta^{-}_n \in O\big(\frac{\gamma d}{T} + (\gamma d)^{-1/3} \big)$.  By the choice of $T$ in \eqref{eq:T_long}, this scaling on $\delta^{-}_n$ can be simplified to $\delta^{-}_n\in O\big(\frac{1}{(n/d)^{1/\gamma}d^{1/\gamma^2}}\big)=O\big(\frac{1}{n^{(1-\theta)/\gamma}}\big)$, as the $(\gamma d)^{-1/3}$ term is comparatively negligible.
\end{enumerate}
In addition, we condition on a fixed value of $G \le d$; this condition will be seen to hold with high probability once we choose $t$ in \eqref{eq:Bernstein_bound_on_G}.

 % For clarity, we repeat the event that we are interested in: given $\bigcap_{i=1}^d\{C_i<\alpha_2\gamma\}$ and $W^{(\mathcal{D})}=\gamma d(1-\delta^{-}_n)$, defective item $i$ is not in $\widehat{\mathcal{D}}$. The main steps to derive a bound on $T$ are as follows: we derive a bound on $G$ ensuring that the event is rare, which gives us our bound on $T$ by further applying the concentration results from the analysis of the first step.

We start by looking at a single defective item. Let $\tilde{\gamma}$ be the number of tests in which defective item $i$ is the only defective item. Since we conditioned on $\bigcap_{i=1}^d\{C_i<\alpha_2\gamma\}$, we have $\tilde{\gamma}\geq(1-\alpha_2)\gamma$. We want to find the probability that all $\tilde{\gamma}$ indices correspond to tests where at least one non-defective item in $\mathcal{PD}$ is also present. 
Without loss of generality, we assume that the $\tilde{\gamma}$ tests of interest are those labeled $1$ to $\tilde{\gamma}$. We let $A_i$ be the event that the positive test indexed by $i$ contains at least one non-defective item in $\mathcal{PD}$. 

To study the $A_i$ events, we first note that the non-defective test placements are independent of the defective ones, and recall that we condition on a fixed value of $G = |\mathcal{PD} \setminus \mathcal{D}|$ and a fixed number $\gamma d(1-\delta^-_n)$ of positive tests.  We consider the process of collecting $G\gamma$ ``coupons'' (placements into tests) corresponding to the non-defective items in $\mathcal{PD} \setminus \mathcal{D}$  Each coupon collected must correspond to a positive test, since otherwise the item would not be in $\mathcal{PD} \setminus \mathcal{D}$.  In addition, since the prior test placement distribution was uniform, the conditional distribution remains uniform, but is now only over the $\gamma d(1-\delta^-_n)$ positive tests.

Putting the above observations together, we consider a population of $\gamma d(1-\delta^-_n)$ coupons, the first $\tilde{\gamma}$ of which correspond to $i$ being the unique defective item.  We consider collecting $G\gamma$ coupons chosen uniformly with replacement. Then, the event $A_i$ is equivalent to the event that coupon $i$ is collected, yielding
\begin{align}
    \mathbb{P}(A_1,\dots,A_{\tilde{\gamma}})=\frac{\text{\#ways to collect coupons including all of the first $\tilde{\gamma}$ indices}}{\text{\#ways to collect all coupons}}, \label{eq:num_denom}
\end{align}
where the probability is implicitly conditioned on the events described above.

We will bound the numerator and denominator of \eqref{eq:num_denom} separately. For the denominator, we can think of listing the selected $G\gamma$ coupons in a vector, where each entry can be any of the $\gamma d(1-\delta^-_n)$ coupons from the population. This gives the following:
\begin{align}
    (\text{\#ways to collect all coupons})=(\gamma d)^{G\gamma}(1-\delta^-_n)^{G\gamma}.
\end{align}
For the numerator in \eqref{eq:num_denom}, we have
\begin{align}
    \parbox{15em}{(\centering\#ways to collect coupons\\including all of the first $\tilde{\gamma}$ indices)}% &\leq\bigg[\prod_{i=0}^{\tilde{\gamma}-1}(G\gamma-i)\bigg](\gamma d)^{G\gamma-\tilde{\gamma}}(1-\delta^-_n)^{G\gamma-\tilde{\gamma}}\\
    &\stackrel{(a)}{\leq} (G\gamma)^{\tilde{\gamma}}(\gamma d)^{G\gamma-\tilde{\gamma}}(1-\delta^-_n)^{G\gamma-\tilde{\gamma}} \label{eq:G_count}\\
    &=\Big(\frac{G}{d}\Big)^{\tilde{\gamma}}(\gamma d)^{G\gamma}(1-\delta^-_n)^{G\gamma-\tilde{\gamma}},
\end{align}
where (a) follows since each index in the set $\{1,2,\cdots,\tilde{\gamma}\}$ must occupy at least one position in the sequence of $G\gamma$ coupons; after assigning one such position to each index (in one of at most $(G\gamma)^{\tilde{\gamma}}$ ways), each of the remaining $G\gamma-\tilde{\gamma}$ positions can take any of the $\gamma d(1-\delta^-_n)$ indices. 

Combining the bounds on numerator and denominator, we have
\begin{align}
    \mathbb{P}(A_1,\dots,A_{\tilde{\gamma}}) &=\frac{\text{\#ways to collect coupons including all of the first $\tilde{\gamma}$ indices}}{\text{\#ways to collect all coupons}}\\
    &=\Big(\frac{G}{d}\Big)^{\tilde{\gamma}}(1-\delta^-_n)^{-\tilde{\gamma}}\\
    &\stackrel{(a)}{\leq}\Big(\frac{G}{d}\Big)^{(1-\alpha_2)\gamma}(1-\delta^-_n)^{-\tilde{\gamma}}\\
    &\stackrel{(b)}{=}\Big(\frac{G}{d}\Big)^{(1-\alpha_2)\gamma}(1+o(1)),
\end{align}
where (a) holds since $G\leq d$ and $(1-\alpha_2)\gamma\leq\tilde{\gamma}$, and (b) follows by recalling that $\delta^-_n\in O\big(\frac{1}{n^{(1-\theta)/\gamma}}\big)$ %, where the scaling regime is similar (differing only in the constant multiplicative factor of the power) to $O\big(\frac{1}{d^{1/\gamma}}\big)=O\big(\frac{1}{n^{\theta/\gamma}}\big)$, 
and applying Lemma \ref{lem:(1-o(1))^gamma->1}. 

According to the DD algorithm, the event where there exists a defective item not in $\widehat{\mathcal{D}}$ is equivalent to there existing a defective item where all its $\tilde{\gamma}$ indices are collected by the $G\gamma$ coupons. Applying union bound, the bound on the probability is as follows.
\begin{align}
    \mathbb{P}(\text{$\exists$ defective item not in $\widehat{\mathcal{D}}$})
    &\leq\mathbb{P}\bigg(\bigcup_{i=1}^d\{\text{all $\tilde{\gamma}$ indices of $i$ are collected}\}\bigg) \\
    &\leq d\Big(\frac{G}{d}\Big)^{(1-\alpha_2)\gamma}(1+o(1)). \label{eq:error_prob_4}
\end{align}
The bound approaches zero if $\big(\frac{G}{d}\big)^{(1-\alpha_2)\gamma}\leq\frac{\beta_n}{d}$ where $\beta_n$ is a slowly decaying term as $n\rightarrow\infty$. Rearranging, we obtain the following sufficient condition on $G$ to ensure that the right-hand side of \eqref{eq:error_prob_4} vanishes:
\begin{align}
    G\leq d\Big(\frac{\beta_n}{d}\Big)^{\frac{1}{(1-\alpha_2)\gamma}}. \label{eq:error_bound_term3}
\end{align}

\paragraph{Combining the two Steps}

We now combine the two steps to obtain our final bound on $T$, as well as studying the overall error probability. In accordance with \eqref{eq:error_bound_term3}, we define
\begin{align}
    G_{\max}=d\Big(\frac{\beta_n}{d}\Big)^{\frac{1}{(1-\alpha_2)\gamma}}.
\end{align}
Recall that $\E[G]=(n-d)\big(\frac{w^{(\mathcal{D})}}{T}\big)^{\gamma} \leq (n-d) \big(\frac{\gamma d}{T}\big)^\gamma$, and hence, $\mathbb{E}[G]\leq G_{\max}/2$ is guaranteed when
\begin{align}
    (n-d)\Big(\frac{\gamma d}{T}\Big)^{\gamma}\leq\frac{d}{2}\Big(\frac{\beta_n}{d}\Big)^{\frac{1}{(1-\alpha_2)\gamma}}, \label{eq:to_rearrange}
\end{align}
which we will shortly re-arrange to deduce a condition on $T$.
Then, setting $t=G_{\max}/2$ in our inequality in \eqref{eq:Bernstein_bound_on_G} gives
\begin{align}
    \mathbb{P}\Big(G>\mathbb{E}[G]+\frac{G_{\max}}{2}\Big)&\leq\exp\bigg(\frac{-\frac{1}{2}\big(\frac{G_{\max}}{2}\big)^2}{\mathbb{E}[G]+\frac{1}{3}\big(\frac{G_{\max}}{2}\big)}\bigg).
\end{align}
Applying that fact that $\mathbb{E}[G]\leq G_{\max}/2$, we get
\begin{align}
    \mathbb{P}(G>G_{\max})&\leq\exp\Big(-\frac{3}{16}G_{\max}\Big)
    =\exp\Big(-\frac{3d}{16}\bigg(\frac{\beta_n}{d}\Big)^{\frac{1}{(1-\alpha_2)\gamma}}\bigg),
    \label{eq:error_prob_1}
\end{align}
which approaches zero as long is $\beta_n$ does not decay too rapidly. By combining all the error probabilities in \eqref{eq:error_prob_1}, \eqref{eq:error_prob_2}, \eqref{eq:error_prob_4}, and Lemma \ref{lem:concentration_pos_tests} (with $\delta_n$ defined therein), we have
\begin{align}
    P_e
    &\leq\mathbb{P}(G>G_{\text{max}})+\mathbb{P}\bigg(\bigcup_{i=1}^d\{C_i\geq\alpha_2\gamma\}\bigg)
    +\mathbb{P}(\exists\text{ defective item not in $\widehat{\mathcal{D}}$}) \nonumber\\
    &\qquad+\mathbb{P}\Big(\delta^-_n\notin[\delta_n-(\gamma d)^{-1/3},\delta_n+(\gamma d)^{-1/3}]\Big) \\
    &\leq\exp\Big(-\frac{3d}{16}\bigg(\frac{\beta_n}{d}\Big)^{\frac{1}{(1-\alpha_2)\gamma}}\bigg)+de^{\gamma H_2(\max\{\alpha_2,\frac{1}{2}\})}\Big(\frac{\gamma d}{T}\Big)^{\alpha_2\gamma}(1+o(1)) \nonumber\\
    &\qquad+d\Big(\frac{G}{d}\Big)^{(1-\alpha_2)\gamma}(1+o(1))
    +2\exp(-2(\gamma d)^{1/3})\\
    &\stackrel{(a)}{\leq}\exp\Big(-\frac{3d}{16}\bigg(\frac{\beta_n}{d}\Big)^{\frac{1}{(1-\alpha_2)\gamma}}\bigg)
    +2\beta_n(1+o(1))
    +2\exp(-2(\gamma d)^{1/3}),
\end{align}
where (a) applies $d e^{\gamma H_2(\max\{\alpha_2,\frac{1}{2}\})}\big(\frac{\gamma d}{T}\big)^{\alpha_2\gamma}(1+o(1))\leq\beta_n(1+o(1))$ from \eqref{eq:error_bound_term2}, as well as $d\big(\frac{G}{d}\big)^{(1-\alpha_2)\gamma}(1+o(1))\leq\beta_n(1+o(1))$ from \eqref{eq:error_bound_term3}.

As for the number of tests, rearranging \eqref{eq:to_rearrange} gives
\begin{align}
    T& \geq 2^{1/\gamma}\gamma d\Big(\frac{n-d}{d}\Big)^{\frac{1}{\gamma}}\Big(\frac{d}{\beta_n}\Big)^{\frac{1}{(1-\alpha_2)\gamma^2}}.
    \label{eq:DD_tests_2}
\end{align}
Combining \eqref{eq:DD_tests_1} and \eqref{eq:DD_tests_2} gives
\begin{align}
    T&\geq\gamma d
    \max\bigg\{e^{\frac{1}{\alpha_2}H_2(\max\{\alpha_2,\frac{1}{2}\})}\Big(\frac{d}{\beta_n}\Big)^{\frac{1}{\alpha_2\gamma}}
    ,2^{1/\gamma}\Big(\frac{n-d}{d}\Big)^{\frac{1}{\gamma}}\Big(\frac{d}{\beta_n}\Big)^{\frac{1}{(1-\alpha_2)\gamma^2}}\bigg\},
\end{align}
which coincides with \eqref{eq:T_long} in the theorem statement.

\section{Conclusion} \label{ch:conclusion_and_future_work}

We have studied the problem of group testing with $\gamma$-divisible items (and, more briefly, $\rho$-sized tests).  In the adaptive setting, we characterized the optimal number of tests up to a multiplicative factor of $e^{1+o(1)}$ in broad scaling regimes, via both a strengthened counting-based converse and a novel adaptive splitting algorithm.  In the non-adaptive setting, we provided an algorithm-independent converse the near-constant tests-per-item design, and gave a strengthened achievability bound (essentially matching the converse in broad scaling regimes) via the DD algorithm.  An open challenge for future work would be to pursue optimal or near-optimal constant factors, as opposed to only optimality with respect to $\eta$ defined in \eqref{eq:eta}.

% use socreport.bst
%\bibliographystyle{socreport}
\bibliographystyle{IEEEtran}

% use references.bib

\bibliography{JS_References,references}

% \appendix

% \section{Proof}
% In this appendix, we present alternate, longer, but more interesting proof 
% of correctness of our algorithm.  This proof is based on induction and proof
% by contradiction.
\end{document}